\definecolor{bgred}{RGB}{240,192,208}
\definecolor{bggreen}{RGB}{208,255,192}
\definecolor{bgblue}{RGB}{202,218,250}
\definecolor{myred}{cmyk}{0.1, 0.75, .9, 0}
\definecolor{myblue}{cmyk}{.95, 0.45, 0, 0}
\definecolor{mygreen}{cmyk}{.8, 0.2, .8, 0}
\newcommand\pig[1]{\scalerel*[5pt]{\big#1}{%
  \ensurestackMath{\addstackgap[1.5pt]{\big#1}}}}
\tikzset{
  dot/.style = {circle, fill, minimum size=#1,inner sep=0pt, outer sep=0pt},
  dot/.default = 3pt,
}
\theoremstyle{definition}
\newtheorem{thm}{Theorem}
\newtheorem{df}{Definition}
\newtheorem{prop}{Proposition}
\newtheorem{lm}{Lemma}
\newtheorem{ex}{Example}
\newtheorem{coro}{Corollary}
\newtheorem{remark}{Remark}
\newcommand{\cF}{\mathcal{F}}
\newcommand{\cJ}{\mathcal{J}}
\newcommand{\cL}{\mathcal{L}}
\newcommand{\C}{\mathcal{C}}
\newcommand{\Rbar}{\underline{\mathbb{R}}}
\newcommand{\vn}{\varnothing}
\renewcommand{\mid}{:}
\DeclareMathOperator*{\argmin}{arg\,min}
\DeclareMathOperator*{\argmax}{arg\,max}
\newcolumntype{C}{>{$}c<{$}}
\title{Properties of Path-Independent Choice Correspondences and\\ 
Their Applications to Efficient and Stable Matchings}
\author{Keisuke Bando, Kenzo Imamura, Yasushi Kawase}
\author[1]{Keisuke Bando}
\affil[1]{Keio University, Japan}
\author[2]{Kenzo Imamura}
\author[2]{Yasushi Kawase}
\affil[2]{University of Tokyo, Japan}
\date{}
\begin{document}

\maketitle

\begin{abstract}
Choice correspondences are central to decision-making processes, particularly when indifferences (ties) arise. While tie-breaking can transform a choice correspondence into a choice function, it often introduces inefficiencies, highlighting the importance of directly addressing choice correspondences. 
This paper introduces a novel notion of path‐independence (PI) for choice correspondences, extending the existing concept of PI for choice functions. 
Intuitively, a choice correspondence is PI if any consistent tie‐breaking produces a PI choice function.
This new notion yields several important properties. 
First, PI choice correspondences are \emph{rationalizabile}, meaning they can be represented as the maximization of a utility function. This extends a core feature of PI in choice functions.
Second, we demonstrate that the set of choices selected by a PI choice correspondence for any subset forms a \emph{generalized matroid}. 
This property reveals that PI choice correspondences exhibit a nice structural property.
Third, we establish that choice correspondences rationalized by \emph{ordinally concave} functions inherently satisfy the PI condition. This aligns with recent findings that a choice function satisfies PI if and only if it can be rationalized by an ordinally concave function.
Building on these theoretical foundations, we explore stable and efficient matchings under PI choice correspondences. Specifically, we investigate \emph{constrained efficient} matchings, which are efficient (for one side of the market) within the set of stable matchings. Under responsive choice correspondences, such matchings are characterized by cycles.
However, this cycle-based characterization fails in more general settings.
We demonstrate that when the choice correspondence of each school satisfies both PI and monotonicity conditions, a similar cycle-based characterization is restored. 
These findings provide new insights into the matching theory and its practical applications.
\end{abstract}

\section{Introduction}
Choice is a fundamental concept in economics, central to understanding and modeling the behavior of economic agents. It plays a key role in both theoretical and empirical analysis. The combinatorial choice problem, which involves selecting a subset of elements from a set, has diverse applications in market design, particularly in matching theory. For example, it arises when a school decides which applicants to admit or when a firm selects a group of workers to hire. Such choice behaviors are typically formalized using choice functions, which identify the optimal subset from a given set of available options. 

In practice, however, choice behavior is often better represented by a \emph{choice correspondence} rather than a choice function. A choice correspondence allows multiple subsets to be selected from the same set of available options. This scenario frequently occurs in settings where ties or indifference exist.
For example, a school’s priority ranking may include ties (e.g., applicants with identical test scores), or affirmative action policies may account for distributions across applicant types (e.g., gender, race, socioeconomic status). Similarly, firms may only partially observe worker attributes (e.g., education level, major, job-related skills), making it impossible to differentiate among candidates with identical observable characteristics. While tie-breaking can convert a choice correspondence into a choice function, this process often introduces inefficiencies or may compromise desirable properties. Consequently, directly addressing choice correspondences is crucial for preserving these properties and ensuring robust analysis.

Our focus is on the properties of choice correspondences, with particular attention to \emph{path-independence (PI)}. PI is a fundamental concept for choice functions as it ensures consistency in selection. Formally, a choice function is PI if the chosen subset from any set remains the same regardless of whether the selection is made in one step or in multiple stages---first by partitioning the set into smaller subsets, selecting from each, and then choosing again from the union of those selections. PI choice functions offer several advantages: PI guarantees rationalizability and is equivalent to being rationalizable by an \emph{ordinal concave} utility function, which is a natural form of discrete concavity. In matching theory, PI choice functions ensure the existence of stable matchings, a central concept in the field.

In this paper, we extend the concept of PI to apply to choice correspondences in a natural way. 
For choice functions, PI is known to be equivalent to the conjunction of \emph{substitutability} and \emph{irrelevance of rejected contracts (IRC)}~\citep{aizerman1981general}. \citet{sotomayor1999three} proposed extensions of substitutability and IRC for choice correspondences.  
Consequently, one possible extension of PI for choice correspondences is to consider the conjunction of these extended notions of substitutability and IRC. 
However, this approach has certain limitations.
To address these points, we introduce a novel notion of PI for choice correspondences. Our definition generalizes the classical notion of PI from choice functions while strengthening the conjunction of substitutability and IRC for choice correspondences. Intuitively, a choice correspondence satisfies PI if any consistent tie-breaking results in a PI choice function.
In the following paragraphs, we explain the advantages of our proposed notion of PI over substitutability and IRC for choice correspondences.

First, our PI guarantees rationalizability (\Cref{thm:rationalize}): any PI choice correspondence can be represented as the maximization of some utility function. This result extends a key feature of PI choice functions. While it is established that PI choice functions are rationalizable~\citep{yang2020rationalizable}, we further demonstrate that our PI choice correspondences are likewise rationalizable. Rationalization is essential in economics because agents are typically assumed to make decisions consistent with a utility function or preference relation. Therefore, identifying the rationalization underlying observed choice behavior provides the intellectual foundation for nearly all economic analyses.\footnote{See, for example, Chapter~1 of the textbook by  \citet{mas1995microeconomic}.}
Theoretically, we demonstrate rationalizability by extending a relationship between PI choice functions and closure operators~\citep{johnson1996algebraic,koshevoy1999choice} to choice correspondences. 
In contrast, the conjunction of substitutability and IRC for choice correspondences does not necessarily ensure rationalizability (see \Cref{ex:PIunion}).

Second, any PI choice correspondence exhibits a desirable combinatorial structure. Specifically, we prove that for any subset of alternatives, the set of chosen outcomes selected by a PI choice correspondence forms a \emph{generalized matroid (g-matroid)} (\Cref{thm:gmatroid}). This structure provides several benefits. Notably, the outcome of a PI choice correspondence, including tie-breaking, can be computed efficiently (\Cref{prop:PILADcomp} and \Cref{thm:computeC}).
In contrast, for choice correspondences  satisfying the conjunction of substitutability and IRC, the set of chosen outcomes does not necessarily exhibit a matroidal structure, and the outcome including time-breaking cannot be computed efficiently (\Cref{rem:exp}).

Third, we show that any choice correspondence rationalized by an ordinally concave function satisfies PI (\Cref{thm:oc-PI}). This finding aligns with recent results indicating that a choice function is PI if and only if it can be rationalized by an ordinally concave function~\citep{yokote2024rationalizing}. Thus, a PI choice correspondence not only admits a rationalization but is also closely linked to this natural form of discrete concavity. This property is particularly useful in applications. To verify that a given choice correspondence satisfies our conditions, it suffices to show that it can be rationalized by a discrete concave function within this class.  Various techniques developed in the field of discrete convex analysis facilitate this verification; see \Cref{sec:application} for detailed discussion.

Building on these theoretical foundations, we examine stable and efficient matchings under PI choice correspondences. In particular, we focus on \emph{constrained efficient matchings}---those that are efficient within the set of stable matchings. 
The conjunction of substitutability and IRC in choice correspondences guarantees the existence of stable matchings~\citep{che2019weak}. Since our PI condition is stronger than these, it also guarantees existence. Specifically, a stable matching can be obtained by applying the \emph{deferred acceptance algorithm} to choice functions derived via tie-breaking. 
However, while this method guarantees stability, it may not yield an efficient stable matching (see \Cref{ex:dominated}). Under the standard responsive choice correspondences, \citet{erdil2008s} characterized constrained efficient matchings using cycles and provided an algorithm to find them.
Nevertheless, this cycle-based characterization does not necessarily hold in more general settings, such as acceptant and substitutable correspondences~\citep{erdil2019efficiency}. We demonstrate that when each school’s choice correspondence satisfies both PI and a monotonicity condition, the cycle-based characterization is restored (\Cref{thm:main}). This monotonicity extends an important condition in matching theory, known as \emph{the law of aggregate demand (LAD)}, to choice correspondences. Formally, a choice correspondence satisfies LAD if any consistent tie-breaking results in a choice function that satisfies LAD.

Our characterization applies to real-life matching markets, particularly those with diversity concerns. Many choice functions---such as \emph{quotas} \citep{abdulkadirouglu2003school}, \emph{reserves} \citep{hafalir2013effective,ehlers2014school}, and \emph{overlapping reserves} \citep{sonmez2022affirmative}---have been proposed. While most of these models assume strict priorities, our framework accommodates weak priorities (i.e., ties) and provides cycle-based characterizations of constrained efficient matchings. This ensures that a constrained efficient matching can be found in polynomial time in each scenario. Additionally, \citet{erdil2019efficiency} introduced a choice correspondence based on reserves and provided a similar characterization of constrained efficient matchings. Our results offer a structural understanding of their approach: their choice correspondence satisfies both PI and LAD.

\subsection{Related Work}

PI for choice functions was first introduced by \citet{plott1973path}. Under PI choice functions, it is known that stable matchings exist \citep{roth1984stability,aygun2013matching, blair1988}. Recently, \citet{yokote2024rationalizing} showed that PI choice functions can be characterized through the rationalization of ordinal concavity. If a choice function is rationalized by an M${}^\natural$-concave function, it satisfies both PI and LAD~\citep{FujishigeTamura2006,MurotaYokoi2015}. Furthermore, any choice correspondence that can be rationalized by these functions also satisfies PI and LAD.

\citet{johnson1996algebraic} and \citet{koshevoy1999choice} pointed out the relationship between PI choice functions and (finite) \emph{convex geometries}.
A convex geometry is a combinatorial structure that generalizes the concept of convexity in Euclidean spaces to more abstract settings. 
Formally, a convex geometry consists of a finite set paired with a closure operator that satisfies the anti-exchange property~\citep{edelman1985convex}. This structure induces a PI choice function through the \emph{extreme operator}. For further details, see Chapter~5 in the book by \citet{GratzerWehrung2016}. 
We generalize some of the results obtained in \citet{johnson1996algebraic} and \citet{koshevoy1999choice} for PI choice functions to PI choice correspondences. In particular,  we utilize these results to prove the rationalizability of a PI choice correspondence.

Several studies have been conducted on efficient and stable matchings under choice correspondences.
\citet{erdil2008s} characterized constrained efficient matchings under responsive choice correspondences, showing that a stable matching is constrained efficient if and only if it does not admit any stable improvement cycle.
\citet{erdil2019efficiency} and \citet{erdil2022corrigendum} analyzed constrained efficient matchings under acceptant and substitutable correspondences. In that setting, constrained efficient matchings may admit cycles (PSIC), making them difficult to fully characterize.
Under our conditions, however, we can generalize the characterization provided by \citet{erdil2008s}.
Our PI strengthens substitutability, whereas our LAD is weaker than acceptance. Therefore, a straightforward comparison with \citet{erdil2019efficiency} is not possible.

Our study also relates to the efficient allocation of indivisible goods under constraints. Specifically, a constraint on allocations can be represented by a choice correspondence that returns all feasible subsets.
\citet{suzuki2018efficient,STYYZ2023} generalized the top trading cycles mechanism that preserves Pareto efficiency, individual rationality, and group strategy-proofness for any distributional constraint representable by an M-convex set on the vector of the number of students assigned to each school, given an initial endowment.
\citet{IK2024,IK2024b} provided necessary and sufficient conditions for constraints to guarantee the existence of desired mechanisms.

Finally, our results have important implications for real-life applications in market design. In matching with diversity concerns, various choice functions---such as quotas \citep{abdulkadirouglu2003school}, reserves \citep{hafalir2013effective,ehlers2014school} and overlapping reserves \citep{sonmez2022affirmative}---have been proposed, typically assuming strict priorities. Using our framework, we can accommodate weak priorities (i.e., ties). In each case, we provide a characterization of constrained efficient matchings in terms of cycles (PSIC).

\subsection{Organization of the Paper}
The remainder of this paper is organized as follows: \Cref{sec:preliminaries} provides the necessary definitions used throughout this study.
\Cref{sec:PIchoice} introduces PI choice correspondences and examines their key properties.
\Cref{sec:constrainedPE} applies these theoretical foundations to explore stable and efficient matchings under PI choice correspondences, offering a characterization of constrained efficient matchings.
The applicability of our model to practical settings is discussed in \Cref{sec:application}.
Finally, \Cref{sec:conclusion} concludes with a discussion of our findings and suggestions for future research directions. 

\section{Preliminaries}\label{sec:preliminaries}

We denote the set of real numbers by $\mathbb{R}$, and the set of all positive real numbers by $\mathbb{R}_{++}$.
Additionally, we write $\Rbar$ to denote $\mathbb{R}\cup\{-\infty\}$.
Let $\mathbb{Z}_+$ represent the set of nonnegative integers.
For a set $X$ and an element $i$, we define $X + i = X \cup \{i\}$ and $X - i = X \setminus \{i\}$. Additionally, we define $X + \emptyset = X$ and $X - \emptyset = X$.

Let $I=\{i_1,i_2,\dots,i_n\}$ be a finite set of elements (students).
A \emph{choice function} $C\colon 2^I\to 2^I$ is a function that satisfies $C(X)\subseteq X$ for all $X\in 2^I$.
Here, $C(X)$ is interpreted as the most preferred subset from an available set $X$.
Such a set is not uniquely determined when preferences involve ties or indifferences.
In this paper, we focus on 
a \emph{choice correspondence}, which is a set-valued function $\C\colon 2^I\rightrightarrows 2^I$ that assigns to each $X\in 2^I$ a collection $\C(X)\subseteq 2^X$.
For each $X,Y\in 2^I$, assume that it is possible to determine whether $Y \in \C(X)$ in constant time.\footnote{For choice functions, it is standard to assume that one is provided with a choice oracle which directly returns $C(X)$ for any query $X\in 2^I$. However, for choice correspondences, $\C(X)$ may potentially have an exponential number of sets. Thus, we assume only the availability of a membership oracle. In \Cref{sec:computation}, we present a reduction from a membership oracle to a choice oracle for PI choice functions.}

In the following, we introduce important concepts used in this paper.

\subsection{Matroids and Generalized Matroids}
We will observe that PI choice correspondences exhibit a remarkable connection to the combinatorial structures of matroids and g-matroids.
Therefore, we start by introducing these two concepts.

A nonempty family of subsets $\cF\subseteq 2^I$ is a \emph{matroid} if, 
(i) $X\subseteq Y\in\cF$ implies $X\in\cF$, and 
(ii) for any $X,Y\in\cF$ with $|X|<|Y|$, there is $e\in Y\setminus X$ such that $X+e\in\cF$. 
A simple example of a matroid is the family $\{I'\in 2^I\mid |I'|\le q\}$, where $q$ is a nonnegative integer. This is known as a \emph{uniform matroid} of rank $q$.
A family $\cL\subseteq 2^I$ is called a \emph{laminar} if, for any $X,Y\in\cL$, we have $X\cap Y=\emptyset$, $X\subseteq Y$, or $X\supseteq Y$. For a laminar family $\cL\subseteq 2^I$ and $q\colon \cL\to\mathbb{Z}_{+}$, the family $\{I'\in 2^I\mid |I'\cap L|\le q_L~(\forall L\in\cL)\}$ is a matroid, which is called a \emph{laminar matroid}.
For a bipartite graph $G=(I,J; E)$, the family \(\{I'\in 2^I\mid \text{there exists a matching in $G$ that covers $I'$}\}\) is a matroid, known as a \emph{transversal matroid}.

A nonempty family of subsets $\cF\subseteq 2^I$ is called a \emph{generalized matroid (g-matroid)}\footnotemark{} if, for any $X,Y\in\cF$ and $e\in X\setminus Y$, there is $e'\in (Y\setminus X)\cup\{\emptyset\}$ such that $X-e+e'$ and $Y+e-e'$ are in $\cF$~\citep{Tardos1985}. 
\footnotetext{A g-matroid is also referred to as an \emph{M${}^\natural$-convex family} because the corresponding set of $0$--$1$ vectors is an M${}^\natural$-convex set as a subset of $\mathbb{Z}^I$~\citep{murota2016}.}
Alternatively, g-matroid can be characterized by another property~\citep{MS1999}: for any $X,Y\in\cF$ and $e\in X\setminus Y$, it holds that
(i) $X-e+e'\in\cF$ for some $e'\in (Y\setminus X)\cup\{\emptyset\}$, and 
(ii) $Y+e-e'\in\cF$ for some $e'\in (Y\setminus X)\cup\{\emptyset\}$.

One of the properties of a g-matroid $\cF$ is that any set $X\in\cF$ that is not maximum size can have an element added to it to form another set in the g-matroid. This property will be used later.
\begin{prop}\label{prop:gmatroid_nonmaximal}
For any g-matroid $\cF\subseteq 2^I$ and $X\in\cF$, if $|X|<\max\{|Y|\mid Y\in\cF\}$, then there is an element $i\in I$ such that $X+i\in\cF$.
\end{prop}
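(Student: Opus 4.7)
The plan is to pick a maximum-size member $Y^* \in \cF$ that is "closest" to $X$, in the sense of minimizing $|X \setminus Y^*|$, and then show that $X \subseteq Y^*$ so that any element of $Y^* \setminus X$ can serve as the desired $i$.

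First I would show the reduction step. Suppose for contradiction that $X \setminus Y^* \neq \emptyset$, and pick some $e \in X \setminus Y^*$. Apply the g-matroid exchange property (the first definition) to $X$, $Y^*$, and $e$ to obtain $e' \in (Y^* \setminus X) \cup \{\emptyset\}$ with both $X - e + e' \in \cF$ and $\tilde{Y} := Y^* + e - e' \in \cF$. The key observation is that $e' = \emptyset$ is impossible: it would give $Y^* + e \in \cF$ with $|Y^* + e| = |Y^*| + 1$, contradicting maximality of $Y^*$. Hence $e' \in Y^* \setminus X$, so $|\tilde{Y}| = |Y^*|$ (still maximum), while
\[
X \setminus \tilde{Y} \;=\; (X \setminus Y^*) \setminus \{e\},
\]
which is strictly smaller than $X \setminus Y^*$, contradicting the minimality of $|X \setminus Y^*|$.

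Next, I would conclude from $X \subseteq Y^*$ and $|X| < |Y^*|$ that $Y^* \setminus X \neq \emptyset$. Pick any $i \in Y^* \setminus X$ and apply the g-matroid exchange property now to $Y^*$, $X$, and $i \in Y^* \setminus X$: there exists $e' \in (X \setminus Y^*) \cup \{\emptyset\}$ with $X + i - e' \in \cF$. Since $X \subseteq Y^*$ forces $X \setminus Y^* = \emptyset$, we must have $e' = \emptyset$, yielding $X + i \in \cF$, as required.

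The proof is essentially a "double-swap" argument, and the main (and only) subtle point is making sure the induction-on-symmetric-difference step works: the obstruction would be if every exchange that removes an element of $X \setminus Y^*$ from $X$ simultaneously forced $Y^*$ to grow beyond its maximum size, but the exchange property explicitly allows $e' = \emptyset$ as a choice, and that is exactly the case that maximality rules out. So the potential difficulty is immediately resolved by invoking maximality of $|Y^*|$, and no further structural lemma about g-matroids is needed.
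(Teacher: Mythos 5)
Your proof is correct, and it is essentially the same extremal-plus-exchange argument as the paper's: both pick a member of $\cF$ larger than $X$ that is as close to $X$ as possible and use the exchange axiom to rule out the obstructing case. The only difference is cosmetic --- the paper takes $Y^*\in\argmin\{|X\bigtriangleup Y|\mid Y\in\cF,\ |Y|>|X|\}$ and finishes with a single exchange (forcing $e'=\emptyset$ and $|Y^*|=|X|+1$ at once), whereas you split the argument into first reaching a maximum-size superset of $X$ and then augmenting, which costs one extra exchange but is equally valid.
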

\begin{proof}
Let $Y^*\in\argmin\{|X\bigtriangleup Y|\mid Y\in\cF,\ |Y|>|X|\}$.
Since $|Y^*| > |X|$, there must be some element $e \in Y^* \setminus X$. 
Then, by the definition of g-matroid, there is $e'\in (X\setminus Y^*)\cup\{\emptyset\}$ such that $Y^*-e+e'\in\cF$.
As $|X\bigtriangleup(Y^*-e+e')|<|X\bigtriangleup Y^*|$, it follows that $e'=\emptyset$ and $|Y^*-e|=|X|$.
Consequently, $X+e=Y^*\in\cF$.
\end{proof}

\subsection{Properties of Choice Functions}
A choice function $C$ is called \emph{path-independent (PI)} if it satisfies $C(X\cup X')=C(C(X)\cup X')$ for any $X,X'\subseteq I$.
A choice function $C$ is called \emph{substitutable (SUB)} if $C(X)\cap X'\subseteq C(X')$ for any $X'\subseteq X\subseteq I$. Additionally, a choice function $C$ satisfies \emph{irrelevance of rejected contracts (IRC)} if $C(X')=C(X)$ for any $X,X'\subseteq I$ with $C(X)\subseteq X'\subseteq X$.
It is known that a choice function satisfies PI if and only if it satisfies both SUB and IRC~\citep{aizerman1981general}.
A choice function $C$ satisfies \emph{law of aggregate demand (LAD)}\footnotemark{} if $X'\subseteq X\subseteq I$ implies $|C(X')|\le |C(X)|$.
A choice function $C$ is called \emph{acceptant} if there exists a nonnegative integer $q$ such that $|C(X)|=\min\{|X|,\,q\}$ for every $X\in 2^I$. Clearly, each acceptant choice function satisfies LAD.
\footnotetext{This notion is also referred to as \emph{cardinal monotonicity}~\citep{Alkan2002} or \emph{size monotonicity}~\citep{AlkanGale2003}.}

A choice function $C$ is called \emph{rationalizable} if there is a utility function $u\colon 2^I \to \Rbar$ such that $\{C(X)\}=\argmax\{u(Y) \mid Y \subseteq X\}$ for any $X\in 2^I$.
Throughout this paper, we consider only utility functions $u\colon 2^I\to\Rbar$ that satisfy $u(\emptyset)=0$.
This ensures that $\argmax\{u(Y)\mid Y\subseteq X\}\ne\emptyset$ for any $X\in 2^I$.
We say that a utility function $u$ is \emph{unique-selecting} if $\argmax\{u(Y)\mid Y\subseteq X\}$ is a singleton for all $X\in 2^I$.
A utility function $u$ induces a choice function only if it is unique-selecting.
It is known that a choice function is rationalizable if and only if it satisfies the \emph{strong axiom of revealed preference (SARP)}~\citep{yang2020rationalizable}.

Next, we provide important classes of utility functions.
A utility function $u$ is said to be \emph{M${}^\natural$-concave} if, for any $X, X' \subseteq I$ and $i \in X \setminus X'$, there exists $j \in (X' \setminus X) \cup \{\emptyset\}$ such that $$u(X)+u(X')\le u(X-i+j)+u(X'+i-j).$$
We say that a utility function $u$ is associated with a \emph{weighted matroid} if it can be expressed as 
\begin{align}
u(X)=\begin{cases}
v(X)    & \text{if }X\in\cF,\\
-\infty & \text{if }X\notin\cF,
\end{cases}
\end{align}
where $v$ is an additive function (i.e., $v(X)=\sum_{i\in X}v(\{i\})~(\forall X\in 2^I)$) and $\cF$ is a matroid.
It is not difficult to verify that every function associated with a weighted matroid is M${}^\natural$-concave.
A utility function $u$ is called \emph{laminar concave} if it can be expressed as
\begin{align}
u(X)=\sum_{L\in\cL} \varphi_L(|X\cap L|),
\end{align}
where $\cL\subseteq 2^I$ is a laminar family and $\varphi_L$ is a univariate concave function for each $L\in\cL$.
Every laminar concave function is known to be M${}^\natural$-concave~\cite[Note 6.11]{murota2003}.

A utility function $u$ is called \emph{ordinal concave}\footnotemark{} if, for any $X, X' \subseteq I$ and $i \in X \setminus X'$, there exists $j \in (X' \setminus X) \cup \{\emptyset\}$ such that:
(i) $u(X) < u(X - i + j)$, 
(ii) $u(X') < u(X' + i - j)$, or
(iii) $u(X) = u(X - i + j)$ and $u(X') = u(X' + i - j)$.
\footnotetext{
The notion of ordinal concavity is equivalent to \emph{semistrictly quasi M${}^\natural$-concavity}. For more details, see the literature~\citep{murota2003quasi,murota2003,FarooqShioura2004,ChenLi2021,fujishige2024note}.}
A utility function $u$ satisfies \emph{size-restricted concavity} if, for any $X, X' \subseteq I$ with $|X|>|X'|$, there exists $i \in X \setminus X'$ such that:
(i) $u(X) < u(X - i)$, 
(ii) $u(X') < u(X' + i)$, or
(iii) $u(X) = u(X - i)$ and $u(X') = u(X' + i)$.

\citet{yokote2024rationalizing} characterized PI and LAD through the concepts of ordinal concavity and size-restricted concavity.
\begin{thm}[\citet{yokote2024rationalizing}]\label{thm:fPI}
A choice function is PI if and only if it is rationalizable by a utility function satisfying ordinal concavity.
Furthermore, a choice function is PI and LAD if and only if it is rationalizable by a utility function satisfying ordinal concavity and size-restricted concavity.
\end{thm}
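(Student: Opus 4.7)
The proof consists of two equivalences, and in each case the ``if'' direction is a direct verification while the ``only if'' direction requires constructing a rationalizing utility. My plan is to dispatch the sufficiency arguments mechanically and then to build the ordinally concave $u$ using the combinatorial structure that PI imposes on the family of chosen sets.

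For the ``if'' direction of the first equivalence, assume $u$ is ordinally concave and unique-selecting with $C(X)$ equal to the unique $\argmax$ of $u$ on $2^X$; I would verify SUB and IRC, which together give PI by Aizerman--Malishevski. For SUB, suppose $X' \subseteq X$ and $i \in C(X) \cap X'$ but $i \notin C(X')$; writing $S := C(X)$ and $T := C(X')$, apply ordinal concavity to $S, T$ at $i \in S \setminus T$ to obtain some $j \in (T \setminus S) \cup \{\emptyset\}$. Because $T + i - j \subseteq X'$ and $S - i + j \subseteq X$, cases (i) and (ii) immediately contradict the maximality of $S$ on $X$ and $T$ on $X'$, while case (iii) forces $u(S) = u(S - i + j)$ with $S - i + j \neq S$ on the same ground set, contradicting unique-selecting. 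IRC drops out of unique-selecting: whenever $C(X) \subseteq X' \subseteq X$, both $C(X)$ and $C(X')$ are feasible in $X$ and in $X'$, so they must realize equal maxima, forcing $C(X) = C(X')$. The LAD half is analogous: if $X' \subseteq X$ but $|C(X')| > |C(X)|$, apply size-restricted concavity at some $i \in C(X') \setminus C(X) \subseteq X$ and use $T - i \subseteq X'$ and $S + i \subseteq X$ to derive the same three contradictions.

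The harder direction is constructing $u$ from a PI choice function $C$. My plan is a two-layer construction centered on the family $\cF := \{C(X) \mid X \subseteq I\}$ of chosen sets. Under PI, $\cF$ is exactly the family of fixed points of the induced closure-like operator $X \mapsto C(X)$ and, via the correspondence of \citet{koshevoy1999choice} between PI choice functions and convex geometries, carries an anti-exchange structure that implies a strong exchange property. The primary layer sets $u(Y) := \varphi(|Y|)$ for $Y \in \cF$, where $\varphi$ is strictly increasing in the PI-only case and strictly concave in the PI-plus-LAD case, and $u(Y) := -\infty$ for $Y \notin \cF$. A secondary generic perturbation $\sum_{i \in Y} \varepsilon_i$ with algebraically independent $\varepsilon_i > 0$, combined with an idea from the Aizerman--Malishevski representation of $C$ as a union of priority maximizers, would break ties so that the unique $\argmax$ of $u$ on $2^X$ is $C(X)$.

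The main obstacle will be verifying ordinal concavity of the constructed $u$. The nontrivial case is when both $X$ and $X'$ lie in $\cF$ (otherwise the $-\infty$ values make the conditions trivial), and there the required swap $j$ must come from an exchange property within $\cF$: for $X, X' \in \cF$ and $i \in X \setminus X'$, some $j \in (X' \setminus X) \cup \{\emptyset\}$ with $X - i + j, X' + i - j \in \cF$. I would derive this exchange by translating the anti-exchange axiom of the associated convex geometry into a swap statement on the complements, analogously to how g-matroid exchange is extracted from matroid structure in the correspondence-level \Cref{thm:gmatroid}. Checking that the perturbation layer preserves ordinal concavity would rely on the strict inequalities in cases (i) and (ii) being robust under small perturbations, while the equalities in case (iii) become strict in a consistent direction dictated by the $\varepsilon_i$'s. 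For the second equivalence, the strict concavity of $\varphi$ together with the uniform-size behavior granted by LAD should deliver size-restricted concavity, since any cardinality-decreasing swap within $\cF$ is rewarded by concavity of $\varphi$ in exactly the form the axiom requires.
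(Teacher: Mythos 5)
First, note that the paper does not prove this statement: it is quoted verbatim from \citet{yokote2024rationalizing}, so there is no in-paper proof to compare against. Judged on its own terms, your ``if'' directions are correct and standard: the SUB argument via the three cases of ordinal concavity (with case (iii) killed by the singleton-argmax requirement built into the paper's definition of rationalizability), IRC from rationalizability alone, and the analogous LAD argument from size-restricted concavity are all sound.

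The ``only if'' construction, however, has a concrete flaw. Setting $u(Y)=\varphi(|Y|)$ for $Y\in\cF:=\{C(X)\mid X\subseteq I\}$ with $\varphi$ strictly increasing cannot rationalize a PI choice function that violates LAD. Take the choice function $C_2^w$ obtained from $\C_2$ in \Cref{tab:exChoice} with $w(a)>w(b)+w(c)$: it is PI, yet $C_2^w(\{a,b,c\})=\{a\}$ while $\{b,c\}=C_2^w(\{b,c\})\in\cF$ is a larger member of $\cF$ contained in $\{a,b,c\}$, so your primary layer strictly prefers $\{b,c\}$ and no subordinate perturbation can repair this. The same example defeats the exchange property you hope to extract: with $X=\{b,c\}$, $X'=\{a\}$, $i=b$, neither $j=a$ nor $j=\emptyset$ keeps both swapped sets inside $\cF$ (neither $\{a,c\}$ nor $\{a,b\}$ is ever chosen), so assigning $-\infty$ off $\cF$ makes all three cases of ordinal concavity fail for this pair --- the family of chosen sets of a PI choice function is \emph{not} globally a g-matroid (\Cref{thm:gmatroid} is a statement about each $\C(X)$ separately, not about $\bigcup_X\C(X)$). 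The known construction instead assigns finite values everywhere and grades a set $Y$ by the size of its closure $\psi(Y)=\bigcup\{Z\mid C(Z)=C(Y)\}$ rather than by $|Y|$; this is exactly the device the paper itself uses (via $\tau$) in the proof of \Cref{thm:rationalize}, and it is the ingredient your sketch is missing.
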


It is known that a choice function associated with an M${}^\natural$-concave function is PI and LAD~\citep{MurotaYokoi2015,FujishigeTamura2006}.\footnotemark{}
This can also be justified by the fact that any M${}^\natural$-concave function satisfies both ordinal concavity and size-restricted concavity~\citep{yokote2024rationalizing}.

\footnotetext{\citet{MurotaYokoi2015} also proved that any unique-selecting \emph{quasi M${}^\natural$-concave} function induces a choice function that is PI and LAD, where quasi M${}^\natural$-concavity is a concept weaker than M${}^\natural$-concavity.}

\subsection{Properties of Choice Correspondences}\label{sec:correspondence}

A choice correspondence $\C\colon 2^I\rightrightarrows 2^I$ is said to be \emph{rationalizable} if there exists a utility function $u\colon 2^I \to \Rbar$ such that $\C(X)=\argmax\{u(Y) \mid Y \subseteq X\}$ for every $X\in 2^I$.
We provide a characterization of this concept by extending SARP in \Cref{sec:rationalize}.
For a family of subsets $\cF\subseteq 2^I$ with $\emptyset\in\cF$, the choice correspondence defined by $\C(X)=\{Y\subseteq X\mid Y\in\cF\}$ is rationalizable.
This type of choice correspondence is especially useful for modeling constraints without imposing priorities.

\citet{sotomayor1999three} introduced the \emph{substitutability} of a choice correspondence $\C$ as follows:
\begin{description}
    \item[(SC$^{1}_{\text{ch}}$)] For any $X_1, X_2\in 2^I$ with $X_1 \supseteq X_2$ and any $Z_1 \in \C(X_1)$, there exists $Z_2 \in \C(X_2)$ such that $X_2 \cap Z_1 \subseteq Z_2$.
    \item[(SC$^{2}_{\text{ch}}$)] For any $X_1, X_2\in 2^I$ with $X_1 \supseteq X_2$ and any $Z_2 \in \C(X_2)$, there exists $Z_1 \in \C(X_1)$ such that $X_2 \cap Z_1 \subseteq Z_2$.
\end{description}
\citet{sotomayor1999three} also introduced the \emph{IRC} condition of a choice correspondence $\C$: For any $X,Y,Y'\in 2^I$, if $Y \in \C(X)$ and $Y \subseteq Y' \subseteq X$, then $Y \in \C(Y')$.
It has been shown that a stable matching exists if the choice correspondence of each school satisfies substitutability and IRC (see \Cref{sec:matching} for more details). On the other hand, a choice correspondence may not be rationalizable even if it satisfies substitutability and IRC (see \Cref{ex:PIunion}).
Finally, we define the acceptance of a choice correspondence $\C$. 
A choice correspondence $\C$ is called \emph{acceptant} if there exists a nonnegative integer $q$ such that for every $X\in 2^I$ and for every $Y\in\C(X)$, it holds that $|Y|=\min\{|X|,\,q\}$.


\section{PI Choice Correspondences}\label{sec:PIchoice}
In this section, we define the central concept of this
paper, the \emph{path-independent (PI) choice correspondence}, and present its fundamental properties.

Let $w\colon I\to\mathbb{R}$ be a weight function.
We denote $w(X)=\sum_{i\in X}w(i)$ for each $X\in 2^I$.
A weight function $w$ is called \emph{unique maximizing (UM)} if, for every nonempty $\mathcal{X}\subseteq 2^I$, the set $\argmax_{X\in\mathcal{X}}w(X)$ is a singleton (i.e., $w(X)\ne w(X')$ for any distinct $X,X'\in 2^I$).
For any UM weight $w$, define $C^w$ to be the choice function such that $\{C^w(X)\}=\argmax_{Y\in\C(X)} w(Y)~(\forall X\in 2^I)$.
Intuitively, $C^w$ represents the outcome of applying a tie-breaking rule to $\C$, consistently selecting the subset with the highest weight.

\begin{df}\label{def:PI}
A choice correspondence $\C$ is PI if, for any UM weight $w$, the choice function $C^w$ satisfies PI.
\end{df}

Similarly, we define \emph{LAD} of a choice correspondence
as follows.
\begin{df}
A choice correspondence $\C$ is LAD if, for any UM weight $w$, the choice function $C^w$ satisfies LAD.
\end{df}

\begin{table}[htb]
\centering
\begin{tabular}{C|CCCCC}
\toprule
X           & \C_0(X)            &\C_{1}(X)       & \C_{2}(X)     & \C_{3}(X)         & \C_{4}(X)          \\\midrule
\emptyset   & \emptyset        &\emptyset       & \emptyset     & \emptyset         & \emptyset          \\
\{a\}       & \{a\}            &\{a\}           & \{a\}         & \{a\}             & \emptyset,\{a\}    \\
\{b\}       & \{b\}            &\{b\}           & \{b\}         & \{b\}             & \emptyset,\{b\}    \\
\{c\}       & \{c\}            &\{c\}           & \{c\}         & \{c\}             & \emptyset,\{c\}    \\
\{a, b\}    & \{a\},\{b\}      &\{a,b\}         & \{a\}         & \{b\}             & \emptyset,\{a,b\}  \\
\{a, c\}    & \{a\},\{c\}      &\{a,c\}         & \{a\}         & \{a\}             & \emptyset,\{a,c\}  \\
\{b, c\}    & \{b\},\{c\}      &\{b\},\{c\}     & \{b,c\}       & \{c\}             & \emptyset,\{b,c\}  \\
\{a, b, c\} & \{a\},\{b\},\{c\}&\{a,b\},\{a,c\} & \{a\},\{b,c\} & \{a\},\{b\},\{c\} & \emptyset,\{a,b,c\}\\
\bottomrule
\end{tabular}
\caption{Examples of choice correspondences}\label{tab:exChoice}
\end{table}

\Cref{tab:exChoice} provides some examples of choice correspondences on $I=\{a,b,c\}$. 
$\C_0$ and $\C_1$ satisfy PI and LAD. 
In contrast, $\C_2$ satisfies PI but not LAD.
To see that $\C_2$ fails LAD, note that $C_2^w(\{b,c\})=\{b,c\}$ while $C_2^w(\{a,b,c\})=\{a\}$ for a UM weight $w$ with $w(a)>w(b)+w(c)$.
Likewise, $\C_3$ satisfies LAD but not PI.
To see that $\C_3$fails not PI by $C_3^w(\{a,b\})=\{b\}$ while $C_3^w(\{a,b,c\})=\{a\}$ for a UM weight $w$ with $w(a)>w(b)>w(c)$.
Finally, $\C_4$ fails to satisfy both of PI and LAD; this can be verified by considering the UM weight $w$ with $(w(a),w(b),w(c))=(-1,2,-4)$ for which $C_4^w(\{a\})=\emptyset$, $C_4^w(\{a,b\})=\{a,b\}$, and $C_4^w(\{a,b,c\})=\emptyset$.

\subsection{Rationalizability}

In general, a choice correspondence that satisfies substitutability and IRC is not necessarily rationalizable, unlike the case for PI choice functions. The following example illustrates this fact.

\begin{ex}\label{ex:PIunion}
Consider the choice correspondence $\C_{4}$ given in 
Table \ref{tab:exChoice}. It can be expressed as the union of two PI choice functions, $C^{(1)}$ and $C^{(2)}$, where $C^{(1)}(X)=X$ and $C^{(2)}(X)=\emptyset$ for all $X\in 2^I$.
Thus, $\C_4$ satisfies substitutability and IRC (see \Cref{lm:PIunion} in \Cref{sec:sub_PI}).
We already have seen that $\C_{4}$ does not satisfy PI.
Moreover, $\C_4$ is not rationalizable because $\C_4(\{a\})=\{\{a\},\emptyset\}$ implies that the utilities of $\{a\}$ and $\emptyset$ are equal, whereas $\C_4(\{a,b\})=\{\{a,b\},\emptyset\}$ implies that the utility of $\{a\}$ is strictly smaller than that of $\emptyset$.
\end{ex}

On the other hand, rationalizability is guaranteed under PI choice correspondences.
Thus, they inherit an important property that PI choice functions have.

\begin{thm}\label{thm:rationalize}
Every PI choice correspondence $\C$ is rationalizable.
\end{thm}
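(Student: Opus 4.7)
The plan is to leverage the defining property of PI for correspondences—that every tie-breaking choice function $C^w$ (for a UM weight $w$) is PI—together with the fact that PI choice functions are rationalizable~\citep{yang2020rationalizable}. First, I would observe that the correspondence can be reconstructed from its tie-breaking refinements: for every $X\in 2^I$ and every $Y\in\C(X)$, one can choose a UM weight $w$ that strictly separates $Y$ from the other members of $\C(X)$, so $Y=C^w(X)$. Hence $\C(X)=\{C^w(X) \mid w~\text{is UM}\}$, and the members of $\C(X)$ are exactly the ``tie-breaking maximizers.''

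Next, I would define a revealed-preference relation on $2^I$ by writing $Y\succsim Z$ iff there exists $X\supseteq Y\cup Z$ with $Y\in\C(X)$, and setting $Y\sim Z$ iff $Y\succsim Z\succsim Y$, $Y\succ Z$ iff $Y\succsim Z$ but $Z\not\succsim Y$. The goal is to show that $\sim$ is an equivalence relation and that $\succ$ induces a strict partial order on the quotient $2^I/{\sim}$, which then extends to a total preorder. Any utility representation $u$ of this preorder (normalized by $u(\emptyset)=0$) will rationalize $\C$, since by construction all members of $\C(X)$ share the maximum $u$-value among subsets of $X$, and every non-chosen subset of $X$ is strictly less preferred.

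The main obstacle is establishing consistency: ruling out any cycle $Y_1\succsim Y_2\succsim\cdots\succsim Y_k\succ Y_1$ in the induced relation. Here the PI condition does the decisive work. Given such a cycle with witnesses $X_1,\dots,X_k$, I would construct a single UM weight $w$ whose tie-breaking simultaneously picks out the intended $Y_i\in\C(X_i)$ along the cycle, so that the resulting PI choice function $C^w$ inherits the same revealed-preference cycle. This contradicts the rationalizability of $C^w$ (equivalently, SARP for choice functions). The delicate point is to choose $w$ so that it selects the correct set at every stage of the cycle while remaining UM; this is done by taking a base weight on the ground set and infinitesimally perturbing it to break all remaining ties in a prescribed direction.

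Finally, having verified consistency, I would define $u\colon 2^I\to\Rbar$ as any utility representation of the resulting preorder on $2^I$ with $u(\emptyset)=0$, and check directly that $\C(X)=\argmax\{u(Y) \mid Y\subseteq X\}$ for every $X\in 2^I$. An alternative route, hinted at in the introduction, is to extend the correspondence between PI choice functions and convex geometries due to \citet{johnson1996algebraic} and \citet{koshevoy1999choice} to the multi-valued setting, defining a closure operator from $\C$ and reading off $u$ from the rank structure of the associated convex geometry; this would give a more structural derivation of the rationalizing utility.
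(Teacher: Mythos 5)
Your proposal diverges from the paper's proof, which does not use a revealed-preference/SARP argument at all: the paper builds the closure operator $\tau(X)=\bigcup\{Y\in 2^I\mid \C(X)\cap\C(Y)\ne\emptyset\}$, proves it is a closure operator via \Cref{lm:tau=} and \Cref{prop:closure}, establishes the interval property (\Cref{lm:interval}), and then writes down the rationalizing utility explicitly as $u(X)=|\tau(X)|$ or $|\tau(X)|-1$ according to whether $X\in\C(X)$. This is exactly the ``alternative route'' you mention in your last sentence, but you only gesture at it; your main argument is the revealed-preference one, and that argument has two genuine gaps.

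First, your relation is mis-specified. Writing $Y\succsim Z$ whenever some $X\supseteq Y\cup Z$ has $Y\in\C(X)$ does not distinguish revealed indifference (both $Y$ and $Z$ chosen from a common superset) from revealed strict preference ($Y$ chosen, $Z\subseteq X$ rejected). Consequently $Y\sim Z$ can hold between two sets that any rationalizing utility must rank strictly in both directions --- a configuration that is itself a failure of rationalizability but contains no $\succ$-edge, so acyclicity of your $\succ$ does not imply that the utility you extract represents $\C$. The correct formalization needs two separate relations, one for joint choice and one for choice-versus-rejection (as in the paper's \Cref{sec:rationalize}), and the consistency condition is that the strict relation, pushed down to equivalence classes of the indifference relation, is a strict partial order. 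Second, and more seriously, the ``decisive'' step --- a single UM weight $w$ with $C^w(X_i)=Y_i$ simultaneously for all $i$ along the cycle --- is generally impossible. Whenever $Y_{i+1}\in\C(X_i)$ as well (a weak link), forcing $C^w(X_i)=Y_i$ requires $w(Y_i)>w(Y_{i+1})$, and around a cycle these inequalities are jointly unsatisfiable; moreover, since $w$ is additive over elements, even an acyclic system of set-level inequalities need not be realizable by any weight function (e.g.\ $w(\{a\})>w(\{b\})$ together with $w(\{b,c\})>w(\{a,c\})$). No infinitesimal perturbation repairs this. The uniformity that PI actually delivers is the one in \Cref{lm:tau=}: for a \emph{single} target $S$ one can build one weight that selects $S$ from \emph{every} set from which $S$ is choosable; selecting many distinct targets from their respective witness sets with one weight is a different and unattainable requirement. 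To make your route work you would need to chain one weight per link and transfer conclusions between links, which is essentially what the paper's $\tau$-machinery accomplishes.
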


Note that this result implies that PI choice correspondences satisfy IRC. Moreover,
it turns out that  PI choice correspondences also satisfy
a stronger IRC condition (see \Cref{lm:restrict}).
Since the PI condition implies substitutability,
a stable matching is guaranteed to exist
when each school has a PI choice correspondence.
The proof of  Theorem \ref{thm:rationalize} is given in the following subsubsection
where some technically important lemmas
that will be used to prove other results
are presented.

\subsubsection{Proof of \Cref{thm:rationalize}}
Throughout this subsubsection, we assume that $\C\colon 2^I\rightrightarrows 2^I$ is a PI choice correspondence.

We first show that any choice $Y\in\C(X)$ is revealed as $C^w(X)=Y$ for some UM weight $w$.
\begin{lm}\label{lm:chw}
For every $X,Y\in 2^I$ with $Y\in\C(X)$, there exists a UM weight $w$ such that $C^w(X)=Y$.
\end{lm}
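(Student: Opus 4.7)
The plan is to construct an explicit UM weight $w$ such that $w(Y)>w(Z)$ for every $Z\in\C(X)\setminus\{Y\}$; since $\C(X)\subseteq 2^X$ this immediately forces $C^w(X)=Y$. Notably the PI hypothesis on $\C$ is not actually used in this lemma---it becomes essential only in the later lemmas that build on it---so the proof is purely a two-scale weight construction: a large-amplitude ``base'' that strictly prefers $Y$ among all subsets of $X$, plus a much smaller ``perturbation'' that resolves any remaining ties globally on $2^I$.

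For the base, set $M=|I|+1$ and define $w_0\colon I\to\mathbb{Z}$ by
\[
w_0(i)=\begin{cases}1 & \text{if }i\in Y,\\ -M & \text{if }i\in X\setminus Y,\\ 0 & \text{if }i\in I\setminus X.\end{cases}
\]
A direct computation yields $w_0(Y)-w_0(Z)=|Y\setminus Z|+M\,|Z\setminus Y|\ge 1$ for every $Z\subseteq X$ with $Z\ne Y$. For the perturbation, enumerate $I=\{i_1,\ldots,i_n\}$ and set $\epsilon(i_k)=2^{-k}/4$; then $\epsilon(Z)=\sum_{i_k\in Z}2^{-k}/4$ takes pairwise distinct values on distinct subsets of $I$ by uniqueness of binary expansions, and $|\epsilon(Z)-\epsilon(Z')|\le\epsilon(I)<1/4$ for all $Z,Z'\subseteq I$. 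Finally, put $w=w_0+\epsilon$.

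Two short verifications complete the argument. For UM of $w$: take distinct $Z,Z'\subseteq I$. If $w_0(Z)=w_0(Z')$, then $w(Z)-w(Z')=\epsilon(Z)-\epsilon(Z')\ne 0$ by distinctness of the $\epsilon$-values; otherwise $w_0(Z)-w_0(Z')$ is a nonzero integer, so $|w_0(Z)-w_0(Z')|\ge 1>|\epsilon(Z)-\epsilon(Z')|$ and again $w(Z)\ne w(Z')$. For the desired preference, any $Z\in\C(X)\setminus\{Y\}$ satisfies
\[w(Y)-w(Z)=\bigl(w_0(Y)-w_0(Z)\bigr)+\bigl(\epsilon(Y)-\epsilon(Z)\bigr)\ge 1-1/4>0,\]
so $Y$ is the unique maximizer of $w$ over $\C(X)$, i.e.\ $C^w(X)=Y$. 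The only mildly delicate step is arranging the perturbation to be simultaneously UM on $2^I$ and bounded strictly below the minimum nonzero gap ($=1$) of $w_0$, which the geometric choice $\epsilon(i_k)=2^{-k}/4$ handles.
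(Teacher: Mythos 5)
Your proof is correct and takes essentially the same approach as the paper's: both construct an explicit UM weight (using geometric weights $2^{-k}$ so that all subset sums are distinct) under which $Y$ strictly dominates every other subset of $X$, forcing $Y$ to be the unique maximizer of $w$ over $\C(X)$. The paper does this in a single step by assigning $+2^{-j}$ to elements of $Y$ and $-2^{-j}$ to elements of $I\setminus Y$, whereas you split the construction into an integer-valued base plus a small tie-breaking perturbation; the difference is purely presentational, and you are also right that PI is not needed here.
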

\begin{proof}
Let $Y=\{i_1,\dots,i_k\}$ and let $I\setminus Y=\{i_{k+1},\dots,i_n\}$.
Define the weight function $w\colon I\to\mathbb{R}$ as follows: 
\begin{align}
w(i_j)&=
\begin{cases}
2^{-j}  & \text{if }j\le k,\\
-2^{-j} & \text{if }j\ge k+1
\end{cases}
\quad(\forall i_j\in I).
\end{align}
Then, $w$ is a UM weight, and we have $C^w(X)=Y$.
\end{proof}
This lemma implies that $\C(X)=\{C^w(X)\mid \text{$w$ is a UM weight}\}$. Thus, a PI choice correspondence $\C$ is representable by a union of PI choice functions.
Nevertheless, the union of PI choice functions is not necessarily a PI choice correspondence (see \Cref{ex:PIunion}).

The next lemma states that a PI choice correspondence $\C$ satisfies a form of idempotent property.
\begin{lm}\label{lm:idemponent}
Let $S \in 2^I$. If $S \in \C(X)$ for some $X \in 2^I$, then $S \in \C(S)$.
\end{lm}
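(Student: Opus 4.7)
The plan is to combine Lemma \ref{lm:chw} (which constructs a tie-breaking weight that reveals any given choice) with the standard idempotence property of PI choice functions applied to the tie-broken function $C^w$.

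First, I would invoke Lemma \ref{lm:chw}: since $S \in \C(X)$, there exists a UM weight $w\colon I\to\mathbb{R}$ such that $C^w(X)=S$. By the definition of a PI choice correspondence (Definition \ref{def:PI}), the induced choice function $C^w$ satisfies PI.

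Next, I would recall the well-known idempotence consequence of PI for choice functions: applying the PI identity $C(X\cup X')=C(C(X)\cup X')$ with $X'=\emptyset$ (or alternatively $X'=C(X)$ together with $C(X)\subseteq X$) yields $C^w(C^w(X))=C^w(X)$. Substituting $C^w(X)=S$ gives $C^w(S)=S$.

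Finally, by the definition of $C^w$, the value $C^w(S)$ is always an element of $\C(S)$. Hence $S=C^w(S)\in\C(S)$, as desired.

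There is no genuine obstacle here; the only delicate point is ensuring the weight $w$ produced by Lemma \ref{lm:chw} truly selects $S$ from $\C(X)$, which is exactly what that lemma guarantees. The rest is a one-line invocation of PI applied to $C^w$.
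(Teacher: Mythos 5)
Your proposal is correct and follows essentially the same route as the paper: invoke \Cref{lm:chw} to obtain a UM weight $w$ with $C^w(X)=S$, then use idempotence of the PI choice function $C^w$ to conclude $C^w(S)=C^w(C^w(X))=C^w(X)=S\in\C(S)$. The only difference is that you spell out why PI implies idempotence (taking $X'=\emptyset$ in the PI identity), which the paper leaves implicit.
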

\begin{proof}
By \Cref{lm:chw}, there exists a UM weight $w$ such that $C^w(X)=S$.
As $C^w$ is PI, we have $C^w(S)=C^w(C^w(X))=C^w(X)=S$.
Therefore, $S\in\C(S)$.
\end{proof}

A map $\psi\colon 2^I\to 2^I$ with $\psi(\emptyset)=\emptyset$ is said to be a \emph{closure operator} if the following three properties hold: (extensivity) $X\subseteq\psi(X)$, (idempotence) $\psi(\psi(X))=\psi(X)$, and (monotonicity) $X\subseteq Y$ implies $\psi(X)\subseteq\psi(Y)$.
Define $\tau(X)=\bigcup\{Y\in 2^I\mid \C(X)\cap\C(Y)\ne\emptyset\}$.
We will demonstrate that $\tau$ is a closure operator.\footnote{For a PI choice function $C$, \citet{koshevoy1999choice} proved that the map $\psi(X)=\bigcup\{Y\in 2^I\mid C(X)=C(Y)\}$ is a closure operator that satisfies the anti-exchange property. Our result generalizes this to PI choice correspondences; however, in this case, the anti-exchange property may not hold (e.g., $\C_0$ in \Cref{tab:exChoice}).} 

The following lemma establishes key properties that are essential for demonstrating that $\tau$ is a closure operator.
\begin{lm}\label{lm:tau=}
For $X,S\in 2^I$ with $S\in\C(X)$, we have $\tau(X)=\bigcup\{Y\in 2^I \mid S\in\C(Y)\}$.
Moreover, $S\in\C(\tau(X))$ and $\tau(S)=\tau(X)$.
\end{lm}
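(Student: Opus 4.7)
The plan is to prove the three assertions in order: (a)~the identity $\tau(X) = A$ where $A := \bigcup\{Y : S \in \C(Y)\}$; (b)~$S \in \C(\tau(X))$; and (c)~$\tau(S) = \tau(X)$. Throughout, fix $S \in \C(X)$ and apply the explicit construction from the proof of Lemma~\ref{lm:chw} to obtain a UM weight $w$ in which $S$ is the \emph{unique} $w$-maximizer over all of $2^I$ (an inspection of the constructed weight confirms this stronger property); consequently $C^w(Z) = S$ if and only if $S \in \C(Z)$.

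For (a), the inclusion $A \subseteq \tau(X)$ is immediate: $S \in \C(Y)$ places $S$ in $\C(X) \cap \C(Y)$, so $Y$ is in the family unioned to form $\tau(X)$. For the reverse inclusion $\tau(X) \subseteq A$, I first establish (b) by showing the family $\mathcal{Y}_1 := \{Y : S \in \C(Y)\}$ is union-closed. Indeed, for $Y_1, Y_2 \in \mathcal{Y}_1$, $C^w(Y_1) = C^w(Y_2) = S$, so two applications of PI of $C^w$ yield
\[
C^w(Y_1 \cup Y_2) = C^w(C^w(Y_1) \cup C^w(Y_2)) = C^w(S) = S,
\]
where the last equality uses Lemma~\ref{lm:idemponent} ($S \in \C(S)$, and $w$ uniquely maximizes at $S$). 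Iterating over all $Y \in \mathcal{Y}_1$ gives $C^w(A) = S$, hence $S \in \C(A)$, which is assertion~(b). Now to finish (a), take any $Y$ with some witness $T \in \C(X) \cap \C(Y)$ and apply Lemma~\ref{lm:chw} with $T$ in the role of $S$ to obtain a UM weight $v$ in which $T$ is the unique $v$-maximizer. Then $C^v(X) = C^v(Y) = T$, and PI of $C^v$ gives $C^v(X \cup Y) = C^v(C^v(X) \cup Y) = C^v(T \cup Y) = T$ (using $T \subseteq Y$), so $T \in \C(X \cup Y)$. The remaining step is to upgrade this to $S \in \C(X \cup Y)$: combine PI of $C^w$, which gives $C^w(X \cup Y) = C^w(S \cup C^w(Y))$, with IRC of the correspondence $\C$ (a direct consequence of PI of each $C^w$: given $Y' \in \C(X')$ with $Y' \subseteq X'' \subseteq X'$, the UM weight that uniquely maximizes at $Y'$ together with IRC of the induced PI choice function yields $Y' \in \C(X'')$). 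Once $S \in \C(X \cup Y)$ is established, the set $X \cup Y$ belongs to $\mathcal{Y}_1$, so $Y \subseteq X \cup Y \subseteq A$; taking unions over all $Y$ with $\C(X) \cap \C(Y) \neq \emptyset$ completes $\tau(X) \subseteq A$.

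For (c), Lemma~\ref{lm:idemponent} gives $S \in \C(S)$, so applying (a) with $X$ replaced by $S$ yields $\tau(S) = \bigcup\{Y : S \in \C(Y)\} = A = \tau(X)$.

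The main obstacle is the upgrade step from $T \in \C(X \cup Y)$ to $S \in \C(X \cup Y)$; this leverages the \emph{joint} PI of the entire family $\{C^w\}_w$, since PI of a single $C^w$ or the conjunction of SUB and IRC for the correspondence is not sufficient (cf.~Example~\ref{ex:PIunion}).
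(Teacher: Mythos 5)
Your overall architecture matches the paper's up to the crucial step: you use the same weight from \Cref{lm:chw} that makes $S$ the unique global $w$-maximizer, the same union-closedness argument for $\{Y \mid S\in\C(Y)\}$ yielding $S\in\C(\tau(X))$, the same easy inclusion $A\subseteq\tau(X)$, and the same derivation of $\tau(S)=\tau(X)$ from \Cref{lm:idemponent}. The gap is exactly where you flag ``the main obstacle'': upgrading $T\in\C(X\cup Y)$ to $S\in\C(X\cup Y)$. The argument you sketch for it does not go through. The identity $C^w(X\cup Y)=C^w(S\cup C^w(Y))$ is of no use because $C^w(Y)$ is uncontrolled (you do not know $S\in\C(Y)$ --- that is what is being proved), so $C^w(X\cup Y)$ may a priori pick up elements of $Y\setminus X$ and need not equal $S$. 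And IRC of the correspondence points the wrong way: it passes a chosen set from a larger menu down to a smaller menu containing it, whereas here you must promote $S\in\C(X)$ to membership in the choice from the strictly \emph{larger} set $X\cup Y$ --- which is essentially the content of the lemma itself, not a tool available to prove it. (Your intermediate fact $T\in\C(X\cup Y)$, obtained via the weight $v$, is correct but by itself does not feed into any of the stated tools.)

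The paper closes precisely this gap with a single, differently tailored UM weight $w'$: order $I$ so that elements of $I\setminus(S\cup T)$ get the most negative weights, elements of $T\setminus S$ get less negative ones, and elements of $S$ get positive ones. This one weight simultaneously achieves (i) $C^{w'}(X)=S$ and (ii) $C^{w'}(Z)\subseteq S\cup T\subseteq X$ for the offending set $Z$, because $w'(T)>w'(T')$ for every $T'\not\subseteq S\cup T$ and $T\in\C(Z)$. Then PI of the \emph{single} choice function $C^{w'}$ gives $C^{w'}(X\cup Z)=C^{w'}(X\cup C^{w'}(Z))=C^{w'}(X)=S$, hence $Z\subseteq\tau(X)\subseteq S^*$. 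Two separate weights, one maximizing at $S$ and one at $T$, cannot be combined this way, since PI of one tie-broken choice function says nothing about another. If you replace your upgrade step with this tailored weight, the remainder of your proof is correct.
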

\begin{proof}
We write $S^*$ to denote $\bigcup\{Y\in 2^I \mid S\in\C(Y)\}$.
Let $S=\{i_1,\dots,i_k\}$, and let $I\setminus S=\{i_{k+1},\dots,i_n\}$.
Define a weight function $w\colon I\to\mathbb{R}$ as follows:
\begin{align}
w(i_j)=
\begin{cases}
2^{-j}  & \text{if }j\le k,\\
-2^{-j} & \text{if }j\ge k+1
\end{cases}
\quad(\forall i_j\in I).
\end{align}
Then, $w$ is a UM weight, and we have $C^w(X)=S$.

For $Y_1,Y_2\in 2^I$ with $S\in\C(Y_1)$ and $S\in\C(Y_2)$, we have $C^w(Y_1)=S$ and $C^w(Y_2)=S$. Hence, $S=C^w(S)=C^w(C^w(Y_1)\cup C^w(Y_2))=C^w(Y_1\cup Y_2)\in\C(Y_1\cup Y_2)$ since $C^w$ is PI.
This implies that $S=C^w(\bigcup\{Y\in 2^I \mid S\in\C(Y)\})=C^w(S^*)\in\C(S^*)$.

We have $\tau(X)=\bigcup\{Y\in 2^I\mid \C(X)\cap\C(Y)\ne\emptyset\}\supseteq \bigcup\{Y\in 2^I\mid S\in\C(Y)\}=S^*$.
Thus, to obtain $\tau(X)=S^*$, it is sufficient to prove that $\tau(X)\subseteq S^*$.
Suppose to the contrary that $\tau(X)\not\subseteq S^*$.
Then, there exists $T,Z\in 2^I$ such that $T\in\C(X)\cap\C(Z)$ and $Z\not\subseteq S^*$.
Let $\sigma$ be an order such that 
$I\setminus(S\cup T)=\{i_{\sigma(1)},\dots,i_{\sigma(p)}\}$,
$T\setminus S=\{i_{\sigma(p+1)},\dots,i_{\sigma(q)}\}$, and
$S=\{i_{\sigma(q+1)},\dots,i_{\sigma(n)}\}$,
where $0\le p\le q\le n$.
Let $w'\colon I\to\mathbb{R}$ be a UM weight such that
\begin{align}
w'(i_{\sigma(j)})&=
\begin{cases}
-2^{-j} & \text{if }j\le q,\\
2^{-j}  & \text{if }j\ge q+1
\end{cases}
\quad(\forall i_{\sigma(j)}\in I)
\end{align}
Then, $C^{w'}(X)=S$ and $C^{w'}(Z)\subseteq S\cup T~(\subseteq X)$ because $w'(T)>w'(T')$ for any $T'\not\subseteq S\cup T$ (see \Cref{fig:tau=}).
Thus, $C^{w'}(X\cup Z)=C^{w'}(X\cup C^{w'}(Z))=C^{w'}(X)=S$.
This implies $Z\subseteq S^*$, which is a contradiction.
Therefore, $\tau(X)=S^*=\bigcup\{Y\in 2^I \mid S\in\C(Y)\}$.

\begin{figure}[thbp]
\centering
\begin{tikzpicture}[xscale=1, yscale=.9]
\draw[thick, rotate=0] (-0.4,.3) ellipse (2.6cm and 1.8cm);
\node at (-2.5,.5) {$S^*$};

\draw[thick] (0,.2) ellipse (2cm and 1.5cm);
\node at (0,1.4) {$X$};

\draw[thick, fill=bgred, draw=red] (.5,0) ellipse (1.3cm and 1cm);
\node at (1.4,.2) {$T$};

\draw[thick,fill=bggreen,draw=green!50!black] (-.5,0) ellipse (1.3cm and 1cm);
\node at (-1.4,.2) {$S$};
\draw[thick, draw=red!50!black] (.5,0) ellipse (1.3cm and 1cm);

\draw[very thick, myblue] (1,0) ellipse (2cm and 1.2cm);
\node[myblue] at (2.7,.2) {$Z$};

\end{tikzpicture}
\caption{Relations of $X$, $S$, $T$, $Z$, and $S^*$}\label{fig:tau=}
\end{figure}
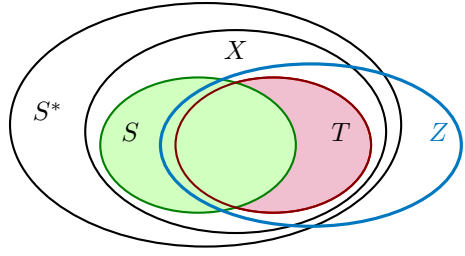

Additionally, $\C(\tau(X))=\C(S^*)\ni C^w(S^*)=S$.
Moreover, as $S\in\C(S)$ by \Cref{lm:idemponent}, it follows that $\tau(X)=\bigcup\{Y\in 2^I \mid S\in\C(Y)\}=\tau(S)$.
\end{proof}

Now we prove that $\tau$ is a closure operator.
\begin{prop}\label{prop:closure}
$\tau$ is a closure operator.
\end{prop}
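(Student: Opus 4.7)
The plan is to verify the four defining conditions of a closure operator — $\tau(\emptyset)=\emptyset$, extensivity, idempotence, and monotonicity — by leveraging the two facts supplied by Lemma \ref{lm:tau=}: for any $X\in 2^I$ and any $S\in\C(X)$, (a) $\tau(X)=\bigcup\{Y\in 2^I:S\in\C(Y)\}$, and (b) $S\in\C(\tau(X))$ with $\tau(S)=\tau(X)$.

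Extensivity and idempotence fall out almost immediately. For extensivity, pick $S\in\C(X)$; then $X$ itself is a witness set in the union in (a), so $X\subseteq\tau(X)$. For idempotence, (b) gives $S\in\C(\tau(X))$, so re-applying (a) with $\tau(X)$ in place of $X$ and the same witness $S$ yields $\tau(\tau(X))=\bigcup\{Y:S\in\C(Y)\}=\tau(X)$. The base case $\tau(\emptyset)=\emptyset$ comes from (a) with $X=\emptyset$ and $S=\emptyset$, under the convention that $\emptyset\in\C(Y)$ only for $Y=\emptyset$.

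Monotonicity is the substantive step. Given $X\subseteq Y$, fix $S\in\C(X)$ and invoke Lemma \ref{lm:chw} to pick a UM weight $w$ with $C^w(X)=S$. The key observation is that in the construction of that lemma $w$ is strictly positive on $S$ and strictly negative off $S$, so in fact $S$ is the unique $w$-maximizer over \emph{all} of $2^I$; consequently $C^w(Z)=S$ for every $Z\in 2^I$ with $S\in\C(Z)$. Applying this to $Z=\tau(X)$, which is legitimate by (b), yields $C^w(\tau(X))=S$. PI of $C^w$ now gives
\[
C^w(\tau(X)\cup Y)=C^w\bigl(C^w(\tau(X))\cup Y\bigr)=C^w(S\cup Y)=C^w(Y),
\]
where the last equality uses $S\subseteq X\subseteq Y$. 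Setting $T:=C^w(Y)\in\C(Y)$, this shows $T\in\C(\tau(X)\cup Y)$; applying (a) to $Y$ with witness $T$ then gives $\tau(X)\cup Y\subseteq\tau(Y)$, whence $\tau(X)\subseteq\tau(Y)$.

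The main obstacle, I anticipate, is precisely this monotonicity step — in particular, recognising that the weight furnished by Lemma \ref{lm:chw} makes $S$ a \emph{global} (not merely relative) $w$-maximum, which is what unlocks the application of PI on $\tau(X)\cup Y$ and cleanly transports the $\C$-witness from $S$ to $T$.
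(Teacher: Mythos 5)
Your verification of extensivity, idempotence, and monotonicity is correct and follows essentially the same route as the paper: the monotonicity argument in particular hinges on the identical computation $C^w(\tau(X)\cup Y)=C^w(C^w(\tau(X))\cup Y)=C^w(S\cup Y)=C^w(Y)$ using the weight from \Cref{lm:chw}, and your final step (extracting the witness $T=C^w(Y)$ and invoking \Cref{lm:tau=}(a)) is just a slight repackaging of the paper's chain $Y\cup\tau(X)\subseteq\tau(Y\cup\tau(X))=\tau(C^w(Y\cup\tau(X)))=\tau(Y)$.

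The one point that does not hold up is your treatment of $\tau(\emptyset)=\emptyset$: the ``convention'' that $\emptyset\in\C(Y)$ only for $Y=\emptyset$ is not implied by PI and can fail --- e.g., the correspondence $\C(X)=\{\emptyset\}$ for all $X$ is PI, yet then $\tau(\emptyset)=\bigcup\{Y\mid\emptyset\in\C(Y)\}=I$. Strictly speaking this clause appears in the paper's stated definition of a closure operator, but the paper's own proof also verifies only the three substantive properties and none of the downstream arguments use $\tau(\emptyset)=\emptyset$, so the defect lies in the definition's phrasing rather than in your handling of the three properties that actually matter.
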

\begin{proof}
We verify the three properties of a closure operator: extensivity, idempotence, and monotonicity.

\begin{description}[itemsep=.5em,before={\renewcommand{\makelabel}[1]{\textsc{##1}\hfil}}]
\item[(Extensivity)]
For $X\in 2^I$, we have $X\subseteq \bigcup\{Y\in 2^I\mid \C(X)\cap\C(Y)\ne\emptyset\}=\tau(X)$ as $\C(X)\cap\C(X)=\C(X)\ne\emptyset$.

\item[(Idempotence)]
Let $X,S\in 2^I$ with $S\in\C(X)$.
By \Cref{lm:tau=}, we have $S\in\C(\tau(X))$ and $\tau(S)=\tau(X)$.
Applying \Cref{lm:tau=} again to $\tau(X)$ and $S$, we obtain $\tau(S)=\tau(\tau(X))$.
Thus, we concluded that $\tau(X)=\tau(S)=\tau(\tau(X))$.

\item[(Monotonicity)]
For $X\subseteq Y\subseteq I$, let $S=\{i_1,\dots,i_k\}$ be a subset that is in $\C(X)$, and let $I\setminus S=\{i_{k+1},\dots,i_n\}$.
Define a weight function $w\colon I\to\mathbb{R}$ as follows:
\begin{align}
w(i_j)=
\begin{cases}
 2^{-j} & \text{if }j\le k,\\
-2^{-j} & \text{if }j\ge k+1
\end{cases}
\quad(\forall i_j\in I).
\end{align}
Then, $w$ is a UM weight, and we have $C^w(X)=S$.
By \Cref{lm:tau=}, we also have $C^w(\tau(X))=S$.
Moreover, we have 
$C^w(Y\cup \tau(X))=C^w(Y\cup C^w(\tau(X)))=C^w(Y\cup S)=C^w(Y)$.
Hence, 
$Y\cup\tau(X)\subseteq \tau(Y\cup\tau(X))=\tau(C^w(Y\cup \tau(X)))=\tau(C^w(Y))=\tau(Y)$,
where the set inclusion follows from extensivity, and the first and the third equalities are by \Cref{lm:tau=}.
Since $Y\subseteq\tau(Y)$ by extensivity, we conclude $\tau(X)\subseteq \tau(Y)$. \qedhere
\end{description}
\end{proof}

It is known that the inverse image $C^{-1}(X)$ for a PI choice function $C$ forms an interval in $2^I$~\citep{johnson1996algebraic,koshevoy1999choice}. The next lemma generalizes this result to a PI choice correspondence $\C$.
\begin{lm}\label{lm:interval}
For $S,T\in 2^I$ such that $S\in\C(S)$,
we have $S\in\C(T)$ if and only if $S\subseteq T\subseteq \tau(S)$.
\end{lm}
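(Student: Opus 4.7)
Both directions should fall out of the lemmas already assembled, so the plan is mainly to route each direction through the right tool.

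For the ``only if'' direction, I would start from the hypothesis $S \in \C(T)$. The inclusion $S \subseteq T$ is immediate because $\C(T) \subseteq 2^T$. For the other inclusion $T \subseteq \tau(S)$, apply \Cref{lm:tau=} with $X := T$: since $S \in \C(T)$, this yields $\tau(T) = \tau(S)$, and combining with the extensivity part of \Cref{prop:closure} gives $T \subseteq \tau(T) = \tau(S)$.

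For the ``if'' direction, suppose $S \subseteq T \subseteq \tau(S)$ with $S \in \C(S)$. Applying \Cref{lm:tau=} with $X := S$ yields $S \in \C(\tau(S))$. I would then reuse the exponential weighting from \Cref{lm:chw}: enumerate $S = \{i_1, \dots, i_k\}$ and $I\setminus S = \{i_{k+1}, \dots, i_n\}$ and set $w(i_j) = 2^{-j}$ for $j \le k$ and $w(i_j) = -2^{-j}$ for $j \ge k+1$. This $w$ is UM and $S$ is the unique weight-maximizer among \emph{all} subsets of $I$, so in particular $C^w(\tau(S)) = S$. Because $\C$ is PI, $C^w$ satisfies IRC, and the inclusions $C^w(\tau(S)) = S \subseteq T \subseteq \tau(S)$ then give $C^w(T) = C^w(\tau(S)) = S$, i.e.\ $S \in \C(T)$.

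I do not expect a serious obstacle here: every ingredient has been proved earlier, and the only computation that requires any care is verifying that the specific $w$ selects $S$ uniquely when restricted to $\C(\tau(S))$ — but this is a one-line check, since any $Y \ne S$ either omits a positively weighted element of $S$ or contains a negatively weighted element of $I \setminus S$, and hence $w(Y) < w(S)$.
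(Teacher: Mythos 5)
Your proposal is correct and follows essentially the same route as the paper: both directions rest on \Cref{lm:tau=} together with the exponential weight $w$ that makes $S$ the unique $w$-maximizer, so that $C^w(\tau(S))=S$. The only cosmetic differences are that you obtain $T\subseteq\tau(S)$ via $\tau(T)=\tau(S)$ and extensivity rather than directly from the definition of $\tau(S)$, and you finish the converse with IRC of $C^w$ where the paper writes out the equivalent path-independence chain $C^w(T)=C^w(T\cup C^w(\tau(S)))=C^w(\tau(S))$.
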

\begin{proof}
Suppose that $S\in\C(T)$. 
By definition, we have $S\subseteq T$.
From \Cref{lm:idemponent}, it follows that $S\in\C(S)$.
Moreover, $T \subseteq \bigcup\{Y \in 2^I \mid \C(S) \cap \C(Y) \neq \emptyset\} = \tau(S)$, since $\C(S) \cap \C(T)~(\ni S)$ is nonempty. Hence, we conclude that $S\subseteq T\subseteq \tau(S)$.

Conversely, suppose that $S\subseteq T\subseteq\tau(S)$.
Let $S=\{i_1,\dots,i_k\}$, and let $I\setminus S=\{i_{k+1},\dots,i_n\}$.
Define a weight function $w\colon I\to\mathbb{R}$ as follows:
\begin{align}
w(i_j)=
\begin{cases}
 2^{-j}  & \text{if }j\le k,\\
-2^{-j} & \text{if }j\ge k+1
\end{cases}
\quad(\forall i_j\in I).
\end{align}
By \Cref{lm:tau=}, we have $C^w(\tau(S))=S$.
Thus, $C^w(T)=C^w(T\cup S)=C^w(T\cup C^w(\tau(S)))=C^w(T\cup\tau(S))=C^w(\tau(S))=S$ by PI of $C^w$.
\end{proof}

The following two lemmas guarantee that PI choice correspondences satisfy a stronger IRC condition.
\begin{lm}\label{lm:taurep}
$\C(X)=\C(\tau(X))\cap 2^X$ holds for any $X\in 2^I$.
\end{lm}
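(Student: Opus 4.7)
The plan is to prove each inclusion by invoking the preceding lemmas \Cref{lm:idemponent}, \Cref{lm:tau=}, \Cref{prop:closure}, and \Cref{lm:interval}, with essentially no new work needed beyond bookkeeping.

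For the forward inclusion $\C(X)\subseteq\C(\tau(X))\cap 2^X$, I would take any $S\in\C(X)$. Then $S\subseteq X$ holds by definition of a choice correspondence, and \Cref{lm:tau=} gives directly that $S\in\C(\tau(X))$. So $S\in\C(\tau(X))\cap 2^X$.

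For the reverse inclusion $\C(\tau(X))\cap 2^X\subseteq\C(X)$, I would take any $S\in\C(\tau(X))$ with $S\subseteq X$. First, \Cref{lm:idemponent} applied to $S\in\C(\tau(X))$ yields $S\in\C(S)$. Next, applying \Cref{lm:tau=} to the pair $(\tau(X),S)$ (in place of $(X,S)$) gives $\tau(S)=\tau(\tau(X))$, and by the idempotence of $\tau$ established in \Cref{prop:closure} this equals $\tau(X)$. Combining $S\subseteq X$ with the extensivity $X\subseteq\tau(X)=\tau(S)$, we obtain $S\subseteq X\subseteq\tau(S)$. Since $S\in\C(S)$, \Cref{lm:interval} concludes that $S\in\C(X)$.

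The argument is essentially a chaining exercise, so there is no substantial obstacle; the only subtlety is to remember that \Cref{lm:tau=} can be applied to the pair $(\tau(X),S)$ because $S\in\C(\tau(X))$, and to combine this with idempotence of $\tau$ to identify $\tau(S)$ with $\tau(X)$, which is what permits the interval characterization of \Cref{lm:interval} to be invoked.
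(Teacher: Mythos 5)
Your proposal is correct and follows essentially the same route as the paper's own proof: both directions use \Cref{lm:tau=} for the forward inclusion, and for the reverse inclusion both combine \Cref{lm:idemponent}, \Cref{lm:tau=} applied to the pair $(\tau(X),S)$, the idempotence of $\tau$ from \Cref{prop:closure}, and the interval characterization of \Cref{lm:interval}. No substantive differences.
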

\begin{proof}
First, suppose that $S \in \C(X)$. 
By definition, this implies $S \subseteq X$. 
Furthermore, from \Cref{lm:tau=}, we know that $S \in \C(\tau(X))$. 
Thus, it follows that $S \in \C(\tau(X)) \cap 2^X$.

Conversely, suppose that $S \in \C(\tau(X)) \cap 2^X$. 
This means that $S \in \C(\tau(X))$ and $S \subseteq X$. 
From \Cref{lm:tau=} and \Cref{prop:closure}, we have $\tau(S)=\tau(\tau(X))=\tau(X)\supseteq X$. 
Additionally, by \Cref{lm:idemponent}, we have $S \in \C(S)$. 
Therefore, applying \Cref{lm:interval}, it follows that $S \in \C(X)$.
\end{proof}

\begin{lm}\label{lm:restrict}
Let $X, S \in 2^I$ with $S \in \C(X)$.  
For any $Y \in 2^I$ such that $S \subseteq Y \subseteq X$, it holds that $\C(Y)=\C(X)\cap 2^Y$.\footnote{
This is a strictly stronger condition than the IRC condition.
Indeed, the choice correspondence $\C_{4}$ given in \Cref{tab:exChoice} does not satisfy this condition
for $S = \emptyset$, $Y = \{a\}$, and $X= \{a, b\}$.
}
\end{lm}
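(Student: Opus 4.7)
The plan is to reduce everything to the closure operator $\tau$ established in \Cref{prop:closure} and the representation $\C(X)=\C(\tau(X))\cap 2^X$ from \Cref{lm:taurep}. The essential observation is that the hypotheses pin down $\tau(Y)$ completely: although $Y$ sits strictly between $S$ and $X$, its $\tau$-closure must coincide with $\tau(X)=\tau(S)$, and this collapse is what makes the choices on $Y$ and on $X$ agree after intersecting with $2^Y$.

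First, I would apply \Cref{lm:tau=} to the pair $(X,S)$: since $S\in\C(X)$, it yields $\tau(S)=\tau(X)$. Next, using monotonicity of $\tau$ (\Cref{prop:closure}) and the chain $S\subseteq Y\subseteq X$, I get
\[
\tau(X)=\tau(S)\subseteq \tau(Y)\subseteq \tau(X),
\]
so $\tau(Y)=\tau(X)$.

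Now I would invoke \Cref{lm:taurep} twice. Applied to $Y$ it gives $\C(Y)=\C(\tau(Y))\cap 2^Y=\C(\tau(X))\cap 2^Y$. Applied to $X$, together with $Y\subseteq X$ (hence $2^Y\subseteq 2^X$), it gives
\[
\C(X)\cap 2^Y=\bigl(\C(\tau(X))\cap 2^X\bigr)\cap 2^Y=\C(\tau(X))\cap 2^Y.
\]
Comparing the two expressions yields $\C(Y)=\C(X)\cap 2^Y$, which is the claim.

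There is no serious obstacle here: all the technical work has been front-loaded into \Cref{lm:tau=}, \Cref{prop:closure}, and \Cref{lm:taurep}. The only point that requires a moment's care is verifying that the hypothesis $S\in\C(X)$ (rather than merely $S\subseteq X$) is what forces $\tau(Y)=\tau(X)$; this is exactly what drives the footnoted counterexample with $\C_4$, where no choice $S\subseteq Y\subseteq X$ is available to sandwich the $\tau$-closures.
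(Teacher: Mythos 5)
Your proof is correct, and its endgame is identical to the paper's: both arguments reduce the claim to showing $\tau(Y)=\tau(X)$ and then apply \Cref{lm:taurep} to $Y$ and to $X$ and intersect with $2^Y$. The one place you diverge is in how $\tau(Y)=\tau(X)$ is obtained. The paper first invokes \Cref{lm:interval} (which in turn relies on \Cref{lm:idemponent} to get $S\in\C(S)$ and on $Y\subseteq X\subseteq\tau(X)=\tau(S)$) to conclude $S\in\C(Y)$, and then applies \Cref{lm:tau=} to the pair $(Y,S)$ to get $\tau(Y)=\tau(S)=\tau(X)$. You instead apply \Cref{lm:tau=} only once, to $(X,S)$, and then sandwich $\tau(S)\subseteq\tau(Y)\subseteq\tau(X)$ using the monotonicity of $\tau$ from \Cref{prop:closure}. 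Your route is a mild economy: it dispenses with \Cref{lm:interval} and \Cref{lm:idemponent} entirely and uses only the closure-operator properties plus \Cref{lm:tau=} and \Cref{lm:taurep}. The paper's route has the small side benefit of explicitly recording the fact $S\in\C(Y)$ along the way, which is the form of the conclusion used elsewhere; but for the statement as given, both arguments are complete and the difference is cosmetic.
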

\begin{proof}
By \Cref{lm:interval}, we have $S\in\C(Y)$.
By \Cref{lm:tau=}, we have $\tau(Y)=\tau(S)=\tau(X)$.
Hence, by \Cref{lm:taurep}, we obtain
$\C(Y)
=\C(\tau(Y))\cap 2^Y
=\C(\tau(X))\cap 2^Y
=(\C(\tau(X))\cap 2^X)\cap 2^Y
=\C(X)\cap 2^Y$.
\end{proof}

Now we prove \Cref{thm:rationalize}.
\begin{proof}[Proof of \Cref{thm:rationalize}]
Define a utility function $u\colon 2^I\to\mathbb{R}$ as follows:
\begin{align}
u(X)&=
\begin{cases}
|\tau(X)|   & \text{if }X\in\C(X),\\
|\tau(X)|-1 & \text{if }X\notin\C(X)
\end{cases}
\quad(\forall X\in 2^I).
\end{align}
We prove that $u$ rationalizes $\C$.
Let $S\subseteq X$. We show that (i) $u(S)=|\tau(X)|$ if $S\in\C(X)$ and (ii) $u(S)<|\tau(X)|$ if $S\not\in\C(X)$.

(i) If $S\in\C(X)$, then $S\in\C(S)$ by \Cref{lm:idemponent} and $\tau(S)=\tau(X)$ by \Cref{lm:tau=}. Hence, \(u(S)=|\tau(S)|=|\tau(X)|\).

(ii-a) If $S\not\in\C(X)$ and $\tau(S)=\tau(X)$, then $S\not\in\C(S)$ by \Cref{lm:tau=}. Thus,
\(u(S)=|\tau(S)|-1= |\tau(X)|-1\).
(ii-b) If $S\not\in\C(X)$ and $\tau(S)\ne\tau(X)$, then $\tau(S)\subsetneq \tau(X)$ by \Cref{prop:closure}. Therefore, \(u(S)\le |\tau(S)|\le |\tau(X)|-1\).

Thus, $\C(X)=\{S\subseteq X\mid u(S)=|\tau(X)|\}=\argmax\{u(Y)\mid Y\subseteq X\}$ for all $X\in 2^I$.
\end{proof}

\subsection{G-matroid}

We show that a PI choice correspondence
has a nice combinatorial property.
Based on this result, we
present some computational properties 
of the PI choice correspondence.

\begin{thm}\label{thm:gmatroid}
Let $\C$ be a PI choice correspondence.
Then, for every $X\in 2^I$, $\C(X)$ is a g-matroid.
\end{thm}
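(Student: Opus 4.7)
The plan is to reduce the g-matroid exchange property to the ordinal concavity of a rationalizer of $\C$. Let $S, T \in \C(X)$ with $e \in S \setminus T$; we seek $e' \in (T \setminus S) \cup \{\emptyset\}$ such that $S - e + e'$ and $T + e - e'$ both lie in $\C(X)$. By \Cref{lm:restrict} applied with $S \in \C(X)$ and $S \subseteq S \cup T \subseteq X$, we have $\C(S \cup T) = \C(X) \cap 2^{S \cup T}$; since any candidate $e'$ produces sets contained in $S \cup T$, we may assume $X = S \cup T$. By \Cref{thm:rationalize}, $\C$ is rationalized by a utility $u$, so $\C(X) = \argmax\{u(Y) : Y \subseteq X\}$ and $u(S) = u(T)$ is the maximum of $u$ on $2^X$.

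The key general observation is that if $u$ is ordinally concave, then $\argmax\{u(Y) : Y \subseteq X\}$ is automatically a g-matroid: applying ordinal concavity to the triple $(S, T, e)$ yields some $e' \in (T \setminus S) \cup \{\emptyset\}$ satisfying one of the three defining alternatives, and cases (i) $u(S) < u(S - e + e')$ and (ii) $u(T) < u(T + e - e')$ each contradict the maximality of $u(S)$ and $u(T)$ on $2^X$ (since $S - e + e', T + e - e' \subseteq X$), so case (iii) must hold: $u(S - e + e') = u(S)$ and $u(T + e - e') = u(T)$, placing both sets in $\C(X)$. It therefore suffices to show that an ordinally concave rationalizer of $\C$ exists; the natural candidate is the explicit $u$ from the proof of \Cref{thm:rationalize}, namely $u(Y) = |\tau(Y)|$ if $Y \in \C(Y)$ and $u(Y) = |\tau(Y)| - 1$ otherwise.

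I would verify ordinal concavity of this $u$ by case analysis on $A, B \subseteq I$ and $i \in A \setminus B$, splitting on whether $A$ and $B$ are idempotent ($A \in \C(A)$ and $B \in \C(B)$) and on the relation between $\tau(A)$ and $\tau(B)$. The easier subcases (distinct $\tau$-values, or $A$ or $B$ failing idempotence) are handled via monotonicity of $\tau$ (\Cref{prop:closure}) together with \Cref{lm:taurep} and \Cref{lm:interval}, typically with $j = \emptyset$. The main obstacle is the core case $\tau(A) = \tau(B)$ with $A, B \in \C(\tau(A))$: here I would select a UM weight $w$ with $C^w(\tau(A)) = A$ via \Cref{lm:chw}, invoke Yokote's theorem (\Cref{thm:fPI}) to obtain an ordinally concave rationalizer of the PI choice function $C^w$, and combine PI of $C^w$ on the decomposition $A \cup B = (A - i) \cup \{i\} \cup (B \setminus A)$ with the ordinal concavity of the Yokote rationalizer to extract the witness $j$, translating the exchange back to the required equalities $|\tau(A - i + j)| = |\tau(A)|$ and $|\tau(B + i - j)| = |\tau(B)|$ via \Cref{lm:tau=}.
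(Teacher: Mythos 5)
There is a genuine gap, and it sits exactly where your sketch becomes vague. Your framing is fine up to a point: the reduction to $X=S\cup T$ via \Cref{lm:restrict} is valid, and the observation that $\argmax\{u(Y)\mid Y\subseteq X\}$ is a g-matroid whenever $u$ is ordinally concave is correct (the paper itself cites this fact at the start of \Cref{subsec:ordinal_concave}). The problem is the claim that the explicit rationalizer $u(Y)=|\tau(Y)|$ or $|\tau(Y)|-1$ is ordinally concave. Unwind your ``core case'' $\tau(A)=\tau(B)$ with $A,B\in\C(\tau(A))$ and set $Z=\tau(A)$: since $A-i+j,\,B+i-j\subseteq A\cup B\subseteq Z$, monotonicity of $\tau$ forces $u(A-i+j)\le|Z|=u(A)$ and $u(B+i-j)\le|Z|=u(B)$, so alternatives (i) and (ii) of ordinal concavity are impossible, and alternative (iii) says precisely that $A-i+j$ and $B+i-j$ both attain value $|Z|$, i.e., both lie in $\C(Z)$. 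By \Cref{lm:taurep} this is exactly the g-matroid exchange property for $\C(Z)$, and since $\C(X)=\C(\tau(X))\cap 2^X$ with exchanges staying inside $S\cup T$, establishing it for closed sets $Z$ is equivalent to the theorem itself. So your core case is the theorem restated, and the argument is circular unless that case is proved by some independent means.

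The means you propose do not supply one. The Yokote rationalizer obtained from \Cref{thm:fPI} rationalizes a single tie-broken choice function $C^w$; its ordinal concavity constrains only the unique maximizers $C^w(\cdot)$ and says nothing about which \emph{other} sets belong to $\C(\cdot)$, which is what you must certify for $A-i+j$ and $B+i-j$. ``Combine PI of $C^w$ with the ordinal concavity of the Yokote rationalizer to extract the witness $j$'' is a hope, not a step. A further warning sign: the paper explicitly states in \Cref{subsec:ordinal_concave} that whether every PI choice correspondence admits an ordinally concave rationalizer is an open question, and your plan, if completed, would resolve it affirmatively. The paper's actual proof avoids this route entirely: it assumes the exchange fails in one of the two one-sided forms of the g-matroid characterization, constructs two tailored UM weights $w,w'$, and derives a contradiction from a chain of identities forced by PI and substitutability of $C^{w'}$ (e.g., $x\in Z=C^{w'}(Z+e)=\dots=C^{w'}(Z+e-x)\not\ni x$). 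If you want to salvage your approach, you would need a direct argument for the core case along those lines, at which point the detour through ordinal concavity of $u$ adds nothing.
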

\begin{proof}
Suppose to the contrary that $\C(X)$ is not a g-matroid.
Then, there exist $S,T\in\C(X)$ and $e\in S\setminus T$ such that
(i) $S-e+e'\not\in\C(X)$ for all $e'\in (T\setminus S)\cup\{\emptyset\}$, or 
(ii) $T+e-e'\not\in\C(X)$ for all $e'\in (T\setminus S)\cup\{\emptyset\}$.
We consider two cases separately.
We remark that, for any $Y$ such that $Y\supseteq S$ or $Y\supseteq T$, it follows that $\C(Y)=\C(X)\cap 2^Y$ by \Cref{lm:restrict}.

\begin{figure}[thbp]
\begin{minipage}{.5\textwidth}
\centering
\begin{tikzpicture}[xscale=1.8,yscale=1.2]
\draw[thick,myblue,fill=bgblue] (-0.2,0) circle ({sqrt(1.04)});
\draw[thick] (-1.7,1.5) rectangle (1.7,-1.3);
\draw[thick] (-0.5,0) circle ({sqrt(5/4)});
\draw[thick] (0.5,0) circle ({sqrt(5/4)});
\node[dot,label=left:$e$] at (-1.3,-0.3) {};
\node[dot,label=above right:$x$] at (0.7,0.3) {};
\node[dot] at (0.72,-0.2) {};
\node at (-1.4,1) {$S$};
\node at ( 1.4,1) {$T$};
\node[myblue] at (0,0.5) {$Z$};
\node at (-1.85,1.2) {$X$};
\end{tikzpicture}
\caption{Case (i)}\label{fig:case-i}
\end{minipage}%
\begin{minipage}{.5\textwidth}
\centering
\begin{tikzpicture}[xscale=1.8,yscale=1.2]
\draw[thick,myblue,fill=bgblue] (0.2,0) circle ({sqrt(1.04)});
\draw[thick] (-1.7,1.5) rectangle (1.7,-1.3);
\draw[thick] (-0.5,0) circle ({sqrt(5/4)});
\draw[thick] (0.5,0) circle ({sqrt(5/4)});
\node[dot,label=left:$e$] at (-0.75,-0.1) {};
\node[dot,label=below right:$i_{q}$] at (1.15,0.65) {};
\node[dot,label=below right:$i_{r+1}$] at (1.25,-0.6) {};
\node at (-1.4,1) {$S$};
\node at ( 1.4,1) {$T$};
\node[myblue] at (0,0.5) {$Z$};
\node at (-1.85,1.2) {$X$};
\end{tikzpicture}
\caption{Case (ii)}\label{fig:case-ii}
\end{minipage}
\end{figure}

\noindent\textbf{Case (i).}
We first consider the case where $S-e+e'\not\in\C(X)$ for all $e'\in (T\setminus S)\cup\{\emptyset\}$.
Let us define the following sets: $S-e=\{i_1,\dots,i_{p-1}\}$,
$e=i_{p}$, and
$I\setminus S=\{i_{p+1},\dots,i_{n}\}$,
where $1\le p< n$.
Next, we define weight functions $w,w'\colon I\to\mathbb{R}$ as follows:
\begin{align}
w(i_j)=
\begin{cases}
2^{-j}  & \text{if }j\le p,\\
-2^{-j} & \text{if }j\ge p+1,
\end{cases}
\quad\text{and}\quad
w'(i_j)=
\begin{cases}
2^{-j}  & \text{if }j\le p-1,\\
-2^{-j} & \text{if }j\ge p.
\end{cases}
\end{align}
Let $Z=C^w(S\cup T-e)$ (see \Cref{fig:case-i}). 
Since $w(i)=w'(i)$ for all $i\in I\setminus\{e\}$, it follows that $Z=C^{w'}(S\cup T-e)$.
Additionally, we have $C^w(X)=S$. 
As $S\cup T-e\subseteq X$ and $C^w$ is substitutable, we have
\begin{align}
S-e
=S\cap(S\cup T-e)
=C^w(X)\cap(S\cup T-e)
\subseteq C^w(S\cup T-e)=Z.
\end{align}
Moreover, we observe that $|Z\setminus S|\ge 2$. If this were not the case (i.e., if $|Z\setminus S|\le 1$), then $Z=S-e+e'$ for some $e'\in (T\setminus S)\cup\{\emptyset\}$, which contradicts the assumption.
Note that, for every $Z'$ such that $S-e\subseteq Z'\subsetneq Z$, we have $Z'\not\in\C(X)$ by the definition of $Z$.
Let $x$ be an element in $Z\setminus S$.

We now demonstrate that $C^{w'}(S+x)=S$.
Note that $S\in \C(S+x)$ by $\C(S+x)=\C(X)\cap 2^{S+x}$.
The set $C^{w'}(S+x)$ must include $S-e$; otherwise we have $w'(S)>w'(C^{w'}(S+x))$, which is a contradiction.
Thus, the possible candidates for $C^{w'}(S+x)$ are $S-e$, $S-e+x$, $S+x$, and $S$.
However, $C^{w'}(S+x)$ cannot be equal to $S+x$ as $w'(S)>w'(S+x)$.
Furthermore, it cannot be equal to either $S-e$ or $S-e+x$, since both sets are not in $\C(S+x)=\C(X)\cap 2^{S+x}$. 
Therefore, the only possibility is that $C^{w'}(S+x)=S$.

Next, we show that $C^{w'}(Z+e)=Z$.
Note that $Z\in \C(Z+e)$ since $\C(Z+e)=\C(X)\cap 2^{Z+e}$.
The set $C^{w'}(Z+e)$ must include $S-e$; otherwise $w'(Z)>w'(C^{w'}(Z+e))$.
Additionally, it cannot include $S$, as $w'(Z)>w'(S')$ for all $S'\supseteq S$.
Hence, $S-e\subseteq C^{w'}(Z+e)\subseteq Z$.
Moreover, no set $Z'$ with $S-e\subseteq Z'\subsetneq Z$ belongs to $\C(Z+e)=\C(X)\cap 2^{Z+e}$. 
Therefore, it follows that $C^{w'}(Z+e)=Z$. 

By combining $C^{w'}(S+x)=S$ and $C^{w'}(Z+e)=Z$, we get
\begin{align}
x\in Z
&=C^{w'}(Z+e)=C^{w'}((S+x)\cup(Z-x))=C^{w'}(C^{w'}(S+x)\cup(Z-x))\\
&=C^{w'}(S\cup(Z-x))=C^{w'}(Z+e-x)\not\ni x.
\end{align}
This is a contradiction.

\addvspace{3mm}
\noindent\textbf{Case (ii).}
Next, we consider the case where $T+e-e'\not\in\C(X)$ for all $e'\in (T\setminus S)\cup\{\emptyset\}$.
Let us define the following sets: $S-e=\{i_1,\dots,i_{p-1}\}$,
$e=i_{p}$, 
$T\setminus S=\{i_{p+1},\dots,i_{q}\}$, and
$I\setminus (S\cup T)=\{i_{q+1},\dots,i_{n}\}$,
where $1\le p<q\le n$.
Define a weight function $w\colon I\to\mathbb{R}$ as follows:
\begin{align}
w(i_j)=2^{-j} \quad (\forall i_j\in I).
\end{align}
Let $Z=C^w(T+e)$ (see \Cref{fig:case-ii}). 
Since $S\in \C(X)\cap 2^{S\cup T}=\C(S\cup T)$, it follows that $C^w(S\cup T)\supseteq S$. By the substitutability of $C^w$, we have
\begin{align}
(T\cap S)+e
=S\cap (T+e)
\subseteq C^w(S\cup T)\cap (T+e)
\subseteq C^w(T+e)=Z.
\end{align}
Moreover, we observe that $|T\setminus Z|\ge 2$. If this were not the case (i.e., if $|T\setminus Z|\le 1$), then $Z=T+e-e'$ for some $e'\in(T\setminus S)\cup\{\emptyset\}$, which contradicts the assumption.
Note that, for every $Z'$ such that $Z\subsetneq Z'\subseteq T+e$, we have $Z'\not\in\C(X)$ by the definition of $Z$.
Let us set $T\setminus Z=\{i_{r+1},\dots,i_q\}$. 
Then, $Z\setminus S=(T\setminus S)\cap Z=\{i_{p+1},\dots,i_r\}$ and $r+1<q$.
Define another UM weight $w'\colon I\to\mathbb{R}$ as follows:
\begin{align}
w'(i_j)=
\begin{cases}
(n+1)\cdot 2^{n-j}   & \text{if }j<p\text{ or }p+1\le j\le r,\\
1+(1/2)^{j} & \text{if }j=p\text{ or }j\ge r+1,
\end{cases}
\quad(\forall i_j\in I).
\end{align}

We now demonstrate that $C^{w'}(T+e)=T$.
Note that $T\in\C(T+e)$ by $\C(T+e)=\C(X)\cap 2^{T+e}$.
The set $C^{w'}(T+e)$ must include $Z-e$; otherwise we have $w'(T)>w'(C^{w'}(T+e))$, which is a contradiction.
Additionally, by the choice of $Z$, we have $Z'\notin\C(T+e)$ for all $Z'$ such that $Z\subsetneq Z'\subseteq T+e$.
Thus, the possible candidates for $C^{w'}(T+e)$ are only $Z$ and $T$.
By the definition of $w'$, we have 
$$w'(T)\ge w'(Z)-w'(e)+w'(i_{q})+w'(i_{r+1})>w'(Z).$$
Therefore, it follows that $C^{w'}(T+e)=T$.

Next, we show that $C^{w'}(Z+i_{q})=Z$.
Note that $Z\in\C(Z+i_q)$ by $\C(Z+i_q)=\C(X)\cap 2^{Z+i_q}$.
The set $C^{w'}(Z+i_{q})$ must include $Z-e$; otherwise we have $w'(Z)>w'(C^{w'}(Z+i_q))$, which is a contradiction.
Hence, the possible candidates for $C^{w'}(Z+i_{q})$ are $Z$, $Z-e$, $Z-e+i_q$, and $Z+i_q$.
It is straightforward to verify that 
$w'(Z)>w'(Z-e)$,
$w'(Z)>w'(Z-e+i_q)$, and
$Z+i_q\not\in\C(T+e)$.
Thus, the only possibility is $C^{w'}(Z+i_{q})=Z$.

Together with $C^{w'}(T+e)=T$ and $C^{w'}(Z+i_{q})=Z$, we obtain
\begin{align}
i_q\in T
&=C^{w'}(T+e)
=C^{w'}((Z+i_q)\cup (T-i_q))
=C^{w'}(C^{w'}(Z+i_q)\cup (T-i_q))\\
&=C^{w'}(Z\cup (T-i_q))
=C^{w'}(T+e-i_q)\not\ni i_q,
\end{align}
which is a contradiction.
\end{proof}

Note that substitutability and IRC are insufficient to obtain \Cref{thm:gmatroid}.
For example, the choice correspondence $\C_{4}$ in \Cref{tab:exChoice} does not induce a g-matroid, as $\C_4(\{a,b\})=\{\emptyset,\{a,b\}\}$, while it satisfies substitutability and IRC.

This theorem implies that for any positive UM weight $w\colon I\to\mathbb{R}_{++}$, the choice $C^w(X)$ is the maximum size in $\C(X)$ by a property of g-matroid. 
\begin{coro}\label{coro:maximal}
Let $\C$ be a PI choice correspondence.
Then, for any positive UM weight $w\colon I\to\mathbb{R}_{++}$, we have $|C^w(X)|=\max\{|Y|\mid Y\in\C(X)\}$.
\end{coro}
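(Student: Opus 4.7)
The plan is to combine Theorem~\ref{thm:gmatroid}, which asserts that $\C(X)$ is a g-matroid, with Proposition~\ref{prop:gmatroid_nonmaximal}, which guarantees that any non-maximum-size member of a g-matroid can be extended by a single element. This reduces the corollary to a short contradiction argument exploiting positivity of $w$.

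More concretely, first I would recall that $C^w(X)$ is by definition the unique maximizer of $w$ over $\C(X)$, and that $\C(X)\subseteq 2^X$, so every set in $\C(X)$ is a subset of $X$. Next, I would assume for contradiction that $|C^w(X)|<\max\{|Y|\mid Y\in\C(X)\}$. By Theorem~\ref{thm:gmatroid}, the family $\C(X)$ is a g-matroid, and Proposition~\ref{prop:gmatroid_nonmaximal} applied to $\cF=\C(X)$ and the non-maximum set $C^w(X)\in\C(X)$ then yields some $i\in I$ with $C^w(X)+i\in\C(X)$. Since $C^w(X)+i\subseteq X$, we in particular have $i\in X$.

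Finally, because $w$ takes values in $\mathbb{R}_{++}$, positivity gives
\begin{equation*}
w(C^w(X)+i)=w(C^w(X))+w(i)>w(C^w(X)),
\end{equation*}
which contradicts the definition of $C^w(X)$ as the unique maximizer of $w$ over $\C(X)$. Hence the size of $C^w(X)$ must attain $\max\{|Y|\mid Y\in\C(X)\}$, which proves the corollary.

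I do not anticipate any real obstacle here: all the structural work is done by Theorem~\ref{thm:gmatroid}, and the extension property of g-matroids from Proposition~\ref{prop:gmatroid_nonmaximal} exactly matches what one needs in order to convert positivity of $w$ into a size statement. The only thing to double-check is that the element $i$ produced by Proposition~\ref{prop:gmatroid_nonmaximal} indeed lies in $X$, which follows automatically from $C^w(X)+i\in\C(X)\subseteq 2^X$.
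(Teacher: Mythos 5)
Your proof is correct and follows essentially the same route as the paper: apply Theorem~\ref{thm:gmatroid} together with Proposition~\ref{prop:gmatroid_nonmaximal} to extend a non-maximum $C^w(X)$ by one element inside $\C(X)$, then use positivity of $w$ to contradict maximality of $C^w(X)$. The extra check that the added element lies in $X$ is a harmless refinement the paper leaves implicit.
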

\begin{proof}
Let $X^*=C^w(X)$ and suppose that $|X^*|<\max\{|Y|\mid Y\in\C(X)\}$.
As $\C(X)$ is a g-matroid, there is an element $i\in I$ such that $X^*+i\in\C(X)$ by \Cref{prop:gmatroid_nonmaximal}.
This leads to a contradiction as $w(X^*+i)>w(X^*)$.
\end{proof}


In addition, for any UM weight $w\colon I\to\mathbb{R}$, we can construct a membership oracle for $C^w$. 
\begin{coro}\label{coro:reduceCtoF}
For any PI choice correspondence $\C$ and any UM weight $w$, 
we can answer a membership query for $C^w$ in polynomial time by using the membership oracle for $\C$.
\end{coro}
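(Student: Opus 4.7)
The plan is to reduce the membership query for $C^w$ to a local-optimality test over the family $\C(X)$, which is a g-matroid by \Cref{thm:gmatroid}.

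Given a query of the form ``does $Y = C^w(X)$?'' for some $Y \subseteq X$, I would first call the membership oracle for $\C$ to check whether $Y \in \C(X)$; if not, answer ``no''. Otherwise, I would iterate over all pairs $(e,e')$ with $e \in Y \cup \{\emptyset\}$, $e' \in (X \setminus Y) \cup \{\emptyset\}$, and $(e,e') \neq (\emptyset,\emptyset)$, and for each such pair query the oracle on $Y - e + e'$; if this set lies in $\C(X)$ and $w(Y - e + e') \geq w(Y)$, answer ``no''. If no improving single-element exchange is detected, answer ``yes''. This amounts to $O(|X|^2)$ constant-time oracle calls plus the same number of weight comparisons, hence runs in polynomial time.

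Correctness rests on a local-to-global optimality principle for linear maximization over a g-matroid, which I would prove by a symmetric-difference minimality argument. Suppose $Y \in \C(X)$ passes every local test but some $Z \in \C(X)$ satisfies $w(Z) > w(Y)$; pick such $Z$ minimizing $|Y \bigtriangleup Z|$. If $Z \setminus Y \neq \emptyset$, choose $e \in Z \setminus Y$ and apply the g-matroid exchange axiom to $(Z, Y, e)$ to obtain $e' \in (Y \setminus Z) \cup \{\emptyset\}$ with $Y + e - e' \in \C(X)$ and $Z - e + e' \in \C(X)$. Local optimality applied to the exchange $Y + e - e'$ forces $w(e) \leq w(e')$ (with the convention $w(\emptyset) = 0$), so that $w(Z - e + e') = w(Z) + w(e') - w(e) \geq w(Z) > w(Y)$, while $|Y \bigtriangleup (Z - e + e')| < |Y \bigtriangleup Z|$, contradicting minimality. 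The symmetric case $Z \subsetneq Y$ is handled analogously, using an $e \in Y \setminus Z$ and the forced choice $e' = \emptyset$ from $(Z \setminus Y) \cup \{\emptyset\} = \{\emptyset\}$. Hence $Y$ is globally $w$-optimal in $\C(X)$, and the UM property of $w$ ensures the local inequalities are automatically strict, so the algorithm identifies the unique maximizer $C^w(X)$.

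The main obstacle is the local-to-global step above; however, it follows directly from the g-matroid exchange axiom and is essentially the standard fact that M${}^\natural$-concave maximization enjoys local-to-global optimality, so no deeper machinery is needed.
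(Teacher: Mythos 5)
Your proposal is correct and follows essentially the same route as the paper: both reduce the membership query for $C^w$ to (i) one call checking $Y\in\C(X)$ and (ii) an $O(|X|^2)$ local-optimality test over single-element exchanges within the g-matroid $\C(X)$ guaranteed by \Cref{thm:gmatroid}. The only difference is that the paper cites the local-to-global optimality criterion for linear maximization over M${}^\natural$-convex sets from \citet{murota2003}, whereas you supply a correct self-contained proof of that criterion via the symmetric-difference minimality argument.
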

\begin{proof}
Let $X,Y\in 2^I$.
If $Y\notin\C(X)$, then clearly $C^w(X)\ne Y$.
If $Y\in\C(X)$, then $C^w(X)=Y$ (i.e., $w(Y)=\max\{w(X')\mid X'\in\C(X)\}$) if and only if $w(Y)\ge w(Y+u-v)$ for all $u,v\in X\cup\{\emptyset\}$ such that $Y+u-v\in\C(X)$~\cite[Theorem 6.26]{murota2003}.
Since there are at most $O(|X|^2)$ such pairs $(u,v)$, this condition can be verified in $O(|X|^2)$ time.
Consequently, a membership query for $C^w$ can be answered in polynomial time.
\end{proof}

As we can construct a choice oracle from a membership oracle for PI choice functions (see \Cref{sec:computation}), we obtain the following theorem.
\begin{thm}\label{thm:computeC}
Suppose a choice correspondence $\C\colon 2^I\rightrightarrows 2^I$ is accessible via a membership oracle.
Then, for any $X\in 2^I$ and any UM weight $w\colon I\to \mathbb{R}$, we can compute $C^w(X)$ in polynomial time.
\end{thm}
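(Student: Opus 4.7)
The plan is to assemble three ingredients, each already established or promised elsewhere in the paper, and then chain them together. First, I would observe that, by \Cref{def:PI}, since $\C$ is PI the choice function $C^w$ associated with the UM weight $w$ is itself a PI choice function. Hence any algorithmic result about PI choice functions applies to $C^w$. Second, \Cref{coro:reduceCtoF} already gives a polynomial-time membership oracle for $C^w$ that is implemented using only the given membership oracle for $\C$: for any $X,Y\in 2^I$ we can decide in polynomial time whether $C^w(X)=Y$.

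The third step is to invoke the membership-to-choice reduction for PI choice functions, the one that is worked out in \Cref{sec:computation}: given membership-oracle access to a PI choice function $C$, one can compute $C(X)$ in polynomial time for any $X\in 2^I$. Applying this reduction to $C^w$, using the membership oracle for $C^w$ constructed in the second step, yields a polynomial-time algorithm that outputs $C^w(X)$ for any given $X$, which is exactly the claim.

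The overall running time stays polynomial because each membership query to $C^w$ resolves to $O(|X|^2)$ membership queries to $\C$ (by the proof of \Cref{coro:reduceCtoF}), and the reduction in \Cref{sec:computation} makes only polynomially many such queries to $C^w$. The main obstacle is clearly the third step, the membership-to-choice reduction itself, but since that is developed independently in \Cref{sec:computation}, the proof of \Cref{thm:computeC} is just the explicit composition of the three pieces above and a remark on total query complexity.
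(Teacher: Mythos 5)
Your proposal is correct and follows exactly the paper's intended argument: the paper derives \Cref{thm:computeC} by combining \Cref{coro:reduceCtoF} (a polynomial-time membership oracle for $C^w$ built from the membership oracle for $\C$) with the membership-to-choice reduction for PI choice functions in \Cref{sec:computation} (\Cref{thm:computeF}). Your accounting of the query complexity ($O(|X|^2)$ oracle calls to $\C$ per membership query to $C^w$, and polynomially many such queries in the reduction) matches the paper's bounds.
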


Moreover, if the choice correspondence $\C$ is both PI and LAD, then $C^w(X)$  can be computed more directly and efficiently for every $X\in 2^I$.
\begin{prop}\label{prop:PILADcomp}
Let $\C$ be a choice correspondence that satisfies PI and LAD.
Suppose that we are given $X\in 2^I$ and a UM weight $w\colon I\to\mathbb{R}$. 
Then, we can compute $C^w(X)$ in $O(|X|^2)$ time.
\end{prop}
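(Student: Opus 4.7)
The plan is to process the elements of $X$ in decreasing order of weight and build the answer incrementally. Sort $X$ as $i_1,\dots,i_n$ with $w(i_1)>w(i_2)>\dots>w(i_n)$, set $T_0 = \emptyset$, and define $T_k = C^w(T_{k-1}+i_k)$ for $k=1,\dots,n$. Because $\C$ is PI, the induced choice function $C^w$ is PI, and iterating the identity $C^w(A\cup B) = C^w(C^w(A)\cup B)$ together with idempotence yields $T_n = C^w(X)$ by induction on $k$. It therefore suffices to show how to compute each $T_k$ from $T_{k-1}$ in $O(k)$ time, using the $O(1)$-time membership oracle for $\C$.

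The key restriction from LAD is that $|T_k|\in\{|T_{k-1}|,|T_{k-1}|+1\}$: since $C^w$ is LAD and $T_{k-1}=C^w(T_{k-1})$ by idempotence, we have $|T_{k-1}|\le |C^w(T_{k-1}+i_k)|\le|T_{k-1}+i_k|=|T_{k-1}|+1$. Hence $T_k$ is a subset of $T_{k-1}+i_k$ of size either $|T_{k-1}|$ or $|T_{k-1}|+1$, so it belongs to the explicit list of $|T_{k-1}|+2$ candidates
\[
\mathcal{A}_k \;=\; \{T_{k-1},\ T_{k-1}+i_k\}\cup\{T_{k-1}+i_k-j : j\in T_{k-1}\}.
\]
Among these, $T_k$ is the unique element of maximum $w$-weight that lies in $\C(T_{k-1}+i_k)$; at least one candidate is feasible because $C^w(T_{k-1}+i_k)$ itself lies in $\mathcal{A}_k$.

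To locate $T_k$ in $O(k)$ time, I enumerate $\mathcal{A}_k$ in decreasing order of weight and return the first candidate that the membership oracle confirms lies in $\C(T_{k-1}+i_k)$. Because $w(j)>w(i_k)$ for every $j\in T_{k-1}$, the candidate $T_{k-1}$ strictly beats every $T_{k-1}+i_k-j$ in weight, while the pair $(T_{k-1},\,T_{k-1}+i_k)$ and each pair $(T_{k-1}+i_k,\,T_{k-1}+i_k-j)$ compare purely by the signs of $w(i_k)$ and $w(j)$, respectively. Keeping $T_{k-1}$ as a list already sorted by weight (so positive-weight elements precede negative-weight ones), the correctly-ordered candidate list can be produced by a direct linear merge in $O(k)$ time; the subsequent at-most-$(k+1)$ membership queries then contribute a further $O(k)$. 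The most delicate part, albeit purely mechanical, is writing out this merge in all sign cases. Combined with the initial $O(|X|\log|X|)$ sort of $X$, the total running time is $O(|X|^2)$.
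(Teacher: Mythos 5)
Your proposal is correct and follows essentially the same route as the paper's proof: process the elements of $X$ one at a time, use PI to reduce $C^w(X_k)$ to $C^w(T_{k-1}+i_k)$, use LAD to shrink the search to the $O(k)$ candidates $\{T_{k-1},\,T_{k-1}+i_k\}\cup\{T_{k-1}+i_k-j \mid j\in T_{k-1}\}$, and select the maximum-weight candidate that the membership oracle confirms. The extra pre-sort of $X$ by weight and the early-stopping enumeration are cosmetic (the paper processes elements in arbitrary order and simply scans all candidates), and your explicit restriction of the $\argmax$ to candidates lying in $\C(T_{k-1}+i_k)$ is a welcome clarification of a step the paper's displayed formula leaves implicit.
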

\begin{proof}
Let $X=\{i_1,i_2,\dots,i_p\}$ and $X_j=\{i_1,i_2,\dots,i_j\}$ for each $j\in\{0,1,\dots,p\}$.
We compute $Y_j$ iteratively as follows.
Set $Y_0=\emptyset$.
For $j=1,2,\dots,p$, define the candidate set
\begin{align}
\mathcal{A}_j\coloneqq \{Y_{j-1},\ Y_{j-1}+i_j\}\cup\{Y_{j-1}-i+i_j\mid i\in Y_{j-1}\}.
\end{align}
Then, choose $Y_j$ such that 
\begin{align}
\{Y_j\}=\argmax\{w(Y)\mid Y\in\mathcal{A}_j\}.
\end{align}
Note that such a unique maximizer exists since $w$ is a UM weight.

We claim that $Y_j=C^w(X_j)$ for every $j$. 
The claim holds for $j=0$ because $C^w(X_0)=C^w(\emptyset)=\emptyset=Y_0$.
Now, assume by induction that $Y_{j-1}=C^w(X_{j-1})$ for some index $j>0$.
By PI of $C^w$, we have $C^w(X_j)=C^w(C^w(X_{j-1})\cup\{i_j\})=C^w(Y_{j-1}+i_j)$.
Moreover, by LAD of $C^w$, we have $|C^w(X_{j})|\ge |C^w(X_{j-1})|=|Y_{j-1}|$.
Thus, $C^w(X_j)$ is either equal to $Y_{j-1}$, $Y_{j-1}+i_j$, or $Y_{j-1}-i+i_j$ for some $i\in Y_{j-1}$.
By our construction, $Y_j$ is chosen from the candidate set $\mathcal{A}_j$ to maximize $w$ among those candidates. Hence, we conclude that $C^w(X_j)=Y_j$.

Therefore, $Y_p=C^w(X_p)=C^w(X)$. 
Note that the iterative process involves $p$ steps.
In each step, the candidate set $\mathcal{A}_j$ contains at most $2+|Y_{j-1}|~(\le p+1)$ candidates. Hence, each iteration requires only $O(p)$ basic operations and membership oracle calls.
Consequently, the overall computational time is at most $O(p^2)=O(|X|^2)$.
\end{proof}

\begin{remark}\label{rem:exp}
Even if a choice correspondence $\C$ can be represented as the union of PI and LAD choice functions, computing $C^w(I)$ for some UM weight $w$ requires an exponential number of queries. Note that, by \Cref{lm:PIunion}, such a choice correspondence also satisfies substitutability and IRC.
To illustrate this, let $I=\{i_1,\dots,i_n\}$, $k=\lfloor n/2\rfloor$, and let $X^*\subseteq I$ be a randomly selected set of size $|X^*|=k+1$.
Additionally, let $\cF=\{X\subseteq I\mid |X|\le k\}\cup\{X^*\}$.
Now, define the choice correspondence $\C$ by $\C(X)=\{X'\subseteq X\mid X'\in\cF\}$.
Note that, by \Cref{prop:general-ub}, $\C$ can be represented as the union of PI and LAD choice functions.
With the UM weight function $w$ specified as $w(i_j)=1+(1/2)^j$ for each $i_j\in I$, the choice $C^w(I)$ is equal to $X^*$.
When querying the membership oracle with a set $X$ of size $k+1$, the oracle reveals only whether $X = X^*$. Since there are exponentially many subsets of size $k+1$, identifying $X^*$ requires an exponential number of queries in expectation.
\end{remark}

\subsection{PI and Ordinal concavity}\label{subsec:ordinal_concave}

It is known that a choice correspondence associated with an ordinally concave function also has the g-matroid property \citep{fujishige2024note}.
Thus, it is natural to examine the relationship between the PI condition and ordinal concavity. The following result shows that a choice correspondence associated with an ordinally concave function is PI, and that size-restricted concavity ensures it is LAD. In particular, this result guarantees that a wide class of choice correspondences arising in real-life applications satisfy both PI and LAD.

\begin{thm}\label{thm:oc-PI}
Any choice correspondence associated with an ordinally concave function satisfies PI. 
Furthermore, any choice correspondence associated with a function that satisfies both ordinal concavity and size-restricted concavity satisfies both PI and LAD.
\end{thm}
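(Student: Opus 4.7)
The plan is to establish both PI and LAD of $\C$ by direct verification: for every UM weight $w$, show that $C^w$ is substitutable and satisfies IRC (hence PI by \citep{aizerman1981general}), and that $C^w$ satisfies LAD when $u$ is additionally size-restricted concave. Recall that $C^w(X)$ is the unique lexicographic maximizer on $2^X$, first by $u$ and then, among ties, by $w$.

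The IRC step needs no concavity assumption. Suppose $C^w(X)\subseteq X'\subseteq X$. Then the $u$-maximum on $2^{X'}$ coincides with the $u$-maximum on $2^X$ (since it is sandwiched between $u(C^w(X))$ and itself), so $\C(X')\subseteq\C(X)$ and $C^w(X)\in\C(X')$. Because $w$ is UM, the unique $w$-maximizer of $\C(X)$ remains the unique $w$-maximizer of the subfamily $\C(X')$, giving $C^w(X')=C^w(X)$.

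For substitutability, assume toward contradiction that some $i\in(A\cap Y)\setminus B$ exists for $Y\subseteq X$, $A=C^w(X)$, $B=C^w(Y)$. Apply ordinal concavity of $u$ to the triple $(A,B,i)$, yielding $j\in(B\setminus A)\cup\{\emptyset\}$ satisfying one of the clauses (i)--(iii). Since $j\in B\cup\{\emptyset\}\subseteq Y\cup\{\emptyset\}$ and $i\in A\cap Y$, the swapped sets satisfy $A-i+j\subseteq X$ and $B+i-j\subseteq Y$; hence the strict clauses (i) and (ii) contradict the $u$-maximality of $A$ on $2^X$ and of $B$ on $2^Y$, respectively. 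In the equality clause (iii), both $A-i+j\in\C(X)$ and $B+i-j\in\C(Y)$. Since $i\ne j$ implies both $A\ne A-i+j$ and $B\ne B+i-j$, the strict $w$-maximality of $A$ and $B$ under UM yields $w(A)>w(A-i+j)$ and $w(B)>w(B+i-j)$, which simplify to $w(i)>w(j)$ and $w(j)>w(i)$ (with $w(\emptyset)=0$)---a contradiction.

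The LAD argument is analogous, using size-restricted concavity in place of ordinal concavity: if $|C^w(Y)|>|C^w(X)|$ for some $Y\subseteq X$, size-restricted concavity applied to $(C^w(Y),C^w(X))$ supplies $i\in C^w(Y)\setminus C^w(X)$ for which the strict clauses again contradict $u$-maximality, while the equality clause yields $w(i)>0$ and $w(i)<0$ via strict $w$-maximality under UM, a contradiction. The main delicate point throughout is the bookkeeping of set containments $A-i+j\subseteq X$ and $B+i-j\subseteq Y$ together with the verifications $A\ne A-i+j$ and $B\ne B+i-j$, both of which follow directly from $i\in A\cap Y$ and $j\in(B\setminus A)\cup\{\emptyset\}$ with $i\ne j$.
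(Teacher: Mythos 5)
Your proof is correct, but it takes a genuinely different route from the paper's. The paper perturbs the utility function: it sets $u^w(X)=u(X)+\delta\cdot w(X)$ for a sufficiently small $\delta>0$, shows that $u^w$ inherits ordinal concavity (and size-restricted concavity) from $u$ because the UM property of $w$ resolves the equality clause (iii) into a strict inequality, observes that $C^w$ is exactly the choice function rationalized by $u^w$, and then invokes \Cref{thm:fPI} as a black box to conclude that $C^w$ is PI (respectively PI and LAD). You instead verify SUB, IRC, and LAD for $C^w$ directly: you apply ordinal concavity to the pair $(C^w(X),C^w(Y))$ and a violating element $i$, dispose of the strict clauses by $u$-maximality on $2^X$ and $2^Y$, and dispose of the equality clause by the two incompatible strict $w$-comparisons that UM forces; you then conclude PI from the equivalence PI $\Leftrightarrow$ SUB $+$ IRC recorded in the preliminaries. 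Your bookkeeping is sound: the containments $A-i+j\subseteq X$ and $B+i-j\subseteq Y$ do follow from $j\in B\cup\{\emptyset\}\subseteq Y\cup\{\emptyset\}$ and $i\in A\cap Y$, the inequalities $A\ne A-i+j$ and $B\ne B+i-j$ follow from $i\in A\setminus B$ and $j\notin A$, and the IRC step indeed requires no concavity. What the paper's route buys is brevity, at the cost of relying on the rationalizability characterization of \citet{yokote2024rationalizing}; what yours buys is self-containedness (only the classical SUB-plus-IRC characterization of PI is used) and a more transparent view of exactly where the UM tie-breaking enters the argument.
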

\begin{proof}
Let $u\colon 2^I\to\Rbar$ be a utility function.
Fix a UM weight $w\colon I\to\mathbb{R}$, define a utility function $u^w\colon 2^I\to\Rbar$ as follows:
\begin{align}
u^w(X)=u(X)+\delta\cdot w(X) \quad (\forall X\in 2^I),
\end{align}
where $\delta$ is a sufficiently small positive real number such that $u(X)>u(Y)$ implies $u^w(X)>u^w(Y)$. For example, we can select
\begin{align}
\delta=\begin{cases}
1 & \text{if $u$ is a constant function},\\
\frac{\min\{|u(X)-u(Y)|\mid u(X)\ne u(Y)\}}{\max\{1,\, \max\{|w(X)|\mid X\in 2^I\}\}} & \text{otherwise}.
\end{cases}
\end{align}
Let $\C$ be the choice function associated with $u$, and let $C^w$ be its tie-breaking with respect to $w$.
It is straightforward to verify that the choice function $C^w$ is associated with $u^w$.

Suppose that $u$ is an ordinal concave function, i.e., for any $X,X'\in 2^I$ and $i\in X\setminus X'$, there exists $j\in (X'\setminus X)\cup\{\emptyset\}$ such that:
(i) $u(X)<u(X-i+j)$, (ii) $u(X')<u(X'+i-j)$, or (iii) $u(X)=u(X-i+j)$ and $u(X')=u(X'+i-j)$.
We will show that $u^w$ satisfies ordinal concavity.
In case (i), we have $u^w(X)<u^w(X-i+j)$.
Similarly, in case (ii), we have $u^w(X')<u^w(X'+i-j)$.
In case (iii), we have 
\begin{align}
u^w(X)
&=u(X)+\delta\cdot w(X)
=u(X-i+j)+\delta\cdot w(X)\\
&=u^w(X-i+j)-\delta\cdot w(X-i+j)+\delta\cdot w(X)
=u^w(X-i+j)+\delta\cdot (w(i)-w(j))
\end{align}
and 
\begin{align}
u^w(X')
&=u(X')+\delta\cdot w(X')
=u(X'+i-j)+\delta\cdot w(X')\\
&=u^w(X'+i-j)-\delta\cdot w(X'+i-j)+\delta\cdot w(X')
=u^w(X'+i-j)-\delta\cdot (w(i)-w(j)).
\end{align}
As $w$ is a UM weight, $w(i)\ne w(j)$.
Consequently, either $u^w(X)<u^w(X-i+j)$ or $u^w(X)<u^w(X-i+j)$.
Hence, $u^w$ is ordinally concave.
As a choice function associated with an ordinal concavity function is PI, it follows that $C^w$ is PI. Therefore, $\C$ satisfies PI.

Suppose that $u$ additionally satisfies size-restricted concavity, i.e., for any $X,X'\in 2^I$ with $|X|>|X'|$, there exists $i\in X\setminus X'$ such that:
(i) $u(X)<u(X-i)$, (ii) $u(X')<u(X'+i)$, or (iii) $u(X)=u(X-i)$ and $u(X')=u(X'+i)$.
In case (i), we have $u^w(X)<u^w(X-i)$, and in Case (ii), we have $u^w(X')<u^w(X'+i)$. In case (iii), we have $u^w(X)=u^w(X-i)+\delta\cdot w(i)$ and $u^w(X')=u^w(X'+i)-\delta\cdot w(i)$.
Hence, $u^w(X)>u^w(X-i)$ if $w(i)>0$ and $u^w(X')>u^w(X'+i)$ if $w(i)<0$.
Thus, $u^w$ also satisfies size-restricted concavity, and hence $C^w$ is PI and LAD.
\end{proof}

One might expect that the converse of this result holds---that is, every PI choice correspondence is rationalizable by some ordinally concave function.
However, whether this is true remains an open question. 
Even if the answer is negative, we believe that the PI condition is a crucial property for characterizing a class of choice correspondences induced by ordinally concave functions.

M${}^\natural$-concavity is a stronger condition than both ordinal concavity and size-restricted concavity.
Therefore, the above result implies that any choice correspondence associated with an M${}^\natural$-choice function satisfies both PI and LAD.
This fact is particularly useful for applications (see \Cref{sec:application}).
The relationships between these classes of choice correspondences, as well as among the classes defined by substitutability and acceptance, are summarized in \Cref{fig:choice-class}.


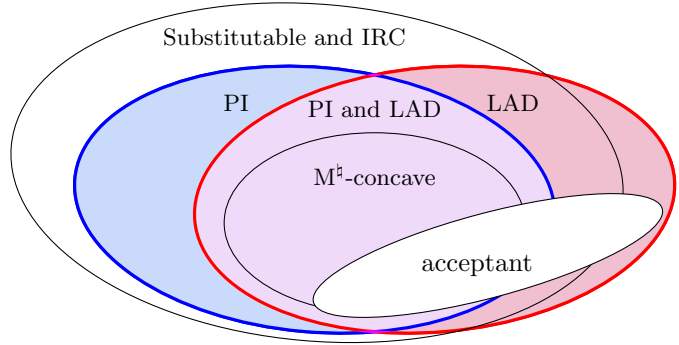
\begin{figure}[ht]
\centering
\scalebox{1}{
\begin{tikzpicture}[xscale=.8, yscale=.8, every node/.style={minimum size=6mm, inner sep=1pt}]

\begin{scope}[blend group=lighten]
\draw[very thick,rotate=-5,draw=blue,fill=bgblue] (-1,0) ellipse (4cm and 2.2cm);
\draw[very thick,rotate=5,draw=red,fill=bgred] (1,0) ellipse (4cm and 2.2cm);
\end{scope}
\begin{scope}[blend group=lighten]
\draw[very thick,rotate=-5,draw=blue] (-1,0) ellipse (4cm and 2.2cm);
\draw[very thick,rotate=5,draw=red] (1,0) ellipse (4cm and 2.2cm);
\end{scope}

\draw[rotate=-5] (-1,0.45) ellipse (5.1cm and 2.8cm);
\node at (-1.5,2.8){\small Substitutable and IRC    };

\node at (-2.3,1.7){\small PI};

\node at (2.3,1.7){\small LAD};
\node at (0,1.6){\small PI and LAD};

\draw (0,-0.3) ellipse (2.5cm and 1.5cm);
\node at (0,.5) {\small M${}^\natural$-concave};


\draw[draw=black,fill=white,fill opacity=.9,rotate=15] (1.6,-1.3) ellipse (3cm and .7cm);
\node at (1.7,-1) {acceptant};

\end{tikzpicture}}
\caption{Classes of choice correspondences}
\label{fig:choice-class}
\end{figure}

\begin{remark}
\citet{FarooqTamura2004} proved that for a utility function $u\colon 2^I\to\Rbar$, the following three conditions are equivalent:
\begin{itemize}
\item[(i)] $u$ satisfies M${}^\natural$-concavity,
\item[(ii)] for any $w\in\mathbb{R}^I$, $\C(X)\coloneqq\argmax\{u(X')+w(X')\mid X'\subseteq X\}$ satisfies (SC$^{1}_{\text{ch}}$), and
\item[(iii)] for any $w\in\mathbb{R}^I$, $\C(X)\coloneqq\argmax\{u(X')+w(X')\mid X'\subseteq X\}$ satisfies (SC$^{2}_{\text{ch}}$).
\end{itemize}
In contrast to their conditions, our property of PI only considers tie-breaking. 
Specifically, we focus on $\C(X)\coloneqq\{u(X')+w(X')\mid X'\subseteq X\}$ for $w\in\mathbb{R}^I$ where $\sum_{i\in I}|w_i|$ is sufficiently small.
\end{remark}

\section{Constrained Efficient Matching}\label{sec:constrainedPE}
In this section, we explore stable and efficient matchings under PI choice correspondences.

\subsection{Matching Model}\label{sec:matching}
A market is a tuple $(I,S,(\succ_i)_{i\in I}, (\C_s)_{s\in S})$, where $I$ is a finite set of students and $S$ is a finite set of schools.
Each student $i\in I$ has a strict preference $\succ_i$ over $S\cup\{\vn\}$, where $\vn$ means being unmatched (or an outside option).
We write $s \succeq_i s'$ if either $s \succ_i s'$ or $s=s'$ holds.

Each school $s \in S$ is endowed with a choice correspondence $\C_s \colon 2^I \rightrightarrows 2^I$. 
The set $\C_s(X)$ represents the most preferred subsets of students in $2^X$ for school $s$, for each $X\in 2^I$. 
For a school $s\in S$ and a UM weight $w$, let $C_s^w$ denote the choice function such that $\{C_s^w(X)\}=\argmax_{Y\in\C_s(X)}w(Y)$ for every $X\in 2^I$.

A \emph{matching} $\mu$ is a subset of $I\times S$ such that each student $i$ appears at most in one pair of $\mu$; that is, $|\mu\cap\{(i,s)\mid s\in S\}|\le 1$ for all $i\in I$.
For each $i\in I$, we write $\mu(i)$ to denote the school to which $i$ is assigned at $\mu$, that is, $\mu(i)=s$ if $(i,s)\in \mu$ and $\mu(i)=\vn$ if $(i,s)\not\in\mu$ for all $s\in S$.
Similarly, for each $s\in S$, we write $\mu(s)$ to denote the set of students assigned to $s$ at $\mu$, that is, $\mu(s)=\{i\in I\mid (i,s)\in\mu\}$.
A matching $\mu$ is called \emph{stable} if it satisfies the following properties:
\begin{itemize}
    \item Individual Rationality: $\mu(i)\succeq_i\varnothing$ for every $i\in I$, and 
    \item No Blocking Coalition: 
    $\mu(s) \in \C_s(\mu(s) \cup X)$ for every $X\subseteq \{i\in I\mid s\succ_i\mu(i)\}$ and $s \in S$.
\end{itemize}
A matching $\mu'$ \emph{Pareto dominates} another matching $\mu$ if $\mu'(i)\succeq_i\mu(i)$ for all $i\in I$ and $\mu'(i)\succ_i\mu(i)$ for some $i\in I$.
A stable matching $\mu$ is \emph{constrained efficient} if it is not Pareto dominated by any other stable matching.

\begin{remark}
Our model can be viewed as a generalization of distributing indivisible goods under constraints studied in \citep{IK2024, IK2024b, suzuki2018efficient, STYYZ2023}.
In these works, a market is defined as a tuple $(I,S,(\succ_i)_{i\in I},(\cF_s)_{s\in S},\mu_0)$, where $\cF_s\subseteq 2^I$ is the family of subsets of students that school $s\in S$ can accept, and $\mu_0$ is the initial matching. 
A matching $\mu$ is called \emph{feasible} if $\mu(i)\succeq_i\vn~(\forall i\in I)$ and $\mu(s)\in\cF_s~(\forall s\in S)$.
A feasible matching $\mu$ is called \emph{Pareto efficient (PE)} if there is no other feasible matching $\mu'$ that Pareto dominates $\mu$.
Additionally, a feasible matching $\mu$ is called \emph{individual rational (IR)} if $\mu(i)\succeq_i\mu_0(i)~(\forall i\in I)$.
We assume that $\mu_0$ is feasible.

For each school $s\in S$, define the choice correspondence $\C_s(X)=\{Y\subseteq X\mid Y\in\cF\}~(\forall X\in 2^I)$. 
With this definition, a matching is feasible if and only if it is stable.
Moreover, a feasible matching that is both PE and IR coincides with a constrained efficient matching that Pareto dominates $\mu_0$, and vice versa.
By \Cref{thm:oc-PI}, $\C_s$ satisfies PI and LAD when $\cF_s\subseteq 2^I$ is a matroid.

Moreover, for the case when every two sets $X',X''\in\cF_s$ satisfy $|X'|=|X''|$ for each $s\in S$, define the choice correspondence $\C'_s(X)=\{Y\subseteq X\mid Y\subseteq Y'\in\cF\}~(\forall X\in 2^I)$.
Then, a feasible matching that is both PE and IR coincides with a constrained efficient matching that Pareto dominates $\mu_0$, and vice versa.
Furthermore, by \Cref{thm:oc-PI}, $\C'_s$ satisfies PI and LAD when $\cF_s\subseteq 2^I$ is a set of matroid bases (i.e., an M-convex set).

Thus, our results in this section are also applicable in these settings.
\end{remark}

As we mentioned in Section \ref{sec:correspondence}, 
a stable matching exists whenever $\C_s$ satisfies substitutability and IRC for all $s \in S$.
Since PI is a stronger condition than these, a stable matching exists if $\C_s$ is PI for all $s \in S$.
In particular, if $C_s^{w_s}$ is PI for all $s\in S$, we can obtain a stable matching in the market $(I,S,(\succ_i)_{i\in I},(\C_s)_{s\in S})$ by applying the \emph{deferred acceptance} (DA) algorithm to the market $(I,S,(\succ_i)_{i\in I},(C_s^{w_s})_{s\in S})$~\citep{roth1984stability,aygun2013matching}.
This is because the outcome of the DA algorithm, $\mu$, satisfies $\mu(i)\succeq_i\vn$ for every $i\in I$ and $\mu(s)=C_s^{w_s}(\mu(s)\cup X)\in \C_s(\mu(s)\cup X)$ for every $X\subseteq \{i\in I\mid s\succ_i\mu(i)\}$ and $s\in S$.
However, the outcome $\mu$ of the DA may not be constrained efficient, as illustrated in the following example.
Thus, tie-breaking may not lead to an efficient stable matching.
This motivates us to explore methods for obtaining a constrained efficient matching from an inefficient stable matching.

\begin{ex}\label{ex:dominated}
Suppose that $I=\{i_1,i_2,i_3,i_4\}$ and $S=\{s_1,s_2,s_3\}$.
The preference $\succ_i$ of each student $i\in I$ is given as follows:
\begin{align}
{\succ_{i_1}}=(s_2\ s_1\ \vn\ s_3),\quad
{\succ_{i_2}}=(s_1\ s_2\ \vn\ s_3),\quad
{\succ_{i_3}}=(s_3\ s_1\ \vn\ s_2),\quad
{\succ_{i_4}}=(s_1\ s_3\ \vn\ s_2).
\end{align}
School $s_1$ has one seat for $\{i_1,i_4\}$ and one seat for $\{i_2,i_3\}$.
Schools $s_2$ and $s_3$ have one seat for $\{i_1,i_2\}$ and one seat for $\{i_3,i_4\}$, respectively.
We assume that each school prefers to fill the seats as much as possible (without prioritizing any specific student).
The resulting choice correspondences $(\C_s)_{s\in S}$ are given as
\begin{align}
\C_{s_1}(X)&=\argmax\{|Y|\mid Y\subseteq X,\ |Y\cap\{i_1,i_4\}|\le 1,\ |Y\cap\{i_2,i_3\}|\le 1\},\\
\C_{s_2}(X)&=\argmax\{|Y|\mid Y\subseteq X,\ |Y\cap\{i_1,i_2\}|\le 1,\ |Y\cap\{i_3,i_4\}|=0\}, \quad (\forall X\in 2^I).\\
\C_{s_3}(X)&=\argmax\{|Y|\mid Y\subseteq X,\ |Y\cap\{i_1,i_2\}|=0,\ |Y\cap\{i_3,i_4\}|\le 1\}
\end{align}
These choice correspondences are PI and LAD as they are derived from weighted matroids.

Consider a matching $\mu=\{(i_1,s_1),(i_2,s_2),(i_3,s_1),(i_4,s_3)\}$.
Then, this matching is stable because $\mu(i)\succeq_i\vn$ for every student $i\in I$, and $\mu(s)\in\C_s(\mu(S)\cup X)$ for every $X\subseteq \{i\in I\mid s\succ_i\mu(i)\}$ and $s\in S$.
However, $\mu$ is not constrained efficient because it is Pareto dominated by another stable matching $\mu'=\{(i_1,s_2),(i_2,s_1),(i_3,s_3),(i_4,s_1)\}$.
Moreover, the matching $\mu=\{(i_1,s_1),(i_2,s_2),(i_3,s_1),(i_4,s_3)\}$ is the outcome of DA with weights $(w_{s_1}(i_1),w_{s_1}(i_2),w_{s_1}(i_3),w_{s_1}(i_4))=(1,4,2,8)$.
\end{ex}

\subsection{Main Result}

Under responsive choice correspondences, constrained efficient matchings are characterized by cycles~\citep{erdil2008s}.
However, in more general settings, this cycle characterization fails \citep{erdil2019efficiency}.
We show that if a choice correspondence satisfies our notions, a similar cycle-based characterization of constrained efficient matchings is restored (Theorem~\ref{thm:main}).
This result has implications for real-life applications that, for example, incorporate diversity requirements.

We introduce two key properties
to characterize constrained efficient stable matchings.
First, we call a stable matching \emph{maximal} if $|\mu(s)|=\max\bigl\{|X|\mid X\in \C_s(\{i\in I\mid s\succeq_i\mu(i)\})\bigr\}$ for every $s\in S$.
We will show that any constrained efficient stable
matching must be maximum (Lemma \ref{lm:sufficiency1}).
Next, we define the notion of a cycle called a \emph{potentially-stable improvement cycle (PSIC)}, which was introduced by \citet{erdil2019efficiency}. 
\begin{df}
A PSIC is a sequence of distinct students $(i_0,i_1,\dots,i_{m-1})$ with $m \ge 2$ such that 
\begin{itemize}
\item $s_\ell\coloneqq\mu(i_\ell)$ for all $\ell\in\{0,1,\dots,m-1\}$,
\item $s_{\ell+1}\succ_{i_\ell}s_{\ell}$ for all $\ell\in\{0,1,\dots,m-1\}$, and
\item $\mu(s_{\ell+1})-i_{\ell+1}+i_\ell \in \C_{s_{\ell+1}}(\{i\in I\mid s_{\ell+1}\succeq_{i}\mu(i)\}-i_{\ell+1})$ for all $\ell\in\{0,1,\dots,m-1\}$,
\end{itemize}
where we treat $i_m=i_0$ and $s_m=s_0$.
\end{df}
We will show that a necessary and sufficient condition for a stable matching to be constrained efficient is that it is maximal and admits no PSIC.

\begin{thm}\label{thm:main}
Suppose that $\C_s$ is PI and LAD for every school $s\in S$.
Then, a stable matching $\mu$ is constrained efficient if and only if it is both maximal and admits no PSIC. 
Moreover, for a given stable matching $\mu$, we can compute a constrained efficient stable matching that Pareto dominates $\mu$ in polynomial time.
\end{thm}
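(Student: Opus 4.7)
I will establish the ``if and only if'' characterization in two directions and then derive the algorithm. For the \emph{necessity} direction, I prove contrapositively that the failure of either condition produces a stable matching Pareto-dominating $\mu$. Given a PSIC $(i_0,\dots,i_{m-1})$, the matching $\mu'$ obtained by reassigning each $i_\ell$ to $s_{\ell+1}$ clearly Pareto-dominates $\mu$; stability of $\mu'$ at each affected school $s_{\ell+1}$ follows from $\mu'(s_{\ell+1})=\mu(s_{\ell+1})-i_{\ell+1}+i_\ell\in\C_{s_{\ell+1}}(T_{s_{\ell+1}}-i_{\ell+1})$ together with \Cref{lm:restrict}, which extends this membership to the relevant subsets in the no-blocking condition. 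If $\mu$ is not maximal at some school $s^*$, then \Cref{prop:gmatroid_nonmaximal} applied to the g-matroid $\C_{s^*}(T_{s^*})$ (\Cref{thm:gmatroid}) yields $i\in T_{s^*}\setminus\mu(s^*)$ with $\mu(s^*)+i\in\C_{s^*}(T_{s^*})$; moving $i$ to $s^*$---and, if $\mu(i)\ne\vn$, rebalancing at $\mu(i)$ via a g-matroid exchange---produces a stable Pareto improvement.

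\textbf{Sufficiency.} Suppose $\mu$ is stable, maximal, admits no PSIC, yet is Pareto-dominated by a stable matching $\mu'$. Since Pareto dominance implies $T'_s\coloneqq\{i\mid s\succeq_i\mu'(i)\}\subseteq T_s$ for every $s\in S$, LAD and \Cref{coro:maximal} give $|\mu'(s)|\le\max\{|Y|\mid Y\in\C_s(T'_s)\}\le\max\{|Y|\mid Y\in\C_s(T_s)\}=|\mu(s)|$; since every student matched in $\mu$ remains matched in $\mu'$ by individual rationality, $\sum_s|\mu'(s)|\ge\sum_s|\mu(s)|$, forcing $|\mu(s)|=|\mu'(s)|$ for every $s$. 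Setting $L_s=\mu(s)\setminus\mu'(s)$ and $G_s=\mu'(s)\setminus\mu(s)$, we have $|L_s|=|G_s|$, and each $i\in I^+\coloneqq\{i\mid\mu'(i)\succ_i\mu(i)\}$ lies in a unique $L_{\mu(i)}$ and $G_{\mu'(i)}$. The heart of the proof is to construct, for each $s$, a bijection $\pi_s\colon L_s\to G_s$ with $\mu(s)-j+\pi_s(j)\in\C_s(T_s-j)$ for every $j\in L_s$: a single swap partner for each $i\in G_s$ is supplied by the g-matroid $\C_s(\mu(s)+i)=\C_s(T_s)\cap 2^{\mu(s)+i}$ (via \Cref{lm:restrict} and \Cref{thm:gmatroid}), which by maximality must contain a max-size element of the form $\mu(s)-j+i$, and a Hall/Brualdi-type exchange argument assembles these local swaps into the required bijection. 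The map $\rho\colon I^+\to I^+$ defined by $\rho(j)=\pi_{\mu(j)}(j)$ is then a permutation on a finite set; any cycle of $\rho$, traversed in reverse, satisfies all conditions of a PSIC (the preference conditions from $I^+$, the choice conditions from the construction of $\pi_s$ and \Cref{lm:restrict}), contradicting our assumption.

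\textbf{Algorithm and main obstacle.} For the algorithmic claim, start from any stable matching (produced e.g.\ by DA on a tie-broken market) and iteratively execute either a PSIC or a maximality-restoring move. Each iteration strictly improves some student's welfare, and the total number of improvements is at most $|I|\cdot|S|$; detecting a PSIC or a maximality violation reduces to polynomial-time search in a digraph whose arcs are certified via membership queries (\Cref{thm:computeC}), so the whole procedure runs in polynomial time. The main obstacle is constructing the bijection $\pi_s$ in the sufficiency step: the naive idea of applying g-matroid exchange between $\mu(s)$ and $\mu'(s)$ directly within $\C_s(T_s)$ fails because $\mu'(s)$ need not lie in $\C_s(T_s)$ (since $T'_s$ may be strictly smaller than $T_s$ when $T_s\setminus T'_s$ contains students $s$ would ``want'' but who prefer some other $\mu'(i)$ over $s$); circumventing this requires the local g-matroid analysis at each $\C_s(\mu(s)+i)$ together with a Hall-type global assembly, which is the technically delicate part.
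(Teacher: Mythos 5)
There is a genuine gap in your first direction. You claim that, given \emph{any} PSIC $(i_0,\dots,i_{m-1})$, the matching obtained by reassigning each $i_\ell$ to $s_{\ell+1}$ is ``clearly'' stable because $\mu(s_{\ell+1})-i_{\ell+1}+i_\ell\in\C_{s_{\ell+1}}(\cdot)$ plus \Cref{lm:restrict}. This is false, and the paper devotes \Cref{ex:shortestPSIC} to refuting exactly this step: there the PSIC $(i_1,i_2,i_3,i_4)$ produces a matching $\nu$ with $\nu(s_1)=\{i_1,i_3\}\notin\C_{s_1}(\{i_1,i_3,i_5\})$, so executing an arbitrary PSIC can destroy stability. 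The problem is that a PSIC may visit the same school more than once, in which case $\mu'(s)$ differs from $\mu(s)$ by two or more elements and your identity $\mu'(s_{\ell+1})=\mu(s_{\ell+1})-i_{\ell+1}+i_\ell$ no longer describes the new assignment; the single-swap memberships in the PSIC definition do not compose into membership of the multi-swap set. The paper's \Cref{lm:sufficiency2} repairs this by selecting a PSIC \emph{without shortcuts} (a minimal cycle in the exchange graph) and then proving stability of the resulting matching through a bespoke weight construction combined with PI and LAD. Your non-maximality case has a related weakness: adding $i$ to $s^*$ and ``rebalancing at $\mu(i)$ via a g-matroid exchange'' is not a single local fix but a cascading chain of reassignments across schools (the paper's \Cref{lm:sufficiency1} builds the sequence $\mu_0,\mu_1,\dots,\mu_r$ and needs PI and LAD at every step to control what each vacated school re-admits).

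Your sufficiency direction is closer in spirit to the paper's \Cref{lm:necessity} but leaves its hardest step unexecuted. The bijection $\pi_s\colon L_s\to G_s$ with $\mu(s)-j+\pi_s(j)\in\C_s(T_s-j)$ is precisely where the difficulty lies: the swap partner produced by a local g-matroid argument inside $\C_s(\mu(s)+i)$ is only guaranteed to lie in $\mu(s)$, not in $L_s$, and the partner produced over larger ground sets may fail to lie in $I'$ altogether, in which case no permutation on $I^+$ results. The paper avoids a Hall-type assembly entirely: it restricts the ground set to $(\{i'\in I'\mid s\succ_{i'}\mu(i')\}\cup\mu(s))-i$ so that the unique entering element is automatically in $I'$, extracts a cycle from the resulting digraph, and then upgrades the restricted memberships to the full sets required by the PSIC definition via \Cref{lm:interval} and a two-case analysis (the case $j\notin I'$ is ruled out using stability of the dominating matching $\nu$). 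Since you acknowledge the assembly is ``the technically delicate part'' without supplying it, and since the first direction rests on a claim the paper explicitly disproves, the proposal does not constitute a proof.
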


It is worth noting that any mechanism that always produces a constrained efficient stable matching is not strategy-proof, even under the standard responsive choice correspondences~\citep{erdil2008s}.
Consequently, we do not consider strategy-proofness in this work.

The condition of PI takes an important role in Theorem~\ref{thm:main}. 
The following example shows that a constrained efficient matching may admit a PSIC when PI is violated.

\begin{ex}\label{ex:effcient-PSIC}
Consider a market that is almost identical to \Cref{ex:dominated}, but differs only in the choice correspondence of school $s_1$.
In addition to \Cref{ex:dominated}, assume that school $s_1$ cannot accept $i_2$ and $i_4$ at the same time.
The resulting choice correspondence $\C'_{s_1}$ is given as
\begin{align}
\C'_{s_1}(X)
&=\argmax\{|Y|\mid Y\subseteq X,\ |Y\cap\{i_1,i_4\}|\le 1,\ |Y\cap\{i_2,i_3\}|\le 1,\ |Y\cap\{i_2,i_4\}|\le 1\}\ \ \ (\forall X\in 2^I).
\end{align}
For this market, it is not difficult to verify that the matching $\mu=\{(i_1,s_1),(i_2,s_2),(i_3,s_1),(i_4,s_3)\}$ is stable and constrained efficient.
Indeed, $\C'_{s_1}$ satisfies LAD but fails to satisfy PI since the choice function induced by any UM weight $w$ with
$w(i_3) > w(i_4) > w(i_2) > w(i_1)$ does not satisfy PI. Moreover, $(i_1,i_2,i_3,i_4)$ is a PSIC for a constrained efficient matching $\mu=\{(i_1,s_1),(i_2,s_2),(i_3,s_1),(i_4,s_3)\}$.
\end{ex}

Conversely, \citet{erdil2022corrigendum} provided an example where a stable matching that is not constrained efficient but admits no PSIC. In their example, the choice correspondence satisfies substitutability and acceptance but violates PI.\footnote{They provide a condition to obtain the necessity part of Theorem~\ref{thm:main}. We will discuss this in \Cref{sec:admissions}.}

In the rest of this section, we provide the proof of \Cref{thm:main}.
The following lemma characterizes both stable matchings and maximal stable matchings using tie-breaking.
\begin{lm}\label{thm:stable_weight}
Suppose that $\C_s$ is PI for all $s\in S$.
Then, a matching $\mu$ is stable if $\mu(i)\succeq_i \varnothing$ for every $i\in I$ and $\mu(s)=C_s^{w_s}\bigl(\{i\in I \mid s\succeq_i \mu(i)\}\bigr)$ for some UM weight $w_s$, for every $s\in S$.
Moreover, a stable matching $\mu$ is \emph{maximal} if $\mu(s)=C_s^{w_s^+}(\{i\in I\mid s\succeq_i\mu(i)\})$ for some \emph{positive} UM weight $w_s^+\colon I\to\mathbb{R}_{++}$, for every $s\in S$.
\end{lm}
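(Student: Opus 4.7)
The plan is to verify the definitions of stability and maximality directly, leaning on two structural facts already proved for PI choice correspondences: the strong IRC-like property in Lemma~\ref{lm:restrict} and the size characterization in Corollary~\ref{coro:maximal}. Stability requires individual rationality (which is hypothesized) and absence of blocking coalitions (the real content), while the maximality claim reduces to a single invocation of the positive-weight corollary.

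For the no-blocking-coalition condition, I would fix $s\in S$, write $Y_s \coloneqq \{i\in I \mid s\succeq_i\mu(i)\}$, and first observe the chain of inclusions $\mu(s)\subseteq \mu(s)\cup X\subseteq Y_s$ for every $X\subseteq\{i\in I\mid s\succ_i\mu(i)\}$. The left inclusion holds because $i\in\mu(s)$ implies $\mu(i)=s\succeq_i s$, so $i\in Y_s$; the right inclusion is immediate from the fact that $s\succ_i\mu(i)$ entails $s\succeq_i\mu(i)$. The hypothesis $\mu(s)=C_s^{w_s}(Y_s)$ gives $\mu(s)\in\C_s(Y_s)$, so I can apply Lemma~\ref{lm:restrict} with $\mu(s)$ in the role of the chosen set, $Y_s$ in the role of the ambient set, and $\mu(s)\cup X$ in the role of the intermediate set. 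This yields $\C_s(\mu(s)\cup X)=\C_s(Y_s)\cap 2^{\mu(s)\cup X}$, and since $\mu(s)$ lies in both factors on the right, $\mu(s)\in\C_s(\mu(s)\cup X)$, exactly the blocking-coalition condition.

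For the maximality statement, I would simply note that by definition $\mu$ is maximal precisely when $|\mu(s)|=\max\{|Z|\mid Z\in\C_s(Y_s)\}$ for every $s$. Under the stronger hypothesis that the UM weight $w_s^+$ is strictly positive, Corollary~\ref{coro:maximal} applied to the PI choice correspondence $\C_s$ with the set $Y_s$ yields $|C_s^{w_s^+}(Y_s)|=\max\{|Z|\mid Z\in\C_s(Y_s)\}$, which is precisely $|\mu(s)|$. Doing this for each $s$ gives maximality.

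The only mild obstacle I anticipate is confirming the inclusion chain $\mu(s)\subseteq\mu(s)\cup X\subseteq Y_s$ so that Lemma~\ref{lm:restrict} applies cleanly; everything else is a direct invocation of the lemmas already at hand, so the proof should be compact.
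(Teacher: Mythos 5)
Your proof is correct, and the maximality half coincides exactly with the paper's (a single application of \Cref{coro:maximal} to the set $Y_s=\{i\in I\mid s\succeq_i\mu(i)\}$). For the no-blocking-coalition half you take a genuinely different, though equally short, route: the paper works at the level of the tie-broken choice \emph{function}, using the PI identity directly to compute
\begin{align}
C_s^{w_s}(\mu(s)\cup X)=C_s^{w_s}\bigl(C_s^{w_s}(Y_s)\cup X\bigr)=C_s^{w_s}(Y_s\cup X)=C_s^{w_s}(Y_s)=\mu(s),
\end{align}
which in fact establishes the slightly stronger statement that $\mu(s)$ is the $w_s$-selected element of $\C_s(\mu(s)\cup X)$, whereas you work at the level of the \emph{correspondence}, invoking the strong IRC property of \Cref{lm:restrict} with $S=\mu(s)$, ambient set $Y_s$, and intermediate set $\mu(s)\cup X$ to conclude $\C_s(\mu(s)\cup X)=\C_s(Y_s)\cap 2^{\mu(s)\cup X}\ni\mu(s)$. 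Your inclusion chain $\mu(s)\subseteq\mu(s)\cup X\subseteq Y_s$ is verified correctly, and the hypothesis $\mu(s)\in\C_s(Y_s)$ needed for \Cref{lm:restrict} does follow from $\mu(s)=C_s^{w_s}(Y_s)$. The trade-off is that the paper's argument needs only the definition of PI for $C_s^{w_s}$ (\Cref{def:PI}), while yours leans on the heavier closure-operator machinery behind \Cref{lm:restrict}; on the other hand, your version makes transparent that the conclusion is really a statement about the correspondence $\C_s$ and not about any particular tie-breaking. Both are valid.
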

\begin{proof}
Assume that a matching $\mu$ satisfies $\mu(i)\succeq_i \varnothing$ for every $i\in I$ and $\mu(s)=C_s^{w_s}(\{i\in I \mid s\succeq_i \mu(i)\})$ for a UM weight $w_s$, for every $s\in S$.
To prove the stability of $\mu$, it is sufficient to show that $\mu(s)\in\C_s(\mu(s)\cup X)$ for every $X\subseteq \{i\in I\mid s\succ_i\mu(i)\}$ and $s\in S$.
By the PI property of $C_s^{w_s}$, we have
\begin{align}
C_s^{w_s}(\mu(s)\cup X)
&=C_s^{w_s}(C_s^{w_s}(\{i\in I\mid s\succeq_i\mu(i)\})\cup X)\\
&=C_s^{w_s}(\{i\in I\mid s\succeq_i\mu(i)\}\cup X)
=C_s^{w_s}(\{i\in I\mid s\succeq_i\mu(i)\})
=\mu(s).
\end{align}
Thus, we obtain that $\mu(s)=C_s^{w_s}(\mu(s)\cup X)\in \C_s(\mu(s)\cup X)$.

Next, assume that $\mu$ is a stable matching and $\mu(s)=C_s^{w_s^+}(\{i\in I\mid s\succeq_i\mu(i)\})$ for a \emph{positive} UM weight $w_s^+\colon I\to\mathbb{R}_{++}$, for every $s\in S$.
Then, by \Cref{coro:maximal}, we have 
\begin{align}
|\mu(s)|
=|C_s^{w_s^+}(\{i\in I\mid s\succeq_i\mu(i)\})|
=\max\big\{|X|\mid X\in \C_s(\{i\in I\mid s\succeq_i\mu(i)\})\big\},
\end{align}
for every $s\in S$. This means that $\mu$ is maximal.
\end{proof}
It is worth mentioning that, if the choice correspondences are acceptant, every stable matching is maximal.
Unlike the analysis by \citet{erdil2019efficiency} and \citet{erdil2022corrigendum}, we do not assume acceptance; instead, we assume only LAD.
This is important for real-life applications since acceptance is violated while LAD is satisfied under a diversity constraint.

\subsection{Proof of \Cref{thm:main}}
In this subsection, we prove \Cref{thm:main}.

\subsubsection{Sufficiency Part of \Cref{thm:main}}

We now demonstrate the sufficiency direction of Theorem~\ref{thm:main}: if a stable matching is constrained efficient, then it is both maximal and admits no PSIC.  
Unlike the case with responsive choice correspondences studied by \citet{erdil2008s}, not every PSIC necessarily preserves stability. Therefore, it is crucial to select the cycle carefully. The following example illustrates these points.

\begin{ex}\label{ex:shortestPSIC}
Suppose that $I = \{i_1, i_2, i_3, i_4, i_5\}$ and $S=\{s_1, s_2\}$. 
The choice correspondence $\C_{s_1}$ for school $s_1$ is associated with the utility function:
\begin{align}
u_{s_1}(X)&=
\begin{cases}
|X\cap\{i_5\}|+2\cdot \sqrt{|X\cap \{i_1,i_3\}|}+3\cdot|X\cap\{i_2,i_4\}| & \text{if }|X|\le 2,\\
-\infty & \text{if }|X|>2,
\end{cases}
\quad(\forall X\in 2^I).
\end{align}
Similarly, the choice correspondence $\C_{s_2}$ for school $s_2$ is associated with:
\begin{align}
u_{s_2}(X)&=
\begin{cases}
|X|     & \text{if }|X|\le 2,\\
-\infty & \text{if }|X|>2,
\end{cases}
\quad(\forall X\in 2^I).
\end{align}
Since $u_{s_1}$ and $u_{s_2}$ are laminar concave functions, both choice correspondences $\C_{s_1}$ and $\C_{s_2}$ satisfy PI and LAD.
Assume that students $i \in \{i_1, i_3, i_5\}$ have preferences ${\succ_i} = (s_1 \succ s_2 \succ \vn)$, while students $i \in \{i_2, i_4\}$ have preferences ${\succ_i} = (s_2 \succ s_1 \succ \vn)$.
It is straightforward to verify that the matching $\mu = \{(i_1,s_2),(i_2,s_1),(i_3,s_2),(i_4,s_1)\}$ is stable.

In this instance, $(i_1,i_2,i_3,i_4)$ is a PSIC for $\mu$. By applying this PSIC to $\mu$, we obtain another matching $\nu = \{(i_1,s_1),(i_2,s_2),(i_3,s_1),(i_4,s_2)\}$.
However, $\nu$ is not stable since 
$$\C_{s_1}(\{i\in I\mid s_1\succeq \nu(i)\})=\C_{s_1}(\{i_1,i_3,i_5\})=\{\{i_1,i_5\},\{i_3,i_5\}\}\not\ni\{i_1,i_3\}.$$ 
Instead, by applying another PSIC $(i_1,i_2)$, we obtain
$\nu' = \{(i_1,s_1),(i_2,s_2),(i_3,s_2),(i_4,s_1)\}$, which can be verified to be stable.
\end{ex}
The key distinction between the two cycles lies in the presence of a shortcut in the first PSIC. We demonstrate that a PSIC without any shortcuts can preserve stability.\footnote{A PSIC is closely related to a top trading cycle (TTC). Specifically, if each school employs a choice correspondence that returns all feasible subsets of a matroid, then a generalized TTC studied in \cite{IK2024, IK2024b, suzuki2018efficient, STYYZ2023} corresponds to a PSIC.}

In what follows, we first show that if a stable matching is not maximal, then it is not constrained efficient. Second, we demonstrate that if a maximal stable matching admits a PSIC, it cannot be constrained efficient.
\begin{lm}\label{lm:sufficiency1}
Suppose that $\C_s$ satisfies PI and LAD for every school $s\in S$.
If a stable matching $\mu$ is not maximal, then $\mu$ is not constrained efficient.
Moreover, in this case, we can compute another stable matching $\nu$ that Pareto dominates $\mu$ in polynomial time.
\end{lm}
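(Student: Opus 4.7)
The plan is a cascade argument: identify an initial augmenting pair using the g-matroid structure at $s^*$, then propagate changes along a chain of student reassignments, using LAD to control sizes and Lemma~\ref{lm:restrict} to maintain stability at each affected school.

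First, I identify the augmenting pair. Stability (applied with the largest admissible blocking set) gives $\mu(s)\in\C_s(A_s)$ for every $s$, where $A_s=\{i\in I:s\succeq_i\mu(i)\}$. Since $\mu$ is not maximal, pick $s^*$ with $|\mu(s^*)|<\max\{|Y|:Y\in\C_{s^*}(A_{s^*})\}$. By \Cref{thm:gmatroid}, $\C_{s^*}(A_{s^*})$ is a g-matroid, and \Cref{prop:gmatroid_nonmaximal} yields $i^*\in A_{s^*}\setminus\mu(s^*)$ with $\mu(s^*)+i^*\in\C_{s^*}(A_{s^*})$, which forces $s^*\succ_{i^*}\mu(i^*)$.

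Next, I describe the cascade. Initialize $\nu:=\mu$ and move $i^*$ from $\mu(i^*)$ to $s^*$. Then while some school $c$ has $\nu(c)\notin\C_c(\{i:c\succeq_i\nu(i)\})$, let $j_\text{prev}$ be the student whose departure from $c$ in the prior step caused the violation (taking $j_\text{prev}=i^*$ at the first entry), so that the \emph{previous} $\nu(c)+j_\text{prev}$ did lie in $\C_c(B)$ for the previous residual set $B$. Pick a UM weight $w$ realizing this previous certificate via \Cref{lm:chw}. Substitutability of $C_c^w$ gives $\nu(c)\subseteq C_c^w(\{i:c\succeq_i\nu(i)\})$ (the previous set minus $j_\text{prev}$ survives the restriction), and LAD bounds $|C_c^w(\{i:c\succeq_i\nu(i)\})|\le|\nu(c)|+1$. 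Since $\nu(c)\notin\C_c(\{i:c\succeq_i\nu(i)\})$, the cardinality must be exactly $|\nu(c)|+1$, so $C_c^w(\{i:c\succeq_i\nu(i)\})=\nu(c)+j$ for some $j$ with $c\succ_j\nu(j)$. Reassign $j$ to $c$; the cascade then continues at $\nu(j)$'s prior school, with $j$ playing the new role of $j_\text{prev}$.

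Finally, termination and correctness. Each cascade step strictly improves one student's match under strict preferences, so a single student can be reassigned at most $|S|$ times, giving an $O(|I|\cdot|S|)$ bound and polynomial runtime. Pareto dominance is immediate since every reassigned student moves strictly up and no one moves down. For stability, at termination every school $s$ satisfies $\nu(s)\in\C_s(\{i:s\succeq_i\nu(i)\})$: cascade-touched schools by the loop exit condition, and cascade-untouched ones by applying \Cref{lm:restrict} to $\mu(s)\in\C_s(A_s)$, noting $\{i:s\succeq_i\nu(i)\}\subseteq A_s$ because students only moved up. \Cref{lm:restrict} then extends this single choice certificate to the full blocking condition $\nu(s)\in\C_s(\nu(s)\cup X)$ for every $X\subseteq\{i:s\succ_i\nu(i)\}$.

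The main obstacle is the existence argument at each cascade step, especially upon re-entering a school. The key trick is to exploit the \emph{previous} iteration's choice certificate $\nu(c)+j_\text{prev}$ via Lemmas~\ref{lm:chw} and \ref{lm:restrict}: substitutability then implies $\nu(c)\subseteq C_c^w$ on the shrunk residual set, and LAD caps the image at $|\nu(c)|+1$, so ``add exactly one element $j$'' is the only deviation consistent with the failure of $\nu(c)\in\C_c(\cdot)$; this pinpoints the next mover $j$ and keeps the cascade well-defined throughout.
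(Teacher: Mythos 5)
Your proposal is correct and follows essentially the same strategy as the paper's proof: locate the augmenting student $i^*$ at a non-maximal school via \Cref{thm:gmatroid} and \Cref{prop:gmatroid_nonmaximal}, then run a rejection chain in which substitutability forces the displaced school's remaining students to survive and LAD caps the new choice at one additional student, with termination by strict Pareto improvement and stability of untouched schools via \Cref{lm:restrict}. The only (immaterial) difference is bookkeeping: the paper fixes one UM weight $w_s$ per school at the outset and maintains the invariant $\mu_k(s)=C_s^{w_s}(\cdot)$ at the choice-function level, whereas you re-derive a fresh tie-breaking weight at each step from the correspondence-level certificate via \Cref{lm:chw}.
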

\begin{proof}
Suppose that $\mu$ is a stable matching that is not maximal. 
Then, there exists a school $s \in S$ such that $\mu(s) \not\in C_s^w(\{i \in I \mid s \succeq_i \mu(i)\})$ for any positive UM weight $w \colon I \to \mathbb{R}_{++}$. Fix such a school $s^*$.
By \Cref{thm:gmatroid} and \Cref{prop:gmatroid_nonmaximal}, there exists a student $i^*$ such that $s^*\succ_{i^*}\mu(i^*)$ and $\mu(s^*)+i^*\in\C_{s^*}(\{i\in I\mid s^*\succeq_i\mu(i)\})$.

For each $s\in S\setminus\{s^*\}$, let $w_s$ be a UM weight such that $\mu(s)=C_s^{w_s}(\{i\in I\mid s\succeq_i \mu(i)\})$.
In addition, let $w_{s^*}$ be a UM weight such that $\mu(s^*)+i^*=C_{s^*}^{w_{s^*}}(\{i\in I\mid s^*\succeq_i \mu(i)\})$.
Note that such UM weights can be constructed by setting as in \Cref{lm:chw}.
We construct sequences of matchings $(\mu_0,\mu_1,\dots,\mu_r)$, students $(i_0,i_1,\dots,i_{r-1})$, and schools $(s_0,s_1,\dots,s_{r})$ as follows:
\begin{enumerate}[label=\arabic*.]
\item Initialization:
\begin{itemize}
    \item Set $\mu_0=\mu$, $i_0=i^*$, and $s_0=s^*$.
\end{itemize}
\item Inductive Step ($k=1,2,\ldots$):
\begin{itemize}
    \item Define $\mu_k$ as the matching obtained from $\mu_{k-1}$ by changing the assignment of $i_{k-1}$ from $\mu_{k-1}(i_{k-1})$ to $s_{k-1}$.
    \item Set $s_k=\mu_{k-1}(i_{k-1})$.
    \item If (i) $s_k = \vn$ or (ii) $s_k\in S$ and $C_{s_k}^{w_{s_k}}(\{i\in I\mid s_k\succeq_i \mu_{k}(i)\})=\mu_k(s_k)$, then terminate the process by setting $r=k$. Otherwise, select $i_k$ such that $s_k\succ_{i_k}\mu_k(i_k)$ and $\mu_k(s_k)+i_k=C_{s_k}^{w_{s_k}}(\{i\in I\mid s_k\succeq_i \mu_{k}(i)\})$.
\end{itemize}
\end{enumerate}
We now show that such sequences are always well defined and that the final matching $\mu_r$ is stable and Pareto improves upon the initial matching $\mu$.

We observe that we can select a student $i_k$ at each step $k<r$.
Since $i_0=i^*$, we only consider the case where $k>0$.
Because $C_{s_k}^{w_{s_k}}$ satisfies PI, we have
\begin{align}
C_{s_k}^{w_{s_k}}(\{i\in I\mid s_k\succeq_i \mu_{k}(i)\})
&=C_{s_k}^{w_{s_k}}(\{i\in I\mid s_k\succeq_i \mu_{k-1}(i)\}-i_{k-1})
\supseteq \mu_{k-1}(s_k)-i_{k-1}.
\end{align}
Since $C_{s_k}^{w_{s_k}}$ satisfies LAD, the set $C_{s_k}^{w_{s_k}}(\{i\in I\mid s_k\succeq_i \mu_{k}(i)\})$ is either $\mu_{k-1}(s_k)-i_{k-1}$, or $\mu_{k-1}(s_k)-i_{k-1}+a$ for some student $a\in I$ with $s_k\succ_a\mu_{k}(a)$.
If $C_{s_k}^{w_{s_k}}\big(\{i\in I\mid s_k\succeq_i \mu_{k}(i)\}\big)=\mu_{k-1}(s_k)-i_{k-1}~(=\mu_k(s_k))$, then $k=r$.
Otherwise, if $C_{s_k}^{w_{s_k}}\big(\{i\in I\mid s_k\succeq_i \mu_{k}(i)\}\big)=\mu_{k-1}(s_k)-i_{k-1}+a$ for some student $a\in I$ such that $s_k\succ_a\mu_{k}(a)$, the process continues by setting $i_k=a$.

Next, we prove the following conditions by induction on the step $k$:
\begin{enumerate}[label=(\roman*)]
\item $\mu_k(i)\succeq_i\mu(i)$ for every $i\in I$,
\item $\mu_{k}(s)= C_s^{w_s}(\{i\in I\mid s\succeq_i\mu_{k}(i)\})$ for every $s\in S-s_{k}$, and 
\item if \(k < r\), then \(\mu_k(s_k)+i_k = C_{s_k}^{w_{s_k}}\bigl(\{i \in I \mid s_k \succeq_i \mu_k(i)\}\bigr)\), and\\
if \(k = r\), then $s_k=\vn$, or $s_k\in S$ and \(\mu_k(s_k) = C_{s_k}^{w_{s_k}}\bigl(\{i \in I \mid s_k \succeq_i \mu_k(i)\}\bigr)\).
\end{enumerate}
These conditions hold for the base case ($k=0$) by construction.
For $k>0$, the conditions hold by the choice of $\mu_k$, $i_k$, and $s_k$ as follows.
First, $\mu_k(i)=\mu_{k-1}(i)\succeq_i\mu(i)$ for every $i\in I-i_{k-1}$ and $\mu_k(i_{k-1})=s_{k-1}\succ_{i_{k-1}}\mu_{k-1}(i_{k-1})$.
Second, for every $s\in S\setminus\{s_{k-1},s_k\}$, we have
\begin{align}
\mu_k(s)=\mu_{k-1}(s)= C_s^{w_s}\big(\{i\in I\mid s\succeq_i\mu_{k-1}(i)\}\big)=C_s^{w_s}\big(\{i\in I\mid s\succeq_i\mu_{k}(i)\}\big).
\end{align}
Third, 
\begin{align}
\mu_k(s_{k-1})=\mu_{k-1}(s_{k-1})+i_{k-1}= C_{s_{k-1}}^{w_{s_{k-1}}}\big(\{i\in I\mid s_{k-1}\succeq_{i}\mu_{k-1}(i)\}\big)=C_{s_{k-1}}^{w_{s_{k-1}}}\big(\{i\in I\mid s_{k-1}\succeq_{i}\mu_{k}(i)\}\big).
\end{align}
Finally, $\mu_k(s_k)+i_k= C_{s_k}^{w_{s_k}}\big(\{i\in I\mid s_k\succeq_{i}\mu_k(i)\}\big)$ if $k<r$ and $\mu_r(s_r)= C_{s_r}^{w_{s_r}}\big(\{i\in I\mid s_r\succeq_{i}\mu_k(i)\}\big)$ if $s_r\in S$.
Therefore, the conditions (i)--(iii) hold.

By condition (i), each step in the process results in a Pareto improvement for students. Since there are finitely many students ($|I|$) and schools ($|S|$), the process must terminate after at most $|I| \cdot |S|$ steps.  
By conditions (ii) and (iii), we have $\mu_r(s)=C_s^{w_s}(\{i\in I\mid s\succeq_i\mu_r(i)\})$. Hence, $\mu_r$ is a stable matching.
Therefore, if a stable matching $\mu$ is not maximal, then it cannot be constrained efficient.

Finally, we discuss the computational complexity.
By \Cref{prop:PILADcomp}, we can compute $C_s^{w_s}(X)$ for each $s\in S$ and $X\in 2^I$ in polynomial time.
Since the process has at most $|I|\cdot|S|$ steps and each step involves computations that run in polynomial time, the overall computational complexity is bounded by a polynomial with respect to $|I|$ and $|S|$. Hence, we can find the desired matching $\mu_r$ in polynomial time.
\end{proof}

\begin{lm}\label{lm:sufficiency2}
Suppose that $\C_s$ satisfies PI and LAD for every school $s\in S$.
If a maximal stable matching $\mu$ admits a PSIC, then $\mu$ is not constrained efficient.
Moreover, in this case, we can find another stable matching $\nu$ that Pareto dominates $\mu$ in polynomial time.
\end{lm}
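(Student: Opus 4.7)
The plan is to extract from the given PSIC a minimum-length one and then show directly that applying it simultaneously produces a stable matching Pareto-dominating $\mu$.

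I would first pick a PSIC $(i_0,i_1,\dots,i_{m-1})$ of $\mu$ with the smallest possible length $m$, and argue that the schools $s_0,s_1,\dots,s_{m-1}$ are pairwise distinct. Suppose for contradiction that $s_k=s_{k'}$ for some $0\le k<k'\le m-1$, and write $s:=s_k$ and $S:=\{i\in I\mid s\succeq_i\mu(i)\}$. The PSIC conditions at indices $k-1$ and $k'-1$ supply
\begin{align*}
\alpha&:=\mu(s)-i_k+i_{k-1}\in\C_s(S-i_k),\\
\beta&:=\mu(s)-i_{k'}+i_{k'-1}\in\C_s(S-i_{k'}).
\end{align*}
Because $s_{\ell+1}\succ_{i_\ell}s_\ell$ rules out $k'=k+1$ and $(k,k')=(0,m-1)$, both sub-sequences $(i_k,\dots,i_{k'-1})$ and $(i_{k'},\dots,i_{m-1},i_0,\dots,i_{k-1})$ have length at least two. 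The original PSIC conditions at interior positions transfer verbatim, so the only potentially missing requirements are the wrap-arounds $\gamma:=\mu(s)-i_k+i_{k'-1}\in\C_s(S-i_k)$ and $\delta:=\mu(s)-i_{k'}+i_{k-1}\in\C_s(S-i_{k'})$. I plan to show that at least one of $\gamma,\delta$ is valid, contradicting the minimality of $m$. The key tool is the g-matroid structure of $\C_s$ (\Cref{thm:gmatroid}) together with the restriction identity (\Cref{lm:restrict}): by lifting $\alpha$ and $\beta$ into a common ambient restricted correspondence via \Cref{lm:restrict}, and using that LAD together with the maximality of $\mu$ forces $|\alpha|=|\beta|=|\mu(s)|$ so that the max-size slice behaves like a matroid basis family, the g-matroid exchange axiom applied to $i_{k'-1}\in\beta\setminus\alpha$ (or, symmetrically, to $i_{k-1}\in\alpha\setminus\beta$) yields the desired cross-exchange.

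Once distinctness is in hand, I would define $\nu$ by $\nu(i_\ell)=s_{\ell+1}$ for cycle students and $\nu(i)=\mu(i)$ otherwise; distinctness guarantees $\nu(s_{\ell+1})=\mu(s_{\ell+1})-i_{\ell+1}+i_\ell$. Pareto dominance is immediate from $s_{\ell+1}\succ_{i_\ell}s_\ell$, and individual rationality is inherited from $\mu$. For the no-blocking-coalition condition it suffices, using \Cref{lm:restrict}, to verify $\nu(s)\in\C_s(\{i\mid s\succeq_i\nu(i)\})$ for every school $s$. For $s\notin\{s_0,\dots,s_{m-1}\}$ this follows from stability of $\mu$ together with $\{i\mid s\succeq_i\nu(i)\}\subseteq\{i\mid s\succeq_i\mu(i)\}$ and \Cref{lm:restrict}. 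For a cycle school $s_{\ell+1}$, the PSIC condition gives $\nu(s_{\ell+1})\in\C_{s_{\ell+1}}(\{i\mid s_{\ell+1}\succeq_i\mu(i)\}-i_{\ell+1})$; the chain $\nu(s_{\ell+1})\subseteq\{i\mid s_{\ell+1}\succeq_i\nu(i)\}\subseteq\{i\mid s_{\ell+1}\succeq_i\mu(i)\}-i_{\ell+1}$ (the exclusion of $i_{\ell+1}$ uses that $i_{\ell+1}$ strictly prefers $s_{\ell+2}=\nu(i_{\ell+1})$ to $s_{\ell+1}$) together with \Cref{lm:restrict} closes this case.

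For the polynomial-time claim, I would build the directed graph on $I$ with an arc $j\to i$ whenever $\mu(j)\succ_i\mu(i)$ and $\mu(\mu(j))-j+i\in\C_{\mu(j)}(\{i'\mid\mu(j)\succeq_{i'}\mu(i')\}-j)$; each arc can be checked by a single membership query, which by \Cref{prop:PILADcomp} is computable in polynomial time. A shortest directed cycle is then found via BFS in polynomial time, yielding a shortest PSIC and hence $\nu$. The main obstacle is the distinct-schools claim: the two witnesses $\alpha$ and $\beta$ live in different restricted correspondences, and the g-matroid exchange axiom is native to a single family, so careful lifting via \Cref{lm:restrict} plus size control coming from LAD and the maximality of $\mu$ is required to transfer one exchange into the other.
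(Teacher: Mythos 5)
There is a genuine gap, and it sits exactly at the load-bearing step of your argument: the claim that a minimum-length PSIC visits pairwise distinct schools. Your proposed proof of this claim does not go through. The two witnesses $\alpha\in\C_s(S-i_k)$ and $\beta\in\C_s(S-i_{k'})$ live in g-matroids over \emph{different} ground sets, and the lifting you propose via \Cref{lm:restrict} is unavailable: that lemma lets you write $\C_s(Y)=\C_s(X)\cap 2^Y$ only when some member of $\C_s(X)$ is already contained in $Y$, and there is no reason any choice from $S=\{i\mid s\succeq_i\mu(i)\}$ avoids $i_k$ (indeed $i_k\in\mu(s)\in\C_s(S)$, and $i_k$ may belong to every element of $\C_s(S)$). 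So $\C_s(S-i_k)$ and $\C_s(S-i_{k'})$ cannot be realized as slices of a common g-matroid by the tools you cite, and the exchange axiom of \Cref{thm:gmatroid} cannot be applied across them. I see no way to repair this, and the claim itself is doubtful: the paper's own machinery is built around cycles that may revisit schools (the bipartite graph in \Cref{lm:necessity} gives each student its own school-node precisely to allow this, and \Cref{ex:shortestPSIC} exhibits a PSIC with repeated schools). Minimality of $m$ only rules out repeated schools if your cross-exchange ($\gamma$ or $\delta$) actually holds, which is what you failed to establish.

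This gap is fatal to the remainder of your argument, because everything after distinctness relies on $\nu(s_{\ell+1})=\mu(s_{\ell+1})-i_{\ell+1}+i_\ell$ being a \emph{single} swap, so that the PSIC membership condition plus \Cref{lm:restrict} directly certifies $\nu(s_{\ell+1})\in\C_{s_{\ell+1}}(\{i\mid s_{\ell+1}\succeq_i\nu(i)\})$. If a school $s$ is visited twice, $\nu(s)$ differs from $\mu(s)$ by two simultaneous swaps, and the PSIC conditions only supply membership of the two single-swap sets in two different restricted families --- not membership of the doubly-swapped set anywhere. This is precisely the situation the paper's proof is engineered to handle: it works with a \emph{shortcut-free} (not school-distinct) PSIC, constructs for each school a positive UM weight that ranks $\mu(s)$ first and the incoming cycle students in cycle order, shows via maximality, LAD, and shortcut-freeness that each single swap is the tie-broken choice $C^{w_s}_s(\cdot)$, and then uses PI and LAD of $C^{w_s}_s$ to conclude that the fully swapped set $\nu(s)$ equals $C^{w_s}_s(\{i\mid s\succeq_i\mu(i)\}\setminus X)$ even when several students leave and enter $s$ at once. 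Your stability verification for distinct schools and your BFS-based complexity analysis are fine as far as they go, but without either a correct proof of distinctness or the paper's weight-based aggregation argument, the lemma is not established.
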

\begin{proof}
Let $(i_0,i_1,\dots,i_{m-1})$ be any PSIC for $\mu$ that does not contain a shortcut.
Define $s_\ell=\mu(i_\ell)$ for $\ell=0,1,\dots,m-1$.

First, we show that there exists a positive UM weight $w\colon I\to\mathbb{R}_{++}$ such that for each $\ell=0,1,\dots,m-1$,
\begin{align}
    \mu(s_{\ell+1})-i_{\ell+1}+i_\ell = C^{w_{s_{\ell+1}}}_{s_{\ell+1}}\big(\{i\in I\mid s_{\ell+1}\succeq_{i}\mu(i)\}-i_{\ell+1}\big),
\end{align}
where we define$i_m=i_0$ and $s_m=s_0$.

For each $s\in S$, let $\sigma\colon I\to\{1,2,\dots,n\}$ be a permutation of $I$ such that
\begin{itemize}
    \item $\sigma(i)<\sigma(j)$ for all $i\in\mu(s)$ and $j\in I\setminus \mu(s)$,
    \item $\sigma(i_k)<\sigma(i_\ell)$ for all $i_k,i_\ell\in\{i_0,i_1,\dots,i_{m-1}\}\cap\nu(s)~(=\nu(s)\setminus\mu(s))$ with $k<\ell$, and
    \item $\sigma(i)<\sigma(j)$ for all $i\in\nu(s)$ and $j\in I\setminus (\mu(s)\cup\nu(s))$.
\end{itemize}
Define positive UM weight $w_s\colon I\to\mathbb{R}_{++}$ by $w_s(i)=2^{n-\sigma(i)}$ for all $i\in I$.

Suppose, for the sake of contradiction, that there exists $\ell\in\{0,1,\dots,m-1\}$ such that 
\begin{align}
\mu(s_{\ell+1}) -i_{\ell+1} + i_\ell \ne C^{w_{s_{\ell+1}}}_{s_{\ell+1}}\big(\{i\in I\mid s_{\ell+1}\succeq_{i}\mu(i)\} - i_{\ell+1}\big).
\end{align}
By the definition of PSIC, we have
\begin{align}
    \mu(s_{\ell+1})-i_{\ell+1}+i_\ell \in \C_{s_{\ell+1}}\big(\{i\in I\mid s_{\ell+1}\succeq_{i}\mu(i)\}-i_{\ell+1}\big).
\end{align}
Since $\mu$ is maximal and $w_{s_{\ell+1}}$ is a positive weight, it follows that
$|\mu(s_{\ell+1})|=|C^{w_{s_{\ell+1}}}_{s_{\ell+1}}(\{i\in I\mid s_{\ell+1}\succeq_{i}\mu(i)\})|$.
By the construction of $w_{s_{\ell+1}}$, we also have
\begin{align}
\mu(s_{\ell+1})-i_{\ell+1} 
\subseteq
C^{w_{s_{\ell+1}}}_{s_{\ell+1}}\big(\{i\in I\mid s_{\ell+1}\succeq_{i}\mu(i)\}-i_{\ell+1}\big).
\end{align}
By LAD, we have $\big|C^{w_{s_{\ell+1}}}_{s_{\ell+1}}\big(\{i\in I\mid s_{\ell+1}\succeq_{i}\mu(i)\}-i_{\ell+1}\big)\big|\le |\mu(s_{\ell+1})|$.
Hence, there exists $i_{k}\in\{i_0,i_1,\dots,i_{\ell-1}\}$ such that 
\begin{align}
\mu(s_{\ell+1})-i_{\ell+1}+i_k 
=
C^{w_{s_{\ell+1}}}_{s_{\ell+1}}\big(\{i\in I\mid s_{\ell+1}\succeq_{i}\mu(i)\}-i_{\ell+1}\big).
\end{align}
This implies that we can construct another PSIC $(i_0,i_1,\dots,i_k,i_{\ell+1},\dots,i_{m-1})$,
which contradicts the assumption that $(i_0,i_1,\dots,i_{m-1})$ does not contain a shortcut.
Therefore, $\mu(s_{\ell+1}) -i_{\ell+1} + i_\ell = C^{w_{s_{\ell+1}}}_{s_{\ell+1}}\big(\{i\in I\mid s_{\ell+1}\succeq_{i}\mu(i)\} - i_{\ell+1}\big)$ for all $\ell\in\{0,1,\dots,m-1\}$.
 
Next, we show that the matching obtained by the cycle $(i_0, i_1, \dots, i_{m-1})$ is stable. Fix any $s\in S$. 
Let $X=\{i_0, i_1, \dots,i_{m-1}\}\cap\mu(s)$ and $Y=\{i_0,i_1,\dots,i_{m-1}\}\cap\nu(s)$.
For each $i_\ell\in Y$, 
we have $i_\ell\in C^{w_s}_s(\{i\in I\mid s\succeq_i \mu(i)\}-i_{\ell+1})$ and $i_{\ell+1}\in X$.
Hence, since $C^{w_s}_s$ satisfies PI, we obtain
\[
\nu(s)=Y\cup(\mu(s)\setminus X) \subseteq C^{w_s}_s(\{i\in I\mid s\succeq_i \mu(i)\}\setminus X).
\]
Moreover, because $C^{w_s}_s$ satisfies LAD, we have
\[
|C^{w_s}_s(\{i\in I\mid s\succeq_i\mu(i)\}\setminus X)|\le|C^{w_s}_s(\{i\in I\mid s\succeq_i\mu(i)\})|=|\mu(s)|.
\]
Furthermore, we have $|\mu(s)|=|\nu(s)|$. 
Together, these facts imply that
\[
\nu(s) = C^{w_s}_s(\{i\in I\mid s\succeq_i\mu(i)\}\setminus X).
\]
Thus, we have $\nu(s) \in \C_s(\{i\in I\mid s\succeq_i\mu(i)\} \setminus X)$, which implies that the matching obtained by the cycle $(i_0, i_1, \dots,i_{m-1})$ is (maximal) stable.

Finally, we observe that a PSIC that does not contain a shortcut can be computed in polynomial time.
We can construct the exchange graph $G=(I,E)$ with directed edges
\begin{align}
E\coloneqq \big\{(i,j)\in I\times I\mid s=\mu(j)\succ_i\mu(i)\text{ and } \mu(s)-j+i\in\C_s(\{i'\in I\mid s\succeq_{i'}\mu(i')\}-j)\big\}
\end{align}
in polynomial time by the membership oracle.
Then, a PSIC is a cycle in this graph $G$ and vice versa.
Thus, our task reduces to finding a minimal cycle in $G$, which can be achieved in polynomial time via breadth-first search.
\end{proof}

\subsubsection{Necessity Part of \Cref{thm:main}}
Next, we prove the necessity part of \Cref{thm:main}. 
\begin{lm}\label{lm:necessity}
If a stable matching $\mu$ is maximal and does not admit any PSIC, then it is constrained efficient.
\end{lm}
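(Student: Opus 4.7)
The plan is to prove the contrapositive: if a stable matching $\nu$ Pareto dominates the maximal stable matching $\mu$, then $\mu$ admits a PSIC. For brevity, write $T_s=\{i\in I:s\succeq_i\mu(i)\}$, $I^+=\{i:\nu(i)\succ_i\mu(i)\}$, $A_s=\mu(s)\setminus\nu(s)$, and $B_s=\nu(s)\setminus\mu(s)$.

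The first step is to show $|\mu(s)|=|\nu(s)|$ for every $s$. For a positive UM weight $w_s$ realizing $\mu(s)=C_s^{w_s}(T_s)$ (which exists by maximality and a construction analogous to \Cref{lm:chw}), \Cref{lm:idemponent} combined with \Cref{coro:maximal} gives $C_s^{w_s}(\nu(s))=\nu(s)$, and then LAD of $C_s^{w_s}$ along $\nu(s)\subseteq T_s$ yields $|\nu(s)|\le|\mu(s)|$. Individual rationality of $\mu$ together with Pareto dominance forces $\sum_s|\mu(s)|\le\sum_s|\nu(s)|$, yielding equality school by school, so in particular $|A_s|=|B_s|$ and $A_s,B_s\subseteq I^+$.

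The heart of the proof is the following \emph{exchange lemma}: for each $s$ and each $b\in B_s$ there is $a\in A_s$ with $\mu(s)+b-a\in\C_s(T_s)$. Granting this lemma, \Cref{lm:restrict} (applied to $\mu(s)\in\C_s(T_s)$ and $Y=T_s-a$) promotes the conclusion to $\mu(s)+b-a\in\C_s(T_s-a)$, matching the third PSIC condition. The exchange lemma lets me define $\phi:I^+\to I^+$ by sending $i$ to the chosen partner $a$ at school $s=\nu(i)$; since $I^+$ is finite, iterating $\phi$ yields a cycle $(i_0,\dots,i_{m-1})$ with $\phi(i_\ell)=i_{\ell+1}$ (indices mod $m$). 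With $s_\ell=\mu(i_\ell)$, the conditions $s_{\ell+1}=\nu(i_\ell)\succ_{i_\ell}s_\ell$ (from $i_\ell\in I^+$) and the exchange lemma verify that $(i_0,\dots,i_{m-1})$ is a PSIC for $\mu$, contradicting the hypothesis.

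The main obstacle is the exchange lemma. The clean approach would be to apply the g-matroid exchange property (\Cref{thm:gmatroid}) to $\mu(s)$ and $\nu(s)$ inside $\C_s(T_s)$ at the element $b\in\nu(s)\setminus\mu(s)$: this would yield $a\in(\mu(s)\setminus\nu(s))\cup\{\emptyset\}$, and maximality rules out $a=\emptyset$. The catch is that this argument requires $\nu(s)\in\C_s(T_s)$; stability of $\nu$ only guarantees $\nu(s)\in\C_s(\{i:s\succeq_i\nu(i)\})$, which is a strict subset of $T_s$ whenever $A_s\neq\emptyset$ (students in $A_s$ strictly prefer $\nu(i)$ to $s$, so they leave the smaller set). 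The way out, exploiting PI (\Cref{def:PI}), is to build a surrogate for $\nu(s)$ in $\C_s(T_s)$: one aims to produce some $Z\in\C_s(T_s)$ with $(\mu(s)\cap\nu(s))\cup\{b\}\subseteq Z$ and $|Z|=|\mu(s)|$, after which the g-matroid exchange applied to $(\mu(s),Z)$ at $b$ yields $a\in\mu(s)\setminus Z\subseteq A_s$, exactly as required. Constructing such $Z$—equivalently, showing directly that the swap $\mu(s)+b-a$ lies in $\C_s(T_s)$ for at least one $a\in A_s$—is, I expect, the technical crux, and will require a careful structural argument combining the closure-operator machinery (\Cref{prop:closure}, \Cref{lm:tau=}, \Cref{lm:interval}, \Cref{lm:taurep}, \Cref{lm:restrict}) with the stability of $\nu$, the g-matroid structure, and the equal-cardinality conclusion of the first step.
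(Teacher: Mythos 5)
Your reduction to the ``exchange lemma'' is where the argument breaks, and the lemma is not merely unproven---it is false as stated. The claim that for every $b\in B_s=\nu(s)\setminus\mu(s)$ there is $a\in A_s$ with $\mu(s)+b-a\in\C_s(T_s)$ requires, at a minimum, that $b$ belong to \emph{some} member of the g-matroid $\C_s(T_s)$, and stability of $\nu$ gives no such thing: it only places $\nu(s)$ in $\C_s$ of the strictly smaller set $\{i\mid s\succeq_i\nu(i)\}$, exactly the obstruction you flag. Concretely, let $\C_s$ be induced by the weighted rank-one matroid on acceptable students $\{x,y,b\}$ with weights $v(x)=v(y)=2>v(b)=1$, so that $\C_s(\{x,y,b\})=\{\{x\},\{y\}\}$ while $\C_s(\{b\})=\{\{b\}\}$. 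One can complete this to a market (with singleton trading cycles at auxiliary schools) in which $\mu(s)=\{x\}$, $T_s=\{x,y,b\}$, $\mu$ is maximal and stable, and a stable $\nu$ Pareto dominating $\mu$ has $\nu(s)=\{b\}$ (this is consistent because $x$ and $y$ both strictly prefer their $\nu$-assignments to $s$, so $\{i\mid s\succeq_i\nu(i)\}=\{b\}$). Here $B_s=\{b\}$, $A_s=\{x\}$, and $\mu(s)+b-x=\{b\}\notin\C_s(T_s)$: no admissible swap places $b$ at $s$. A PSIC still exists in such a market, but it routes a student outside $\nu(s)$ (here $y$, via a swap at other schools) rather than following $\nu$. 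So the map $\phi$ you build from the exchange lemma cannot be defined, and no ``surrogate $Z\in\C_s(T_s)$ containing $b$'' exists to rescue it. (A smaller point: as written, your appeal to \Cref{lm:restrict} with $S=\mu(s)$ and $Y=T_s-a$ is invalid since $a\in\mu(s)\not\subseteq T_s-a$; you would need to take $S=\mu(s)+b-a$ instead, which is fine once that set is known to lie in $\C_s(T_s)$.)

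The paper's proof avoids this trap by not insisting that the improving cycle realize $\nu$. It builds a graph whose edge into the node for $(i,\mu(i))$ is witnessed by $C^w_s$ applied to the \emph{intermediate} set $\bigl(\{i'\in I'\mid s\succ_{i'}\mu(i')\}\cup\mu(s)\bigr)-i$; the witness $j$ is whatever student that tie-broken choice picks up, not necessarily a member of $\nu(s)$. The matching $\nu$ enters only as a certificate---via substitutability, LAD, and $|\mu(s)|=|\nu(s)|$---that a full-size choice avoiding $i$ exists, which forces an in-edge at every node and hence a cycle. A second, separate argument (the two-case analysis using \Cref{lm:interval} and the stability of $\nu$) is then needed to upgrade the membership from the intermediate set to the full set $\{i\mid s\succeq_i\mu(i)\}-i$ required by the PSIC definition. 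Your outline contains neither the intermediate-set device nor this upgrading step, and both are essential.
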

\begin{proof}
It suffices to prove that if a maximal stable matching $\mu$ is not constrained efficient, then it admits a PSIC.
Suppose that $\mu$ is Pareto dominated by a constrained efficient stable matching $\nu$.
By \Cref{lm:sufficiency1}, $\nu$ is maximal.
Define $I'=\{i\in I\mid \nu(i)\ne \mu(i)\}$. Note that each $i\in I'$ strictly prefers $\nu$ to $\mu$. 

First, we show that $|\mu(s)|=|\nu(s)|$ for each $s\in S$. 
Since $\nu$ Pareto dominates $\mu$, we have $\{i\in I\mid s\succeq_i\nu(i)\}\subseteq \{i\in I\mid s\succeq_i\mu(i)\}$. 
By \Cref{coro:maximal} and maximality of $\nu$ and $\mu$, we have 
\begin{align}
|\nu(s)|
=|C^{w}_s(\{i\in I\mid s\succeq_i\nu(i)\})|
\le |C^{w}_s(\{i\in I\mid s\succeq_i\nu(i)\})|
= |\mu(s)|,
\end{align}
for any positive UM weight $w\colon I\to\mathbb{R}_{++}$, where the inequality follows from LAD of $C^w_s$.
Suppose to the contrary that $|\nu(s)|< |\mu(s)|$. Then, we have $\sum_{s'\in S}|\nu(s')|<\sum_{s'\in S}|\mu(s')|$. This implies that there exists $i\in I$ such that $\nu(i)=\varnothing$ and $\mu(i)\in S$, contradicting the assumption that $\nu$ Pareto dominates $\mu$.
Hence, $|\mu(s)|=|\nu(s)|$ for all $s\in S$.

Next, we show that there is a cycle on the following directed bipartite graph $(I',J;E)$, where
\begin{align}
J&\coloneqq\big\{(i,\mu(i))\mid i\in I'\big\},\\
E&\coloneqq\big\{\big((i,\mu(i)),\,i\big)\mid i\in I'\big\}\\
&\qquad\cup\pig\{\big(j,\,(i,s)\big)\mid j\not\in\mu(s)\text{ and }\mu(s)-i+j\in \C_{s}\pig(\big(\{i'\in I'\mid s\succ_{i'}\mu(i')\}\cup\mu(s)\big)-i\pig)\pig\}.
\end{align}

To prove the existence of a cycle, it suffices to show that for each $(i,\mu(i))\in J$ there exists $j\in I'$ such that $\big(j,\,(i,\mu(i))\big)\in E$.
Let $s=\mu(i)$. 
Since $\nu$ Pareto dominates $\mu$, we have $\nu(s)\setminus\mu(s)\subseteq \{i'\in I'\mid s\succ_{i'}\mu(i')\}$. Additionally, $\mu(s)\cap \nu(s)\subseteq \mu(s)$ and $i\in \mu(s)\setminus \nu(s)$. 
Hence, it follows that
\begin{align}
\nu(s)\subseteq \big(\{i'\in I'\mid s\succ_{i'}\mu(i')\}\cup\mu(s)\big)-i
\subseteq \{i'\in I\mid s\succeq_{i'}\mu(i')\}. \label{eq:medinclusion}
\end{align}
Since $\mu$ is a maximal stable matching, there is a positive UM weight $w$ such that $\mu(s)= C_s^w(\{i'\in I\mid s\succeq_{i'}\mu(i')\})$. 
By the substitutability of $C^w_s$ and \eqref{eq:medinclusion}, we have
\begin{align}
\mu(s)-i
&=
C^w_s(\{i'\in I\mid s\succeq_{i'}\mu(i')\})\cap\pig(\big(\{i'\in I'\mid s\succ_{i'}\mu(i')\}\cup\mu(s)\big)-i\pig)\\
&\subseteq C^w_s\pig(\big(\{i'\in I'\mid s\succ_{i'}\mu(i')\}\cup\mu(s)\big)-i\pig).
\end{align}
Moreover, by the stability of $\nu$, we have $\nu(s)\in \C_s(\nu(s))$.
Hence, by LAD of $C^w_s$ and \eqref{eq:medinclusion}, we have
\begin{align}
|\nu(s)|
=|C^w_s(\nu(s))|
&\le \big|C^w_s\big(\big(\{i'\in I'\mid s\succ_{i'}\mu(i')\}\cup\mu(s)\big)-i\big)\big|\\
&\le \big|C^w_s\big(\{i'\in I\mid s\succeq_{i'}\mu(i')\}\big)\big|
=|\mu(s)|.
\end{align}
Together with $|\mu(s)|=|\nu(s)|$, we have $|\mu(s)|= \big|C^w_s\big(\big(\{i'\in I'\mid s\succ_{i'}\mu(i')\}\cup\mu(s)\big)-i\big)\big|$.
Thus, there exists $j\in I'\setminus\mu(s)$ such that 
\begin{align}
\mu(s)-i+j=
C^w_s\big(\big(\{i'\in I'\mid s\succ_{i'}\mu(i')\}\cup\mu(s)\big)-i\big)
\in \C_s\big(\big(\{i'\in I'\mid s\succ_{i'}\mu(i')\}\cup\mu(s)\big)-i\big).
\end{align}
Hence, a cycle must exist in the graph.

Finally, we show the existence of a PSIC. Consider any cycle 
$((i_0,s_0),i_0,(i_1,s_1),i_1,\dots,(i_p,s_p),i_p)$ on $(I',J;E)$. 
We will show that $(i_0,i_1,\dots,i_p)$ is a PSIC.
By definition, we have $s_\ell=\mu(i_\ell)\ne s_{\ell+1}$, and for every $\ell\in\{0,1,\dots,p\}$
\begin{align}
\mu(s_{\ell+1})-i_{\ell+1}+i_{\ell}\in \C_{s_{\ell+1}}\Big(\big(\{i'\in I'\mid s_{\ell+1}\succ_{i'}\mu(i')\}\cup\mu(s_{\ell+1})\big)-i_{\ell+1}\Big).
\label{eq:midin}
\end{align}
This implies that $s_{\ell+1}\succ_{i_\ell}s_\ell$ for every $\ell\in\{0,1,\dots,p\}$.
Thus, to prove that $(i_0,i_1,\dots,i_p)$ is a PSIC, it suffices to show that for all $\ell\in\{0,1,\dots,p\}$,
\begin{align}
\mu(s_{\ell+1})-i_{\ell+1}+i_{\ell}\in \C_{s_{\ell+1}}\big(\{i\in I\mid s_{\ell+1}\succeq_i \mu(i)\}-i_{\ell+1}\big).
\end{align}
Assume for contradiction that there exists some index $\ell$ such that
\begin{align}
\mu(s_{\ell+1})-i_{\ell+1}+i_{\ell}\notin \C_{s_{\ell+1}}\big(\{i\in I\mid s_{\ell+1}\succeq_i \mu(i)\}-i_{\ell+1}\big).
\label{eq:midout}
\end{align}
Fix such an index $\ell$.

Similar to \eqref{eq:medinclusion}, we have
\begin{align}\label{eq:medinclusion2}
\begin{split}
\nu(s_{\ell+1})
&\subseteq \big(\{i'\in I'\mid s_{\ell+1}\succ_{i'}\mu(i')\}\cup\mu(s_{\ell+1})\big)-i_{\ell+1}\\
&\subseteq \{i\in I\mid s_{\ell+1}\succeq_{i}\mu(i)\}-i_{\ell+1}
\subseteq \{i\in I\mid s_{\ell+1}\succeq_{i}\mu(i)\}. 
\end{split}
\end{align}
By \eqref{eq:midin}, \eqref{eq:midout}, and \Cref{lm:interval}, we have 
\begin{align}
\C_{s_{\ell+1}}\Big(\big(\{i'\in I'\mid s\succ_{i'}\mu(i')\}\cup\mu(s_{\ell+1})\big)-i_{\ell+1}\Big)
\cap\C_{s_{\ell+1}}\big(\{i\in I\mid s_{\ell+1}\succeq_i \mu(i)\}-i_{\ell+1}\big)
=\emptyset. \label{eq:nonintersect}
\end{align}

Let $\sigma\colon\{1,2,\dots,n\}\to I$ be a bijection such that 
$\mu(s_{\ell+1})=\{\sigma(1),\dots,\sigma(p)\}$,
$\nu(s_{\ell+1})\setminus\mu(s_{\ell+1})=\{\sigma(p+1),\dots,\sigma(q)\}$, and 
$I\setminus(\nu(s_\ell)\cup\mu(s_\ell))=\{\sigma(q+1),\dots,\sigma(n)\}$.
Let $w$ be a positive UM weight defined by $w(\sigma(t))=2^{n-t}$ for each $t\in\{1,2,\dots,n\}$.
Then, $\mu(s_{\ell+1})=C_{s_{\ell+1}}^w(\{i\in I\mid s_{\ell+1}\succeq_i\mu(i)\})$ and $\nu(s_{\ell+1})=C_{s_{\ell+1}}^w(\nu(s_{\ell+1}))$.
By the substitutability of $C_{s_{\ell+1}}^w$ and \eqref{eq:medinclusion2}, we have
\begin{align}
\mu(s_{\ell+1})-i_{\ell+1}
&=C_{s_{\ell+1}}^w\big(\{i\in I\mid s_{\ell+1}\succeq_i \mu(i)\}\big)\cap\big(\{i\in I\mid s_{\ell+1}\succeq_i \mu(i)\}-i_{\ell+1}\big)\\
&\subseteq C_{s_{\ell+1}}^w\big(\{i\in I\mid s_{\ell+1}\succeq_i \mu(i)\}-i_{\ell+1}\big).
\label{eq:mu-i}
\end{align}
By \eqref{eq:medinclusion2} and LAD of $C_{s_\ell}^w$, we obtain
\begin{align}
|\nu(s_{\ell+1})|
=|C_{s_{\ell+1}}^w(\nu(s_{\ell+1}))|
&\le |C_{s_{\ell+1}}^w(\{i\in I\mid s_{\ell+1}\succeq_i \mu(i)\}-i_{\ell+1})|\\
&\le |C_{s_{\ell+1}}^w(\{i\in I\mid s_{\ell+1}\succeq_i \mu(i)\})|
=|\mu(s_{\ell+1})|.
\label{eq:nulemu}
\end{align}
By combining \eqref{eq:mu-i}, \eqref{eq:nulemu}, and $|\mu(s_{\ell+1})|=|\nu(s_{\ell+1})|$, there exists a student $j\in \{i\in I\mid s_{\ell+1}\succ_i\mu(i)\}$ such that
\begin{align}
\mu(s_{\ell+1})-i_{\ell+1}+j
= C_{s_{\ell+1}}^w(\{i\in I\mid s_{\ell+1}\succeq_i \mu(i)\}-i_{\ell+1})
\in \C_{s_{\ell+1}}(\{i\in I\mid s_{\ell+1}\succeq_i \mu(i)\}-i_{\ell+1}).
\label{eq:mu-j}
\end{align}
In what follows, we consider two cases depending on whether (a) $j\in I'$ and (b) $j\notin I'$.

\addvspace{3mm}
\noindent\textbf{Case (a)}: Suppose that $j\in I'$.
Then, by \eqref{eq:medinclusion2}, \eqref{eq:mu-j}, and PI of $C_{s_{\ell+1}}^w$, we have
\begin{align}
\mu(s_{\ell+1})-i_{\ell+1}+j
&= C^w_{s_{\ell+1}}\big(\big(\{i'\in I'\mid s_{\ell+1}\succ_{i'}\mu(i')\}\cup\mu(s_{\ell+1})\big)-i_{\ell+1}\big)\\
&\in \C_{s_{\ell+1}}\big(\big(\{i'\in I'\mid s_{\ell+1}\succ_{i'}\mu(i')\}\cup\mu(s_{\ell+1})\big)-i_{\ell+1}\big).
\end{align}
Consequently, \Cref{lm:interval} implies that 
\begin{align}
\C_{s_{\ell+1}}\Big(\big(\{i'\in I'\mid s\succ_{i'}\mu(i')\}\cup\mu(s_{\ell+1})\big)-i_{\ell+1}\Big)
\subseteq \C_{s_{\ell+1}}\big(\{i\in I\mid s_{\ell+1}\succeq_i \mu(i)\}-i_{\ell+1}\big).
\end{align}
This contradicts \eqref{eq:nonintersect}.

\addvspace{3mm}
\noindent\textbf{Case (b)}: Suppose that $j\notin I'$. 
Then, $j\in I\setminus(\mu(s_{\ell+1})\cup\nu(s_{\ell+1}))$ and 
$s_{\ell+1}\succ_j \mu(j)=\nu(j)$.
Since $\nu$ is stable, we have $\nu(s_{\ell+1})\in \C_{s_{\ell+1}}(\{i\in I\mid s_{\ell+1}\succeq_i \nu(i)\})$.
Moreover, 
\begin{align}
\mu(s_{\ell+1})\setminus\nu(s_{\ell+1})
\subseteq \{i\in I\mid \nu(i)\succ_i s_{\ell+1}\}
= I\setminus \{i\in I\mid s_{\ell+1}\succeq_i\nu(i)\}
\end{align}
since $\nu$ Pareto dominates $\mu$. Thus, by the construction of $w$, we have $\nu(s_{\ell+1})= C^{w}_{s_{\ell+1}}(\{i\in I\mid s_{\ell+1}\succeq_i \nu(i)\})$. Hence, by \eqref{eq:medinclusion2}, \eqref{eq:mu-j}, and the substitutability of $C^{w}_{s_{\ell+1}}$, we have
\begin{align}
j
&\in (\mu(s_{\ell+1})-i_{\ell+1}+j)\cap 
\{i\in I\mid s_{\ell+1}\succeq_{i}\nu(i)\}\\
&=C_{s_{\ell+1}}^w\big(\{i\in I\mid s_{\ell+1}\succeq_{i}\mu(i)\}-i_{\ell+1}\big)\cap 
\{i\in I\mid s_{\ell+1}\succeq_{i}\nu(i)\}\\
&\subseteq C_{s_{\ell+1}}^w(\{i\in I\mid s_{\ell+1}\succeq_{i}\nu(i)\})=\nu(s_{\ell+1})\not\ni j,
\end{align}
which is a contradiction.
\end{proof}

Finally, we prove \Cref{thm:main}.
\begin{proof}[Proof of \Cref{thm:main}]
By Lemmas~\ref{lm:sufficiency1} and \ref{lm:sufficiency2}, every constrained efficient stable matching is maximal and admits no PSIC. Conversely, by \Cref{lm:necessity}, any stable matching that is maximal and does not admit a PSIC is constrained efficient.

If a given stable matching is not maximal, we can compute a Pareto-improving stable matching in polynomial time, as demonstrated in \Cref{lm:sufficiency1}. 
Additionally, if a maximal stable matching admits a PSIC, a Pareto-improving stable matching can be obtained in polynomial time, as shown in \Cref{lm:sufficiency2}. 
Since the number of possible Pareto improvements is at most $|I| \cdot |S|$, the overall computational time is bounded by a polynomial.
\end{proof}

\begin{remark}\label{rem:improve-hard}
\citet{IK2024b} proved that checking whether the initial matching $\mu_0$ is PE for a market $(I,S,(\succ_i)_{i\in I},(\cF_s)_{s\in S},\mu_0)$ is coNP-hard, even when the constraints $\cF_s$ are budget constraints (i.e., constants that can be represented in the form $\{X\subseteq I\mid \sum_{i\in X}a_i\le b\}$).
Hence, checking the constrained efficiency of a given stable matching is a difficult task when the choice correspondences may not satisfy PI and LAD.
Note that for such a constraint, the associated choice correspondence $\C_s(X)=\{Y\subseteq X\mid Y\in\cF_s\}$ satisfies substitutability and IRC (see \Cref{sec:sub_PI}).
\end{remark}

\section{Applications}\label{sec:application}
In this section, we explore several examples of practical choice correspondences and demonstrate their diverse applications, particularly in the context of matching theory.
Notably, recall that every choice correspondence rationalized by an M${}^\natural$-concave function satisfies PI and LAD.

\subsection*{Responsive Choice Correspondences}
\citet{abdulkadirouglu2003school} studied school choice problems using responsive choice functions. In practice, however, a school’s priority ranking often includes ties. For example, if priority is determined by test scores, applicants with the same test score are tied. Moreover, in situations such as the Boston public school choice system---where only neighborhood and sibling priorities are considered---many ties can occur. In such cases, each school should have a responsive choice correspondence. 

Each school $s$ has a capacity $q_s\in\mathbb{Z}_{+}$ and a weak order $\succeq_s$ over $I$. 
For every school $s\in S$, a responsive choice correspondence $\C_s$ is rationalized by utility function $u_s$ that is defined as follows: there exists a valuation $v_s\colon I\to\mathbb{R}_{++}$ satisfying $v_s(i)\ge v_s(j)$ if and only if $i\succeq_s j$ for all $i,j\in I$ such that for each $X\in 2^{I}$,
\begin{align}
u_s(X)=
\begin{cases}
\sum_{i\in X}v_s(i) & \text{if }|X|\le q_s,\\
-\infty             & \text{if }|X|>q_s.
\end{cases}
\end{align}
Since $u_s$ is derived from a weighted matroid with a uniform matroid of rank $q_s$, the utility function $u_s$ induces a choice correspondence that is PI and LAD. 

It is worth mentioning that the resulting choice correspondence remains the same for any other valuation $v'\colon I\to\mathbb{R}_{++}$ satisfying $v'(i)\ge v'(j)$ if and only if $i\succeq_s j$ for all $i,j\in I$ due to a property of matroids.

\citet{erdil2008s} provided a cycle‐based characterization of constrained efficient matching under responsive choice correspondences. We obtain this result as a corollary of our Theorem~\ref{thm:main} because any responsive choice correspondence satisfies PI and LAD.

\subsection*{Controlled School Choice}

A school district may require specific diversity in the student body at each school. \citet{abdulkadirouglu2003school} formalized this requirement by imposing type-specific quotas for each school. \citet{hafalir2013effective} proposed an affirmative action policy based on minority reserves. \citet{ehlers2014school} incorporated these ideas and introduced type-specific (soft) quotas and reserves.

\citet[online appendix]{kojima2018designing} showed that these choice functions can be rationalized by M$^{\natural}$-concave functions. Moreover, we observe that this result applies to weak priorities (i.e., choice correspondence). 

Suppose that $(I_t)_{t\in T}$ is a partition of students with types $T$, i.e., $\bigcup_{t\in T}I_t=I$ and $I_t\cap I_{t'}=\emptyset$ for all $t,t'\in T$ with $t\ne t'$.
We write $t(i)$ to denote the type of $i\in I$. Thus, $i\in I_{t(i)}$.
Each school $s$ has a capacity $q_s\in\mathbb{Z}_{+}$ and soft minimum and maximum bounds for each type $t$, denoted by $\underline{q}_{s,t}$ and $\overline{q}_{s,t}$, respectively.
We assume $\sum_{t\in T}\underline{q}_{s,t}\le q_s$ holds. 
In addition, each school $s$ has a weak order $\succeq_s$ over $I$.
Let $v_s\colon I\to\mathbb{R}_{++}$ be a valuation satisfying $v_s(i)\ge v_s(j)$ if and only if $i\succeq_s j$ for all $i,j\in I$.
Then, the choice correspondence $\C_s$ is rationalizable by
$$
u_s(X)=
\begin{cases}
\sum_{i\in X} \left(1+\epsilon^2 v_s(i)\right)+\epsilon\sum_{t\in T}\Big(\min\left\{|X_t|,\,\underline{q}_{s,t}\right\}+\min\left\{|X_t|,\,\overline{q}_{s,t}\right\}\Big) & \text{if }|X|\le q_s,\\
-\infty             & \text{if }|X|>q_s,
\end{cases}
$$
where $\epsilon$ is a sufficiently small positive real number.
Thus, $\C_s$ is rationalizable by a laminar concave function for $\cL=\{\{i\}\mid i\in I\}\cup \{I_t\mid t\in T\}\cup \{I\}$, and hence, $\C_s$ is PI and LAD.

\subsection*{Evenly Distributed and Constrained Responsive Choice Correspondences}
\citet{erdil2019efficiency} studied how symmetric treatment of types can be implemented with type-specific reserves.
Suppose that $(I_t)_{t\in T}$ is a partition of students with types $T$, i.e., $\bigcup_{t\in T}I_t=I$ and $I_t\cap I_{t'}=\emptyset$ for all $t,t'\in T$ with $t\ne t'$.
We write $t(i)$ to denote the type of $i\in I$. Thus, $i\in I_{t(i)}$.
Each school $s$ has a capacity $q_s\in\mathbb{Z}_{+}$ and type-specific reserves $r_s\in \mathbb{Z}_+^{T}$ with $\sum_{t\in T}r_{s,t}\le q_s$. In addition, each school $s$ has a weak order $\succeq_s$ over $I$.
For each $X\in 2^I$, surplus seats ($q_s-\sum_{t\in T}\min\{|X_t|,r_{s,t}\}$) are distributed evenly among types.
They introduced evenly distributed and constrained responsive (EDCR) choice correspondences $\C_s$. This choice correspondence satisfies acceptance. For each $X\in 2^I$ with $|X|>q_S$, this choice proceeds in two stages. 
\begin{enumerate}
\item It selects subsets of students $X'\subseteq X$ of size $q_s$ that minimizes $\sum_{t\in T}(r_{s,t}-|X'_t|)^2$. 
\item Among the subsets chosen in the first stage, it selects best subsets with respect to a weak priority $\succeq_s$.
\end{enumerate}
For each $s\in S$, let $v_s\colon I\to\mathbb{R}_{++}$ be a positive weight such that $v_s(i)\ge v_s(j)$ if and only if $i\succeq_s j$ for all $i,j\in I$.
Then, the choice correspondence $\C_s$ is rationalizable by
$$
u_s(X)=
\begin{cases}
|X|-\epsilon\sum_{t\in T}(r_{s,t}-|X_t|)^2+\epsilon^2\sum_{i\in X} v_s(i)+\epsilon\sum_{t\in T}r_{s,t}^2 & \text{if }|X|\le q_s,\\
-\infty             & \text{if }|X|>q_s,
\end{cases}
$$
where $\epsilon$ is a sufficiently small positive real number.
By a simple calculation, we obtain
\begin{align}
|X|-\epsilon\sum_{t\in T}(r_{s,t}-|X_t|)^2+\epsilon^2\sum_{i\in X} v_s(i)
+\epsilon\sum_{t\in T}r_{s,t}^2&=-\epsilon\sum_{t\in T}|X_t|^2+\sum_{i\in X} (1+2\epsilon r_{s,t(i)}+\epsilon^2 v_s(i)).
\end{align}
Thus, $u_s$ is a laminar concave function for $\cL=\{\{i\}\mid i\in I\}\cup \{I_t\mid t\in T\}\cup \{I\}$.
Hence, $\C_s$ is PI and LAD.



\subsection*{Overlapping Reserves}
In practice, each student can have multiple types. In practice, each student can have multiple types. One example is affirmative action policies that account for both racial and income minorities. There are several ways to count a student with multiple types toward a reserved seat~\citep{kurata2017controlled}. Let $T$ be the set of types and $I_t\subseteq I$ be the set of students with type $t\in T$.
Each school $s$ has a capacity $q_s\in\mathbb{Z}_{+}$ and type-specific reserves $r_s\in \mathbb{Z}_+^{T}$ with $\sum_{t\in T}r_{s,t}\le q_s$\footnote{This condition could be removed without affecting the construction of the PI choice correspondence. However, it is included to ensure that the concept of ``reserve'' is guaranteed.}. In addition, each school $s$ has a weak order $\succeq_s$ over $I$.
We focus on \emph{one-to-one counting}, where a student counts toward a reserved seat as only one of her types. 
This model includes important real-life applications, such as affirmative action in India~\citep{sonmez2022affirmative} and Brazil~\citep{aygun2021college}.

In this setting, \citet{sonmez2022affirmative} proposed a \emph{meritorious horizontal choice function} for cases where each school has strict priority $\succ_s$ over $I$. This function is designed to maximize the reserve utilization. Formally, the sets of students that can be assigned to reserved seats for $s$ are represented by
$$\cF_s=\{X\subseteq I\mid \exists \pi\colon X\to T\text{ such that }\pi^{-1}(t)\subseteq I_t~(\forall t\in T)\text{ and } |\pi^{-1}(t)|\le r_{s,t}~(\forall t\in T)\}.$$
It is known that $\cF_s$ is a (transversal) matroid.
A meritorious horizontal choice function consists of two stages:
\begin{enumerate}
\item In the first stage, the best subset of students in $\cF_s$ is selected based on $\succ_s$.
\item In the second stage, the remaining seats are assigned to the best remaining students based on $\succ_s$.
\end{enumerate}

In what follows, we observe that a meritorious horizontal choice function is rationalizable by an M${}^\natural$-concave function.
Construct a bipartite graph $G=(I,J;E)$ with weight $w_e\in\mathbb{R}$ for each $e\in E$, where 
\begin{itemize}
\item $J\coloneqq H\cup\bigcup_{t\in T} P_t$ with $P_t\coloneqq\{p_{t,1},\dots,p_{t,r_{s,t}}\}$ ($\forall t\in T$) and $H\coloneqq\{h_1,\dots,h_{q_s}\}$,
\item $E\coloneqq \{(i,h)\mid i\in I,\ h\in H\}\cup\bigcup_{t\in T}\{(i,p)\mid i\in I_t,\ p\in P_t\}$.
\end{itemize}
Additionally, let $\cJ=\{J'\subseteq J\mid |J'|\le q_s\}$ be the uniform matroid of rank $q_s$ on $J$.
For $M \subseteq E$, we denote by $\partial M$ the set of the vertices incident to some edge in $M$, and call $M$ a \emph{matching} if $|I \cap \partial M| = |M| = |J \cap \partial M|$.
For $X\subseteq I$, we write $u_s(X)$ to denote the maximum weight of a matching $M$ such that the end-vertices in $I$ are equal to $X$ and the end-vertices in $J$ form an independent set, i.e.,
\begin{align}
u_s(X)=\max\left\{\sum_{e\in M}w_e\mid 
\text{$M\subseteq E$ is a matching},\ 
I\cap \partial M=X,\ 
J\cap \partial M\in\cJ
\right\},
\end{align}
where $u_s(X)=-\infty$ if no such $M$ exists for $X$.
Such $u_s$ is called an \emph{independent assignment valuation}. It is known that an independent assignment is an M${}^\natural$-concave function~\cite[Section 3.6]{murota2016}). 
Thus, $u_s$ induces a choice correspondence that is PI and LAD.

Let $v_s\colon I\to\mathbb{R}_{++}$ be a positive weight such that $v_s(i)> v_s(j)$ if and only if $i\succ_s j$ for all $i,j\in I$.
Then, the utility function $u_s$ induces the meritorious horizontal choice function by setting $w_{(i,p)}=v_s(i)+M$ for $(i,p)\in \bigcup_{t\in T}(I_t\times P_t)$ and $w_{(i,h)}=v_s(i)$ for $(i,h)\in I\times H$, where $M$ is a sufficiently large positive real number (e.g., $M=1+\sum_{i\in I}v_s(i)$).
Moreover, by adjusting the weight settings, it is also possible to represent choice correspondences in cases where the priority is weakly ordered or where the priority changes depending on which type is adopted.

\section{Conclusion}\label{sec:conclusion}
In this paper, we introduced PI choice correspondences and examined their key properties, including rationalizability and the g-matroid structure. Building on these properties, we developed a characterization of constrained efficient stable matching using a PSIC. Additionally, we highlighted the broad applicability of PI choice correspondences by leveraging M${}^\natural$-concave functions within the framework of discrete convex analysis.

Some choice correspondences lie outside the scope of our framework. 
For instance, \citet{che2019weak} examines those motivated by multidivisional organizations and regional caps; however, these correspondences are not rationalizable. Furthermore, \citet{erdil2019efficiency} introduced a choice correspondence---termed \emph{admissions by a committee}---and showed that constrained efficient matchings can still be characterized by cycles. Nonetheless, as we demonstrate in \Cref{ex:admission} in the appendix, this type of correspondence may not be rationalizable. 
Developing a more general theory that incorporates such choice correspondences remains a promising direction for future research.


\section*{Acknowledgments}
This work was partially supported by JSPS KAKENHI Grant Numbers JP21H04979, JP23K01312, and JP23K12443, JST PRESTO Grant Number JPMJPR2122, JST ERATO Grant Number JPMJER2301, and Value Exchange Engineering, a joint research project between Mercari, Inc.\ and the RIISE.

\bibliographystyle{abbrvnat}
\bibliography{pe}

\appendix

\section{Characterization of Rationalizability}\label{sec:rationalize}
In this section, we present a characterization of rationalizability for choice correspondences.
Our result generalizes the characterization of rationalizability for choice functions provided by \citet{yang2020rationalizable} to choice correspondences.
Yang demonstrated that a choice function is rationalizable if and only if it satisfies the \emph{strong axiom of revealed preference (SARP)}~\citep{aygun2013matching}.
Unlike choice functions, where a single unique choice is considered, choice correspondences allow for multiple selections from the same set, requiring us to account for situations where different sets are assigned the same value. This additional consideration is crucial when analyzing rationalizability.


Define 
\begin{align}
\Gamma &\coloneqq \{X \in 2^I \mid X \in \C(Y) \text{ for some } Y \subseteq I\},\\
P &\coloneqq \big\{(X, Y) \in \Gamma \times \Gamma \mid \{X, Y\} \subseteq \C(Z) \text{ for some } Z \supseteq X \cup Y\big\}.
\end{align}
If a utility function $u$ induces $\C$, then we have $u(X)=u(Y)$ for all $(X,Y)\in P$.
Note that $P$ is a symmetric (i.e., if $(X, Y) \in P$, then $(Y, X) \in P$) and reflexive (i.e., $(X, X) \in P$ for all $X \in \Gamma$) binary relation.
Let $\sim$ denote the transitive closure of $P$. Then, $\sim$ is an equivalence relation. Define 
\begin{align}
\Gamma' = \{[X] \mid X \in \Gamma\},
\end{align}
where $[X] = \{Y \in \Gamma \mid X \sim Y\}$ is the equivalence class of $X$ under $\sim$.
If a utility function $u$ induces $\C$, then it holds that $u(X)=u(Y)$ for all $X,Y\in\Gamma$ such that $X\sim Y$.
Next, define 
\begin{align}
Q \coloneqq \big\{([X], [Y]) \mid X \in \C(Z) \text{ and } Y \notin \C(Z) 
\text{ for some } X, Y \in \Gamma 
\text{ and } Z \subseteq I 
\text{ with } Z \supseteq X \cup Y\big\}.
\end{align}
Finally, let $\succ$ denote the transitive closure of $Q$.
Intuitively, if a utility function $u$ induces $\C$, then we have $u(X)>u(Y)$ for all $X,Y\in\Gamma$ such that $X\succ Y$.

We characterize rationalizability by using a strict partial order.
A homogeneous relation $\succ$ is strict partial order if it satisfies (i) transitivity, (ii) irreflexivity (i.e., $X\not\succ X$), and (iii) asymmetry (i.e., $X\succ Y$ implies $Y\not\succ X$).
\begin{thm}
A choice correspondence $\C$ is rationalizable if and only if $\succ$ is a strict partial order.
\end{thm}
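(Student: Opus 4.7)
I would prove the equivalence in both directions, with the forward direction essentially extracting the partial order from the utility function and the backward direction building a utility function from a linear extension of $\succ$.

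For the forward direction, suppose that $\C$ is rationalized by some $u\colon 2^I\to\Rbar$. I would first show that $u$ is constant on $\sim$-classes. If $(X,Y)\in P$, then $X,Y\in\C(Z)=\argmax\{u(W)\mid W\subseteq Z\}$ for a common $Z\supseteq X\cup Y$, so $u(X)=u(Y)$; the transitive closure then gives $u(X)=u(Y)$ for all $X\sim Y$, so $u$ descends to a well-defined function $\bar u\colon \Gamma'\to\mathbb{R}$. Next, if $([X],[Y])\in Q$ with witness $Z$, then $X\in\C(Z)$ and $Y\subseteq Z$ but $Y\notin\C(Z)=\argmax$, so $u(Y)<u(X)$; transitivity gives $\bar u([X])>\bar u([Y])$ whenever $[X]\succ[Y]$. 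In particular $[X]\succ[X]$ would force $u(X)>u(X)$, so $\succ$ is irreflexive, and combined with transitivity (which is by definition) this makes $\succ$ a strict partial order.

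For the converse, suppose $\succ$ is a strict partial order on the finite set $\Gamma'$. I would pick a linear extension $\succ^*$ of $\succ$ and define an injective function $v\colon \Gamma'\to\mathbb{R}$ consistent with $\succ^*$ (for instance by numbering the classes by their $\succ^*$-rank and translating so that $v([\emptyset])=0$, which is possible since $\emptyset\in\C(\emptyset)$ ensures $\emptyset\in\Gamma$). Then define $u\colon 2^I\to\Rbar$ by $u(X)=v([X])$ if $X\in\Gamma$ and $u(X)=-\infty$ otherwise, and verify $\C(Z)=\argmax\{u(Y)\mid Y\subseteq Z\}$. The inclusion $\C(Z)\subseteq\argmax$ is straightforward: for $X\in\C(Z)$ and any $Y\subseteq Z$, either $Y\in\C(Z)$ (so $(X,Y)\in P$ and $u(X)=u(Y)$), or $Y\in\Gamma\setminus\C(Z)$ (so $([X],[Y])\in Q$ and $u(X)>u(Y)$), or $Y\notin\Gamma$ (so $u(Y)=-\infty<u(X)$).

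The main obstacle, and the place where irreflexivity of $\succ$ plays its essential role, is the reverse inclusion $\argmax\subseteq\C(Z)$. Given $Y\in\argmax\{u(Y')\mid Y'\subseteq Z\}$ and any $X\in\C(Z)$, the previous paragraph gives $u(X)=u(Y)$, so $u(Y)\in\mathbb{R}$ and hence $Y\in\Gamma$, and injectivity of $v$ forces $[X]=[Y]$, that is $X\sim Y$. I then need $Y\in\C(Z)$. If instead $Y\notin\C(Z)$, then $([X],[Y])\in Q$ would give $[X]\succ[Y]=[X]$, contradicting irreflexivity. This one application of the strict-partial-order hypothesis is the entire content of the hard direction; everything else is bookkeeping about the definitions of $P$, $Q$, and the equivalence classes $\Gamma'$.
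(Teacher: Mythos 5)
Your proposal is correct and follows essentially the same route as the paper: sufficiency via a utility built from a linear extension of $\succ$ on the quotient $\Gamma'$, and necessity from the fact that any rationalizing $u$ must be constant on $\sim$-classes and strictly decreasing along $Q$, so that $[X]\succ[X]$ is impossible. The only presentational difference is that the paper argues the necessity direction contrapositively by exhibiting an explicit cycle $X_1,\dots,X_k$ of revealed-preferred sets, while you argue it directly, and you spell out the verification that the constructed utility induces $\C$, which the paper leaves as ``not difficult to see.''
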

\begin{proof}
Suppose that $\succ$ is a strict partial order.
Let $\hat{\succ}$ be a linear extension of $\succ$ (which can be obtained by an algorithm such as topological sorting). For each $X\in 2^I$, define
\begin{align}
f(X)=\begin{cases}
|\{Y\in\Gamma\mid X \mathbin{\hat{\succ}} Y\}| &\text{if }X\in\Gamma,\\
-1 & \text{if }X\not\in\Gamma.
\end{cases}
\end{align}
Then, it is not difficult to see that $f$ induces $\C$.

Conversely, suppose that $\succ$ is not a strict partial order.
As $\succ$ is the transitive closure of $Q$, it must fail irreflexivity or asymmetry.
In either case, we have $X\succ X$ for some $X\in\Gamma$ by transitivity.
This implies that there are sequences of subsets $X_1,X_2,\dots,X_k\in\Gamma$ and $Z_1,Z_2,\dots,Z_{k-1}\subseteq I$ such that 
\begin{itemize}
\item $k\ge 2$, 
\item $X_1=X_k$,
\item $X_i\in\C(Z_i)$ and $Z_i\supseteq X_i\cup X_{i+1}$ for all $i\in\{1,\dots,k-1\}$, and
\item $X_{i^*+1}\not\in\C(Z_{i^*})$ for some $i^*\in\{1,\dots,k-1\}$.
\end{itemize}
Suppose to the contrary that $\C$ is rationalizable by a function $u\colon 2^I\to\mathbb{R}$. Then, we have $u(X_1)\ge u(X_2)\ge\dots\ge u(X_k)=u(X_1)$ and $u(X_{i^*})>u(X_{i^*+1})$, which is a contradiction.
Thus, $\C$ is not rationalizable.
\end{proof}

\section{Relationship between PI and the conjunction of Substitutability and IRC}\label{sec:sub_PI}

In this section, we discuss the relationship between PI and the conjunction of substitutability and IRC.

\begin{lm}\label{lm:PIunion}
Suppose that a choice correspondence $\C$ can be represented as the union of choice functions $C^{(1)},\dots,C^{(k)}\colon 2^I\to 2^I$, i.e., $$\C(X)=\bigl\{C^{(1)}(X),\dots,C^{(k)}(X)\bigr\} \quad(\forall X\in 2^I).$$ 
If $C^{(1)},\dots,C^{(k)}$ are substitutable, then $\C$ is also substitutable 
Moreover, if $C^{(1)},\dots,C^{(k)}$ satisfy IRC, then $\C$ satisfies IRC.
\end{lm}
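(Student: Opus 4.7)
The plan is to reduce each of the three claims—(SC$^{1}_{\text{ch}}$), (SC$^{2}_{\text{ch}}$), and IRC for $\C$—to the corresponding property of whichever constituent $C^{(\ell)}$ happens to witness the hypothesized membership in $\C(\cdot)$. Since by definition every element of $\C(X)$ is of the form $C^{(\ell)}(X)$ for some $\ell$, and since every $C^{(\ell)}(X)$ lies in $\C(X)$, the correspondence's properties should project cleanly onto each choice function.

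For (SC$^{1}_{\text{ch}}$), suppose $X_1 \supseteq X_2$ and $Z_1 \in \C(X_1)$. I would pick an index $\ell$ with $Z_1 = C^{(\ell)}(X_1)$ and then set $Z_2 := C^{(\ell)}(X_2)$, which lies in $\C(X_2)$ by construction. Substitutability of $C^{(\ell)}$ as a choice function gives $C^{(\ell)}(X_1) \cap X_2 \subseteq C^{(\ell)}(X_2)$, which is precisely $X_2 \cap Z_1 \subseteq Z_2$. The argument for (SC$^{2}_{\text{ch}}$) is symmetric: starting from $Z_2 = C^{(\ell)}(X_2) \in \C(X_2)$, I would take $Z_1 := C^{(\ell)}(X_1) \in \C(X_1)$ and apply the same SUB inclusion to $C^{(\ell)}$.

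For the IRC part, given $Y \in \C(X)$ and $Y \subseteq Y' \subseteq X$, I would choose $\ell$ with $C^{(\ell)}(X) = Y$. The choice-function IRC applied to $C^{(\ell)}$ with the inclusion $C^{(\ell)}(X) = Y \subseteq Y' \subseteq X$ yields $C^{(\ell)}(Y') = C^{(\ell)}(X) = Y$, and therefore $Y \in \C(Y')$.

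I do not anticipate any real obstacle; the only thing requiring care is to make sure that the \emph{same} index $\ell$ used to identify the given element in $\C$ is reused to produce the required witness in the smaller/alternate set, so that the new witness is automatically an element of $\C(\cdot)$ rather than something outside the prescribed family. This is a purely bookkeeping point and the proof is little more than three one-line reductions.
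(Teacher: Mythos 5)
Your proof is correct and follows essentially the same route as the paper's: in each case you fix the index $\ell$ witnessing the given element of $\C(\cdot)$ and reuse that same $C^{(\ell)}$ to produce the required witness, then invoke SUB or IRC of the individual choice function. The bookkeeping point you flag (reusing the same index) is exactly the content of the paper's argument, so there is nothing to add.
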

\begin{proof}
Let $X_1, X_2 \in 2^I$ with $X_1 \supseteq X_2$. 
Let $Z_1 \in \C(X_1)$, and suppose that $Z_1 = C^{(j)}(X_1)$.
If $C^{(j)}$ is substitutable, it follows that $X_2\cap Z_1=X_2 \cap C^{(j)}(Z_1) \subseteq C^{(j)}(X_2)$.
Similarly, let $Z_2\in\C(X_2)$ and suppose that $Z_2 = C^{(j)}(X_2)$.
If $C^{(j)}$ is substitutable, it follows that $X_2 \cap C^{(j)}(X_1) \subseteq C^{(j)}(X_2)=Z_2$.
Thus, any choice correspondence that can be represented as the union of substitutable choice functions satisfies  (SC$^{1}_{\text{ch}})$ and (SC$^{2}_{\text{ch}})$.

For IRC, consider $X,Y,Y'\in 2^I$ with $Y\in\C(X)$ and $Y\subseteq Y'\subseteq X$. 
Suppose that $Y=C^{(j)}(X)$. 
Then, if $C^{(j)}$ is IRC, it follows that $C^{(j)}(Y')=Y$. Thus, any choice correspondence that can be represented as the union of IRC choice functions satisfies IRC.
\end{proof}
From this lemma, any choice correspondence that can be represented as the union of PI choice functions satisfies both substitutability and IRC.

\begin{thm}\label{thm:PI-SUB}
Every PI choice correspondence satisfies substitutability and IRC.
Moreover, there is a choice correspondence $\C$ that is not PI, but satisfies substitutability and IRC.
\end{thm}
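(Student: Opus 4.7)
The plan is to reduce both parts of the theorem to lemmas that appear earlier in the paper. For the first part, the key observation is that a PI choice correspondence can be written as a union of PI choice functions obtained via consistent tie-breaking; once this is established, substitutability and IRC follow at once from Lemma~\ref{lm:PIunion}. For the second part, I would use $\C_{4}$ from Table~\ref{tab:exChoice} as the counterexample, since it has already been analyzed in Example~\ref{ex:PIunion}.

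In more detail, for the first part I would first invoke Lemma~\ref{lm:chw}: for every $X\in 2^I$ and every $Y\in\C(X)$, there exists a UM weight $w$ with $C^w(X)=Y$. Hence
\[
\C(X)=\{C^w(X)\mid w\text{ is a UM weight}\}\quad(\forall X\in 2^I),
\]
so $\C$ is representable as a union of choice functions of the form $C^w$. Although the set of UM weights is infinite, each $C^w$ takes values in the finite set $2^{2^I}$, so without loss of generality this union is finite. By Definition~\ref{def:PI} every such $C^w$ is PI, and hence substitutable and IRC (since PI implies both by the classical result of \citet{aizerman1981general}). Applying Lemma~\ref{lm:PIunion} to this finite collection of PI choice functions, we conclude that $\C$ itself satisfies (SC$^{1}_{\text{ch}}$), (SC$^{2}_{\text{ch}}$), and IRC.

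For the second part, I would exhibit $\C_{4}$ from Table~\ref{tab:exChoice}. As already noted in Example~\ref{ex:PIunion}, $\C_{4}$ can be written as the union of the two PI choice functions $C^{(1)}(X)=X$ and $C^{(2)}(X)=\emptyset$ (each of which is trivially PI, hence substitutable and IRC), so Lemma~\ref{lm:PIunion} immediately yields that $\C_{4}$ satisfies substitutability and IRC. On the other hand, as verified in the discussion following Table~\ref{tab:exChoice}, the UM weight $w$ with $(w(a),w(b),w(c))=(-1,2,-4)$ gives $C_{4}^{w}(\{a\})=\emptyset$, $C_{4}^{w}(\{a,b\})=\{a,b\}$, and $C_{4}^{w}(\{a,b,c\})=\emptyset$, so $C_{4}^{w}$ violates PI; thus $\C_{4}$ is not PI by Definition~\ref{def:PI}.

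There is no substantive obstacle here: the entire argument is a direct combination of Lemma~\ref{lm:chw} (which converts a PI choice correspondence into a union of PI choice functions) and Lemma~\ref{lm:PIunion} (which propagates substitutability and IRC through unions), together with the pre-verified properties of $\C_{4}$. The only minor care needed is the observation that, although the family of UM weights is infinite, only finitely many distinct choice functions $C^w$ arise, which legitimizes the application of Lemma~\ref{lm:PIunion}.
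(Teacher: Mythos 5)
Your proof is correct and follows essentially the same route as the paper's: both parts rest on Lemma~\ref{lm:chw} to write $\C(X)=\{C^w(X)\mid w\text{ a UM weight}\}$ and then Lemma~\ref{lm:PIunion}, with $\C_4$ as the separating example. Your extra remark that only finitely many distinct $C^w$ arise (so the finite-union hypothesis of Lemma~\ref{lm:PIunion} is met) is a small point the paper leaves implicit, but it does not change the argument.
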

\begin{proof}
If $\C$ is a PI choice correspondence, then we have $\C(X)=\{C^w(X)\mid \text{$w$ is a UM weight}\}$ by \Cref{lm:chw}. Thus, by \Cref{lm:PIunion}, $\C$ satisfies substitutability and IRC.

Now, we consider the choice correspondence $\C_{4}$ given in 
Table \ref{tab:exChoice}. Then, we have $\C_4(X)=\{C^{(1)}(X),C^{(2)}(X)\}$ where $C^{(1)}(X)=X$ and $C^{(2)}(X)=\emptyset$ for all $X\in 2^I$.
As observed, $\C_4$ is not PI.
However, it satisfies substitutability and IRC by \Cref{lm:PIunion}.
\end{proof}

From this theorem, we can conclude that PI is a strictly stronger condition than substitutability and IRC for choice correspondences.

Finally, we examine the representation of a general upper bound in terms of choice correspondences.
A nonempty family of subsets $\cF\subseteq 2^I$ is called \emph{general upper bound} if $X\subseteq Y\in\cF$ implies $Y\in\cF$.
\begin{prop}\label{prop:general-ub}
For a general upper bound $\cF$, define a choice correspondence $\C(X)=\{Y\subseteq X\mid Y\in\cF\}$.
Then, $\C$ satisfies both substitutability and IRC.
Moreover, $\C$ can be represented as a union of PI and LAD choice functions.
\end{prop}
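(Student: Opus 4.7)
My plan is to interpret \emph{general upper bound} as downward closure of $\cF$ (i.e., $X \subseteq Y$ and $Y \in \cF$ imply $X \in \cF$), which is the property that any family encoding an upper-bound constraint on allocations --- such as the budget constraints of Remark~\ref{rem:improve-hard} --- actually satisfies. Under this reading, substitutability and IRC follow by short direct constructions. For (SC$^{1}_{\text{ch}}$), given $X_1 \supseteq X_2$ and $Z_1 \in \C(X_1)$, set $Z_2 := X_2 \cap Z_1$: since $Z_2 \subseteq Z_1 \in \cF$, downward closure gives $Z_2 \in \cF$, hence $Z_2 \in \C(X_2)$, and visibly $X_2 \cap Z_1 = Z_2 \subseteq Z_2$. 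For (SC$^{2}_{\text{ch}}$), given $Z_2 \in \C(X_2)$, take $Z_1 := Z_2$: since $Z_2 \subseteq X_2 \subseteq X_1$ and $Z_2 \in \cF$, we have $Z_1 \in \C(X_1)$, with $X_2 \cap Z_1 = Z_2 \subseteq Z_2$. For IRC, if $Y \in \C(X)$ and $Y \subseteq Y' \subseteq X$, then $Y \in \cF$ and $Y \subseteq Y'$ give $Y \in \C(Y')$ directly.

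For the decomposition as a union of PI and LAD choice functions, I will index such a family by $\cF$ itself. For each $Y \in \cF$, define $C^Y \colon 2^I \to 2^I$ by $C^Y(X) := X \cap Y$. Since $X \cap Y \subseteq Y \in \cF$, downward closure yields $C^Y(X) \in \cF$, and $C^Y(X) \subseteq X$, so $C^Y(X) \in \C(X)$. The distributive identity $((X \cap Y) \cup X') \cap Y = (X \cup X') \cap Y$ gives $C^Y(C^Y(X) \cup X') = C^Y(X \cup X')$, so $C^Y$ satisfies PI; and $X \subseteq X'$ immediately implies $|X \cap Y| \le |X' \cap Y|$, giving LAD. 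Conversely, any $Z \in \C(X)$ satisfies $Z \in \cF$ and $Z \subseteq X$, so $Z = X \cap Z = C^Z(X)$, yielding $\C(X) = \{C^Y(X) \mid Y \in \cF\}$ as required.

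There is no substantive technical obstacle; each clause reduces to a one-line verification once downward closure is in hand. The only conceptual subtlety is picking the right witnesses --- $Z_2 = X_2 \cap Z_1$ and $Z_1 = Z_2$ for the two substitutability clauses, and the full family $\cF$ as the indexing set for the decomposition --- all of which are essentially forced by the shape of $\C$ together with downward closure.
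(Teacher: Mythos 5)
Your proof is correct and takes essentially the same approach as the paper: the identical decomposition via $C_Y(X)=X\cap Y$ for $Y\in\cF$, with the only cosmetic difference that you verify substitutability and IRC directly whereas the paper invokes \Cref{lm:PIunion}. Your reading of ``general upper bound'' as downward closure ($X\subseteq Y\in\cF$ implies $X\in\cF$) is the intended one---the definition as printed in \Cref{sec:sub_PI} is vacuous and must contain a typo---and it is exactly the property both arguments need.
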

\begin{proof}
We have 
$$\C(X)=\{C_Y(X)\mid Y\in\cF\},$$ 
where $C_Y$ is defined by $C_Y(X)=X\cap Y$ for each $X\in 2^I$. Each $C_Y$ is a choice function that satisfies PI and LAD.
Hence, $\C$ can be represented as a union of PI and LAD choice functions.
Moreover, by \Cref{lm:PIunion}, $\C$ satisfies substitutability and IRC.
\end{proof}

\section{Constructing a Choice Oracle from a Membership Oracle}\label{sec:computation}
Let $C\colon 2^I\to 2^I$ be a PI choice function that is accessible via a membership oracle---that is, for any $X,Y\in 2^I$, we can query whether $C(X)=Y$.
In this section, we construct a choice oracle that returns $C(X)$ in polynomial time for any $X\in 2^I$, given access to the membership oracle.

If $C(X)=X$, then we are done.
Otherwise (i.e., $C(X)\ne X$), we search an element $x\in X$ that is not $C(X)$.
Once we obtain such an element $x$, it follows that $C(X)=C(X-x)$.
Thus, $C(X)$ can be computed by recursively applying the above procedure to $C(X-x)$.
Note that, by PI of $C$, if there exist $x\in X$ and $X'\subseteq X$ such that $x\in X'\setminus C(X')$, then 
$x\in X\setminus C(X)$ by $X'\setminus C(X')\subseteq X\setminus C(X)$.

We now describe a procedure to find such an element $x$.
Suppose $X=\{i_1,\dots,i_p\}$ and for each $k\in\{0,1,\dots,p\}$ define $X_k=\{i_1,\dots,i_k\}$.
Since $C(X_0)=C(\emptyset)=\emptyset=X_0$ and $C(X_p)=C(X)\ne X=X_p$, there exists an index $k^*\in\{1,\dots,p\}$ such that $C(X_{k^*-1})=X_{k^*-1}$ and $C(X_{k^*})\ne X_{k^*}$.
We can find such a $k^*$ by a linear search.
Now, if $C(X_{k^*})=X_{k^*-1}$, then $i_{k^*}$ is a desired element.
Otherwise, we have $C(X_{k^*-1})=X_{k^*-1}$, $C(X_{k^*})\ne X_{k^*-1}$, and $C(X_{k^*})\ne X_{k^*}=X_{k^*-1}+i_{k^*}$.
In this case, we must have $i_{k^*}\in C(X_{k^*})$; otherwise, by PI of $C$, we have
$$C(X_{k^*})=C(X_{k^*}\cup X_{k^*-1})=C(C(X_{k^*})\cup X_{k^*-1})=C(X_{k^*-1})=X_{k^*-1},$$
which contradicts $C(X_{k^*})\ne X_{k^*-1}$.
Define a choice function $C'\colon 2^{X_{k^*-1}}\to 2^{X_{k^*-1}}$ by $C'(X')=C(X'+i_{k^*})-i_{k^*}~(\forall X'\subseteq X_{k^*-1})$.
This choice function $C'$ satisfies PI since, for all $X',X''\subseteq I$,
\begin{align}
C'(C'(X')\cup X'')
&= C((C'(X')\cup X'')+i_{k^*})-i_{k^*}
= C(((C(X'+i_{k^*})-i_{k^*})\cup X'')+i_{k^*})-i_{k^*}\\
&= C((X'\cup X'')+i_{k^*})-i_{k^*}
= C'(X'\cup X''),
\end{align}
where the third equality uses PI of $C$.
Our task is now to find an element $x\in X'\setminus C'(X')$ for some $X'\subseteq X_{k^*-1}$ because 
$X'\setminus C'(X')=(X'+i_{k^*})\setminus C(X'+i_{k^*})$ for all $X'\subseteq X_{k^*-1}$.
We recursively apply the above procedure to $C'$ to find such an element $x$, which will output a desired element $ x$ for $X$.

The above procedure is summarized in \Cref{alg:PIchf}.

\begin{algorithm}[ht]
\SetKwFunction{Choice}{Choice}
\SetKwFunction{Discard}{Discard}
\SetKwProg{Fn}{Function}{:}{}
\tcc{Compute $C(X)$}
\Fn{$\Choice{X}$}{
    \lIf{$C(X)=X$}{\Return $X$}
    \Else{
        $x\gets \Discard(X,\emptyset)$\;
        \Return $\Choice(X-x)$\;
    }
}

\tcc{Compute $x\in (X\cup Z)\setminus C(X\cup Z)$ under the condition that $Z\subseteq C(X\cup Z)\subsetneq X\cup Z$}
\Fn{\Discard{$X,Z$}}{
    Let $X=\{i_1,\dots,i_{p-1},i_p\}$\;
    For each $k\in\{0,1,\dots,p\}$, let $X_k=\{i_1,\dots,i_k\}$\;
    Find $k^*\in\{1,\dots,p\}$ such that $C(X_{k^*-1}\cup Z)=X_{k^*-1}\cup Z$ and $C(X_{k^*}\cup Z)\ne X_{k^*}\cup Z$\;
    \lIf{$C(X_{k^*}\cup Z)=X_{k^*-1}\cup Z$}{\Return $i_{k^*}$}
    \lElse{\Return $\Discard(X_{k^*-1},\ Z+i_{k^*})$}
}
\caption{Computation of $C(X)$ for a PI choice function $C\colon 2^I\to 2^I$ and $X\in 2^I$}\label{alg:PIchf}
\end{algorithm}

\begin{thm}\label{thm:computeF}
For any PI choice function $C\colon 2^I\to 2^I$ accessible via a membership oracle and any set $X\in 2^I$, we can compute the set $C(X)$ in $O(|X|^3)$ time using \Cref{alg:PIchf}.
\end{thm}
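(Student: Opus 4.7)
The plan is to prove correctness of the two mutually recursive subroutines in \Cref{alg:PIchf} and then to bound the running time. I will take the correctness of \texttt{Discard} as the main technical lemma, and deduce correctness of \texttt{Choice} from it by a short induction based on PI.

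First I will verify the invariant used by \texttt{Discard}: whenever the algorithm calls $\texttt{Discard}(X,Z)$, the sets satisfy $Z\subseteq C(X\cup Z)\subsetneq X\cup Z$. For the top‑level call $\texttt{Discard}(X,\emptyset)$ from \texttt{Choice}, this is exactly the hypothesis $C(X)\ne X$. For the recursive call $\texttt{Discard}(X_{k^*-1},\,Z+i_{k^*})$, I will observe that $X_{k^*-1}\cup(Z+i_{k^*})=X_{k^*}\cup Z$, so the strict inclusion holds by the choice of $k^*$, and containment $Z+i_{k^*}\subseteq C(X_{k^*}\cup Z)$ follows exactly from the PI argument already given in the running text: if $i_{k^*}$ were missing then $C(X_{k^*}\cup Z)=C(C(X_{k^*}\cup Z)\cup X_{k^*-1}\cup Z)=C(X_{k^*-1}\cup Z)=X_{k^*-1}\cup Z$, contradicting the choice of $k^*$. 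Once the invariant is established, correctness of \texttt{Discard} reduces to showing that when it returns $i_{k^*}$ in the case $C(X_{k^*}\cup Z)=X_{k^*-1}\cup Z$, the element $i_{k^*}$ lies in the rejected set of the original call. This is immediate: PI implies $Y'\setminus C(Y')\subseteq Y\setminus C(Y)$ whenever $Y'\subseteq Y$, and any element ever discarded along the recursion lies in $(X_0\cup Z_0)\setminus C(X_0\cup Z_0)$ where $(X_0,Z_0)$ is the top‑level input, which is the set $X\setminus C(X)$ that \texttt{Choice} needs.

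Next I will prove correctness of \texttt{Choice} by induction on $|X|$. The base case $C(X)=X$ is trivial. For the inductive step, let $x$ be the element returned by \texttt{Discard}; by the previous paragraph $x\in X\setminus C(X)$. Applying PI with $X'=X-x$ gives
\begin{align}
C(X)=C(C(X)\cup(X-x))=C(X-x),
\end{align}
since $C(X)\subseteq X-x$. By induction \texttt{Choice}$(X-x)$ returns $C(X-x)=C(X)$, completing the correctness proof.

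Finally I will bound the complexity. Each invocation of \texttt{Discard} performs a linear scan of at most $|X|$ membership queries to locate $k^*$, plus $O(|X|)$ bookkeeping, and either terminates or recurses on a strictly smaller first argument (since $|X_{k^*-1}|=k^*-1<k^*\le|X|$); hence \texttt{Discard} runs in $O(|X|^2)$ time and oracle calls. \texttt{Choice} removes one element per recursive call, so it issues at most $|X|$ calls to \texttt{Discard} in total, yielding the claimed $O(|X|^3)$ bound. The only subtle point in the whole argument is the invariant for the recursive \texttt{Discard} call, and specifically the PI‑based inclusion $i_{k^*}\in C(X_{k^*}\cup Z)$; everything else is routine bookkeeping and induction.
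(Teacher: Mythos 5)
Your proposal is correct and follows essentially the same route as the paper's proof: you establish the invariant $Z\subseteq C(X\cup Z)\subsetneq X\cup Z$ for the recursive calls to \texttt{Discard} via the same PI argument showing $i_{k^*}\in C(X_{k^*}\cup Z)$, lift the rejected element to the top-level set using rejection monotonicity (substitutability), and then combine the $O(|X|)$ recursion depth of \texttt{Discard} with the $O(|X|)$ outer calls of \texttt{Choice} to get $O(|X|^3)$. The only cosmetic difference is that the paper phrases \texttt{Discard}'s correctness as a direct induction producing an element of $(X\cup Z)\setminus C(X\cup Z)$, while you phrase it as an invariant plus a transfer step; the substance is identical.
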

\begin{proof}
We first show that, under the condition that $Z\subseteq C(X\cup Z)\subsetneq X\cup Z$, the procedure $\Discard(X,Z)$ outputs an element $x\in (X\cup Z)\setminus C(X\cup Z)$ in $O(|X|^2)$ time.
Since $Z\subsetneq X\cup Z$, it follows that $X\ne\emptyset$.

If $X$ is a singleton, i.e., $X=\{i_1\}$, then $\Discard(X,Z)$ returns $i_1$. 
This follows from the fact that $Z\subseteq C(Z+i_1)\subsetneq Z+i_1$ implies $C(Z+i_1)=Z$.
In this case, $i_1$ is a desired element since $i_1\in (Z+i_1)\setminus C(Z+i_1)=(X\cup Z)\setminus C(X\cup Z)$.

Otherwise, the procedure selects an element $i_{k^*}$ such that $C(X_{k^*-1}\cup Z)\setminus Z=X_{k^*-1}$ and $C(X_{k^*}\cup Z)\setminus Z\ne X_{k^*}$.
If $C(X_{k^*}\cup Z)\setminus Z=X_{k^*-1}$, then it outputs $i_{k^*}\notin X_{k^*-1}=C(X_{k^*}\cup Z)\setminus Z$.
Alternatively, if $C(X_{k^*}\cup Z)\setminus Z\ne X_{k^*-1}$, then we have $i_{k^*}\in C(X_{k^*}\cup Z)$.
Thus, we have
\begin{align}
Z+i_{k^*}\subseteq C(X_{k^*}\cup Z)
=C(X_{k^*-1}\cup (Z+i_{k^*}))\subseteq X_{k^*}\cup Z=X_{k^*-1}\cup (Z+i_{k^*}).
\end{align}
Hence, the recursive call $\Discard(X_{k^*-1},Z+i_{k^*})$ satisfies the required condition.

Since the size of the first argument decreases strictly with each call runs in $O(|X|)$ time, the total time complexity for computing $\Discard(X,Z)$ is $O(|X|^2)$.

Next, we show that $\Choice(X)$ correctly computes $C(X)$ in $O(|X|^3)$ time.
If $C(X)=X$, then it correctly outputs $X$.
Otherwise, it calls $\Discard(X,\emptyset)$ where $\emptyset\subseteq C(X)\subsetneq X$.
Thus, it obtains an element $x\in X\setminus C(X)$ in $O(|X|^2)$ time.
Since $C(X-x)=C(X)$, the function recursively calls $\Choice(X-x)$ to compute $C(X-x)$.
As the size of the argument decreases strictly with each recursive call and each call requires $O(|X|^2)$ time, the overall time complexity for computing $C(X)$ is $O(|X|^3)$.
\end{proof}

\section{Bridging}\label{sec:admissions}

In this section, we explore the relationship between our results and those presented by \citet{erdil2022corrigendum}.
They introduced the \emph{bridging} property, which applies to acceptant choice correspondences.
Recall that a choice correspondence $\C$ is called acceptant if there exists a nonnegative integer $q$ such that $|C(X)|=\min\{|X|,\,q\}$ for every $X\in 2^I$. 
We remark that acceptance is a stronger condition than LAD. 
Indeed, choice correspondences with type-specific quotas discussed in \Cref{sec:application} do not satisfy acceptance.

\begin{df}[\citet{erdil2022corrigendum}]
An acceptant choice correspondence $\C$ is said to satisfy \emph{bridging} if the following condition holds: 
Let $X,Y$ be subsets of $I$ with $Y\subseteq X$ and $|Y| \geq q$. 
Let $A \in \C(X)$ and $B \in \C(Y)$ be such that $(Y \cap A) \subseteq B$.
Then, for each $i \in A \setminus B$, there exists $j \in (B \setminus A) \cup ((X \setminus Y) \setminus A)$ such that $A -i +j \in \C(X-i)$ (see \Cref{fig:bridging}). 
\end{df}

\begin{figure}[htbp]
\centering
\begin{tikzpicture}[xscale=.9,yscale=.6]
    \draw[very thick, fill=bggreen] (-0.5,0) ellipse (3.5cm and 2cm);
    \node at (-3.5,1.6) {$X$};

    \draw[very thick,fill=white] (-.5,-.3) ellipse (2cm and 1.5cm);
    \node at (-2.5,0.5) {$Y$};
    
    \begin{scope}
        \clip (-.5,-.3) ellipse (2cm and 1.5cm); 
        \draw[very thick, draw=myblue, fill=bggreen] (0.3,-0.2) ellipse (1.5cm and 1.2cm);
    \end{scope}
    \node[myred] at (1.5,1.1) {$A$};

    \draw[very thick,rotate=10,draw=myred,fill=white] (1,-0.4) ellipse (1.5cm and 1cm);

    \draw[thick] (-.5,-.3) ellipse (2cm and 1.5cm); 
    \begin{scope}
        \clip (.3,-.2) ellipse (1.5cm and 1.2cm); 
        \draw[line width=2pt, draw=myblue, xshift=-.3mm] (-.5,-.3) ellipse (2cm and 1.5cm); 
    \end{scope}
    \node[myblue] at (-1.2,-1) {$B$};

    \node[dot,label=right:$i$] at (1.9,-.1) {};
\end{tikzpicture}
\caption{Illustration of the bridging property. The green region represent the set $(B\setminus A)\cup((X\setminus Y)\setminus A)$, which contains the element $j$ required to satisfy the condition.}
\label{fig:bridging}
\end{figure}
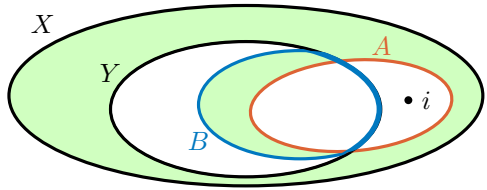

\citet{erdil2022corrigendum} demonstrated that if choice correspondences satisfy acceptance and bridging, then the necessity of Theorem~\ref{thm:main} holds.
\begin{thm}[\citet{erdil2022corrigendum}]\label{thm:bridge-efficient}
Suppose that $\C_s$ is an acceptant choice correspondence that satisfies bridging for each school $s\in S$. 
If a stable matching does not admit a PSIC, then it is constrained efficient.
\end{thm}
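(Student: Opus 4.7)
The plan is to prove the contrapositive: if $\mu$ is a stable matching that is not constrained efficient, then it admits a PSIC. Pick a stable matching $\nu$ that Pareto dominates $\mu$, and set $I' = \{i \in I : \mu(i) \ne \nu(i)\}$.

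A first preparatory step establishes the cardinality equalities $|\mu(s)| = |\nu(s)|$ for every $s \in S$, and in fact $|\mu(s)| = |\nu(s)| = q_s$ whenever $\mu(s) \ne \nu(s)$. Setting $X_s = \{i' \in I : s \succeq_{i'} \mu(i')\}$ and $Y_s = \{i' \in I : s \succeq_{i'} \nu(i')\}$, Pareto dominance gives $Y_s \subseteq X_s$, and stability combined with acceptance yields $|\nu(s)| = \min\{|Y_s|,q_s\} \le \min\{|X_s|,q_s\} = |\mu(s)|$. The reverse aggregate inequality $\sum_s |\nu(s)| \ge \sum_s |\mu(s)|$ holds because Pareto dominance prevents any student from becoming unmatched under $\nu$, forcing equality at every school. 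A straightforward elaboration rules out $|\mu(s)| < q_s$ when $\mu(s) \ne \nu(s)$ and implies $\mu(i), \nu(i) \in S$ for every $i \in I'$.

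Next, for each $s$ with $\mu(s) \ne \nu(s)$ and each $i \in \mu(s) \setminus \nu(s)$, I would apply bridging with $X = X_s$, $Y = \nu(s)$, $A = \mu(s)$, $B = \nu(s)$. Stability yields $A \in \C_s(X)$ and $B \in \C_s(Y)$; $Y \subseteq X$ follows from Pareto dominance; $Y \cap A \subseteq B$ is immediate; and $|Y| = q_s$ by the cardinality step. Bridging returns some $j = j(s,i) \in (\nu(s) \setminus \mu(s)) \cup ((X_s \setminus \nu(s)) \setminus \mu(s))$ with $\mu(s) - i + j \in \C_s(X_s - i)$, and automatically $s \succ_j \mu(j)$. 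Using these replacements I would form a directed graph $G$ on $I'$ with an edge $j \to i$ whenever $j \in I'$ and $\mu(\mu(i)) - i + j \in \C_{\mu(i)}(X_{\mu(i)} - i)$. Every directed cycle in $G$ translates verbatim into a PSIC for $\mu$, so producing such a cycle yields the desired contradiction.

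The main obstacle is showing every $i \in I'$ has an in-edge in $G$. Bridging may return $j \in (X_s \setminus \nu(s)) \setminus \mu(s)$ with $\mu(j) = \nu(j) \ne s$, placing $j \notin I'$. Handling this case is the technical heart of the proof: one must either iteratively refine the bridging configuration (for instance, by excising such outsiders from the ground set and reapplying bridging) or exploit structural consequences of the bridging property to produce an alternative replacement $j' \in \nu(s) \setminus \mu(s) \subseteq I'$. Once every vertex of $G$ admits an in-edge from $I'$, finiteness of $G$ produces the desired directed cycle, hence a PSIC.
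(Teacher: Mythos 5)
The paper does not actually prove this statement: \Cref{thm:bridge-efficient} is imported from \citet{erdil2022corrigendum} and stated without proof, so there is no in-paper argument to compare yours against line by line. The closest analogue in the paper is the proof of \Cref{lm:necessity} (the necessity part of \Cref{thm:main} under PI and LAD), whose skeleton---contrapositive, the cardinality equalities $|\mu(s)|=|\nu(s)|$ for a dominating stable matching $\nu$, an exchange graph on $I'=\{i\in I\mid\mu(i)\ne\nu(i)\}$, an in-edge at every vertex, and extraction of a cycle certified to be a PSIC---is exactly the skeleton you propose. Your preparatory steps (the cardinality argument via acceptance plus the aggregate count, and the verification of the bridging hypotheses for $X=X_s$, $A=\mu(s)$, $B=\nu(s)$) are correct.

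There is, however, a genuine gap at precisely the step you flag as ``the technical heart'': you never establish that every $i\in I'$ receives an in-edge \emph{from a vertex of $I'$}, and with your choice $Y=\nu(s)$ the bridging conclusion really can return $j\in(X_s\setminus\nu(s))\setminus\mu(s)$ with $\mu(j)=\nu(j)$, i.e.\ $j\notin I'$, at which point the graph argument stalls. The repair is not an iterative refinement but a different instantiation of bridging: take $Y=Y_s\coloneqq\{i'\in I\mid s\succeq_{i'}\nu(i')\}$ with $B=\nu(s)$. All hypotheses still hold: $Y_s\subseteq X_s$ by Pareto dominance; $|Y_s|\ge|\nu(s)|=q_s$; $\nu(s)\in\C_s(Y_s)$ by stability of $\nu$; and $Y_s\cap\mu(s)=\mu(s)\cap\nu(s)\subseteq\nu(s)$, because $i\in\mu(s)$ with $s\succeq_i\nu(i)$ and $\nu(i)\succeq_i\mu(i)=s$ forces $\nu(i)=s$. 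The candidate set then becomes $(\nu(s)\setminus\mu(s))\cup\bigl((X_s\setminus Y_s)\setminus\mu(s)\bigr)$, and every $j\in X_s\setminus Y_s$ satisfies $\nu(j)\succ_j s\succeq_j\mu(j)$, hence $j\in I'$; so the returned $j$ always lies in $I'$, satisfies $s\succ_j\mu(j)$, and satisfies $\mu(s)-i+j\in\C_s(X_s-i)$, which is literally the membership condition in the definition of a PSIC. With that substitution every vertex of your graph has an in-edge from $I'$ and the extracted cycle is a PSIC. As written, though, the decisive step is missing.
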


They also proved that if an acceptance choice correspondence fails to satisfy the bridging property, then one can construct a market in which a stable matching exists that is neither constrained efficient nor admits a PSIC.
\begin{prop}[\citet{erdil2022corrigendum}]
Suppose that $\C_{s^*}\colon 2^I\rightrightarrows 2^I$ is acceptant, but violates the bridging property.
Then, there exists a market $(I,S,(\succ_i)_{i\in I},(\C_s)_{s\in S})$ with $s^*\in S$ satisfying the following conditions:
\begin{itemize}
\item There is a stable matching that is not constrained efficient and does not admit a PSIC.
\item All schools except $s^*$ have strict responsive choice correspondences.
\end{itemize}
\end{prop}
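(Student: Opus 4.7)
The plan is to convert a witness of the bridging failure into an explicit market where the only candidate for a PSIC is a single two-cycle, which the bridging violation directly forbids. Fix witnesses $X, Y \subseteq I$ with $Y \subseteq X$, $|Y| \ge q$, $A \in \C_{s^*}(X)$, $B \in \C_{s^*}(Y)$ with $Y \cap A \subseteq B$, and $i^* \in A \setminus B$ such that $A - i^* + j \notin \C_{s^*}(X - i^*)$ for every $j \in (B \setminus A) \cup ((X \setminus Y) \setminus A)$. Acceptance together with $|Y| \ge q$ forces $|A|=|B|=q$, so $|A \setminus B| = |B \setminus A|$. As a preparatory step, reduce to a witness in which $A \setminus B = \{i^*\}$ and $(X \setminus Y) \setminus A = \emptyset$; letting $\{j^*\} = B \setminus A$, we then have $X = Y \cup \{i^*\}$ and $X \setminus A = \{j^*\} \cup \delta$ with $\delta = Y \setminus (A \cup B)$.

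Next, I would build the market on student population $X$. The schools are $s^*$ (with its given choice correspondence) together with one capacity-one dedicated school $s_z$ per $z \in X \setminus A$: $s_{j^*}$ has strict responsive priority $j^* \succ i^*$ with everyone else unacceptable, and every other $s_z$ accepts only $z$. Strict preferences are designed so that each $a \in A \setminus \{i^*\}$ ranks $s^*$ first; $i^*$ ranks $s_{j^*}$ first and $s^*$ second; $j^*$ ranks $s^*$ first and $s_{j^*}$ second; and each $z \in \delta$ ranks $s^*$ first and $s_z$ second. Define $\mu$ by $\mu(s^*) = A$ and $\mu(s_z) = \{z\}$ for $z \in X \setminus A$, and $\nu$ by $\nu(s^*) = B$, $\nu(s_{j^*}) = \{i^*\}$, and $\nu(s_z) = \{z\}$ for $z \in \delta$.

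I would then verify the four properties. (i) $\mu$ is stable because each dedicated school holds its top-priority applicant and, since $\{i : s^* \succeq_i \mu(i)\} = X$, the no-blocking-coalition condition at $s^*$ reduces to $A \in \C_{s^*}(A \cup Z)$ for every $Z \subseteq X \setminus A$, which follows from $A \in \C_{s^*}(X)$ via IRC. (ii) $\nu$ is stable because $\{i : s^* \succeq_i \nu(i)\} = B \cup \delta = Y$, reducing the condition at $s^*$ to $B \in \C_{s^*}(Y)$, which is given. (iii) $\nu$ strictly Pareto dominates $\mu$ via the exchange $i^* \leftrightarrow j^*$, while every other student is unchanged. (iv) $\mu$ admits no PSIC because cycles avoiding $s^*$ are impossible among capacity-one strict-responsive schools, and any cycle through $s^*$ must involve $i^*$ leaving (the only student in $A$ with any outside option), with $s_{j^*}$ being $i^*$'s sole acceptable outside school and $j^*$ the only student willing to enter $s^*$ from $s_{j^*}$; thus the only candidate PSIC is $(i^*, j^*)$, whose condition at $s^*$ demands $A - i^* + j^* \in \C_{s^*}(X - i^*)$, exactly what the bridging violation forbids.

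The main technical obstacle is the preparatory reduction to $A \setminus B = \{i^*\}$ and $(X \setminus Y) \setminus A = \emptyset$: while shrinking $A \setminus B$ or $(X \setminus Y) \setminus A$ is a natural operation, one must preserve the bridging violation at $i^*$ throughout. IRC transfers \emph{membership} statements from larger ambient sets down to smaller ones but transfers \emph{non-membership} (which is what the bridging violation asserts) in the opposite direction, so the reduction requires combinatorially controlling how $\C_{s^*}$ behaves under successive modifications, typically by invoking the bridging property at each $g \in A \setminus B \setminus \{i^*\}$ to produce a replacement witness and then arguing via substitutability of $\C_{s^*}$ that the failure at $i^*$ persists after each step.
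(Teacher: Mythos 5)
The paper does not actually prove this proposition: it is imported verbatim from \citet{erdil2022corrigendum} and stated without proof, so there is no in-paper argument to compare yours against. Judged on its own, your proposal has a genuine gap at exactly the point you flag. Your market construction and the verification that the unique candidate PSIC is $(i^*,j^*)$ are plausible \emph{only} in the reduced situation $A\setminus B=\{i^*\}$ and $(X\setminus Y)\setminus A=\emptyset$, but the ``preparatory reduction'' to that situation is never carried out, and the sketch you give for it does not work: you propose to ``invoke the bridging property at each $g\in A\setminus B\setminus\{i^*\}$,'' yet the standing hypothesis is precisely that $\C_{s^*}$ \emph{violates} bridging, so you have no license to assume it holds at those other elements; and, as you yourself observe, IRC only transports \emph{membership} from larger sets to smaller ones, whereas the bridging violation is a family of \emph{non-membership} statements, which do not survive passing from $\C_{s^*}(X-i^*)$ to $\C_{s^*}(X'-i^*)$ for $X'\subsetneq X$. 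Nor can you simply run the construction on a general witness: if $|A\setminus B|\ge 2$, Pareto domination by $\nu$ forces every $g\in A\setminus B$ to have an outside school it strictly prefers to $s^*$, which creates candidate two-cycles $(g,h)$ whose $s^*$-condition $A-g+h\in\C_{s^*}(X-g)$ is not controlled by the bridging violation (which constrains only the swaps at $i^*$). So the heart of the proof --- either establishing that a bridging violation always admits a witness with $|A\setminus B|=1$ and $(X\setminus Y)\setminus A=\emptyset$, or designing a market that neutralizes the extra students --- is missing, and it is not evident that the reduction you posit is even true.

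Two smaller issues. First, your stability checks for both $\mu$ and $\nu$ (e.g.\ deducing $A\in\C_{s^*}(A\cup Z)$ from $A\in\C_{s^*}(X)$, and $B\in\C_{s^*}(B\cup Z)$ from $B\in\C_{s^*}(Y)$) invoke IRC, which is not among the hypotheses of the proposition as stated here --- only acceptance and the bridging failure are assumed --- so you should either add IRC/substitutability as maintained assumptions (as in the source paper) or justify those memberships differently. Second, note that in your reduced market the set of students weakly preferring $s^*$ to $\mu$ is all of $X$, so entrants $z\in\delta=Y\setminus(A\cup B)$ are also candidates to replace $i^*$ at $s^*$; the bridging violation says nothing about $A-i^*+z$ for such $z$, and ruling out cycles through them requires the careful preference design you describe (only $i^*$ can leave $s^*$, and the chain through $s_{j^*}$ closes back on $s^*$). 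That part of your argument does go through, but it deserves to be spelled out since it is where the construction is most fragile.
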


This proposition, together with \Cref{thm:main}, implies that any PI and acceptant choice correspondence must satisfy bridging. 
We now show this directly.
\begin{prop}
Any PI and acceptant choice correspondence $\C$ satisfies bridging.
\end{prop}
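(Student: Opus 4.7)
The plan is to construct a single UM weight $w \colon I \to \mathbb{R}$ satisfying simultaneously $C^w(X) = A$ and $C^w(Y) = B$, and then read off the desired element $j$ directly from $C^w(X - i)$.

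A few preliminary observations: acceptance forces $|A| = |B| = q$, and $i \in A \setminus B$ together with $Y \cap A \subseteq B$ implies $i \notin Y$ (otherwise $i \in Y \cap A \subseteq B$), so $Y \subseteq X - i$ and $|X - i| \ge q$. Writing $a = |A \cap Y|$ and $b = |B \setminus A|$, the inclusions $A \cap Y \subseteq B \subseteq Y$ give $B \cap A = A \cap Y$, whence $a + b = q$.

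To construct $w$, fix parameters $M \gg m \gg 1$, assign weight $M$ to every element of $A$, weight $m$ to every element of $B \setminus A$, weight $-1$ to every element of $X \setminus (A \cup B)$, and a very negative weight to every element of $I \setminus X$; then add sufficiently small distinct perturbations to make $w$ UM. Then $C^w(X) = A$: any competitor $A' \in \C(X) \setminus \{A\}$ has $|A'| = q$ and $|A' \cap A| \le q - 1$ by acceptance, forcing $w(A) - w(A') \ge M - (q - 1) m - O(1) > 0$. Likewise $C^w(Y) = B$: for any $B' \in \C(Y)$ of size $q$, the inclusion $B' \subseteq Y$ gives $B' \cap A \subseteq A \cap Y$, so setting $a' = |B' \cap A|$, $b' = |B' \cap (B \setminus A)|$, and $d' = |B' \cap (Y \setminus (A \cup B))|$ with $a' + b' + d' = q$, a short case split on whether $a' < a$ (the dominant $M$-term makes $w(B) > w(B')$) or $a' = a$ (so $d' = b - b' \ge 1$ when $B' \neq B$, giving $w(B) - w(B') \ge d'(m + 1) - O(1) > 0$) shows $w(B) > w(B')$ whenever $B' \neq B$; the case $a' > a$ is impossible.

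Now since $\C$ is PI, $C^w$ is a PI choice function and hence substitutable, while acceptance of $\C$ yields $|C^w(X - i)| = q$. Substitutability applied to $X - i \subseteq X$ gives $A - i = C^w(X) \cap (X - i) \subseteq C^w(X - i)$, and applied to $Y \subseteq X - i$ gives $C^w(X - i) \cap Y \subseteq C^w(Y) = B$. Together with $|C^w(X - i)| = q$, the first bound forces $C^w(X - i) = (A - i) + j$ for a unique $j \in X \setminus A$; the second bound then shows that if $j \in Y$ then $j \in B \setminus A$, and if $j \notin Y$ then $j \in (X \setminus Y) \setminus A$. In either case $A - i + j = C^w(X - i) \in \C(X - i)$ with $j$ in the required bridging set. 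The main obstacle is the simultaneous-optimum construction of $w$: the hypothesis $A \cap Y \subseteq B$ is precisely what makes it consistent to rank all of $A$ at the top without conflicting with ranking $B$ at the top inside $Y$, and acceptance is what turns ``chosen set has full size $q$'' into the clean handle on $C^w(X - i)$ used above.
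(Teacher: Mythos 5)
Your proof is correct and follows essentially the same route as the paper's: both construct a single UM weight that ranks $A$ on top (forcing $C^w(X)=A$) and $B\setminus A$ next (forcing $C^w(Y)=B$), then use acceptance and substitutability of $C^w$ to pin down $C^w(X-i)=A-i+j$ and locate $j$ in $(B\setminus A)\cup((X\setminus Y)\setminus A)$. The only cosmetic differences are that the paper uses a lexicographic power-of-two weight where you use a three-level weight with perturbations, and the paper derives the final inclusion by contradiction via the rejection-monotonicity form of substitutability, whereas you apply substitutability to $Y\subseteq X-i$ directly.
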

\begin{proof}
Let $X,Y,A,B\in 2^I$ be $Y \subseteq X$, $|Y| \geq q$, $A \in \C(X)$, $B \in \C(Y)$, and $(Y \cap A) \subseteq B$.
Assume, without loss of generality, that there exists an element $i \in A \setminus B$; if no such $i$ exists, then the bridging condition holds trivially.

Since the acceptant property implies $|A|=|B|=q$, we have $B\setminus A\neq\emptyset$.
Let $A=\{i_1,\dots,i_p\}$, $B\setminus A=\{i_{p+1},\dots,i_q\}$, and $I\setminus (A\cup B)=\{i_q+1,\dots,i_n\}$.
Define a UM weight $w$ such that $w(i_k)=2^{n-k}$ for each $i_k\in I$.
By the construction of $w$, we have $A\subseteq C^{w}(X)$. 
Since acceptance ensures $|C^w(X)|=q$, we have $A= C^{w}(X)$.
Similarly, by the construction of $w$, we have $B\subseteq C^{w}(Y)$.
Since acceptance implies $|C^w(Y)|=q$, we have $B= C^{w}(Y)$.

Since $Y\subseteq X-i$, we have $|X-i|\ge |Y|\ge q$. 
Thus, there exists $j\in X\setminus A$ such that $A -i +j = C^w(X-i)$. 
We prove that $j \in (B \setminus A) \cup ((X \setminus Y) \setminus A)$ by contradiction. 
Suppose, toward a contradiction, that $j\in Y\setminus B$.
By PI of $C^{w}$ and $B=C^w(Y)$, we have 
\begin{align}
j\in Y\setminus B=Y\setminus C^{w}(Y)\subseteq (X-i)\setminus C^{w}(X-i). 
\end{align}
This contradicts $A -i +j = C^w(X-i)$.
Therefore, we have $j \in (B \setminus A) \cup ((X \setminus Y) \setminus A)$, which completes the proof.
\end{proof}

From this proposition, we conclude that the combination of the PI and acceptant conditions is weaker than the bridging property. 
Additionally, we will show that an acceptant choice correspondence satisfying bridging need not satisfy PI. 
Moreover, even for a market in which every school employs an acceptant choice correspondence that satisfies bridging, a constrained efficient stable matching admitting a PSIC may still exist. 
We illustrate these facts in the following subsections.

\subsection{Admissions by a Committee}

Suppose that a school has $q\in\mathbb{Z}_+$ seats to fill.
Let $H$ be a set of referees, each of whom $h\in H$ has a strict order $\succ_h$ over the set of students $I$.
A function $\pi\colon\{1,\dots,q\}\to H$ induces a choice function $C^{\pi}$ as follows:
\begin{align}
C^{\pi}(X)=\{i_1,\dots,i_{\min\{q,|X|\}}\} \quad(\forall X\in 2^I),
\end{align}
where $\{i_1\}=\argmax_{\succ_{\pi(1)}}X$, and $i_\ell=\argmax_{\succ_{\pi(\ell)}}X\setminus\{i_1,\dots,i_{\ell-1}\}$ for $\ell=2,\dots,\min\{q,|X|\}$.
Let $\Pi$ be the set of all functions $\pi$. 
A choice correspondence induced by admissions by a committee is defined as 
\begin{align}
\C^H(X)=\{C^{\pi}(X) \mid \pi\in\Pi\}\quad(\forall X\in 2^I).
\end{align}

Clearly, $\mathcal{C}^H$ is acceptant.
\citet{erdil2019efficiency} showed that $\mathcal{C}^H$ satisfies substitutability. 
Moreover, they demonstrated that it also satisfies bridging.

Furthermore, \citet{erdil2019efficiency} and \citet{erdil2022corrigendum} proved that no PSIC is a necessary and sufficient condition to be constrained efficient if every school has a choice correspondence induced by admissions by a committee.
\begin{prop}[\citet{erdil2019efficiency,erdil2022corrigendum}]
Suppose every school $s\in S$ has a choice correspondence induced by admissions by a committee. 
A stable matching is constrained efficient if and only if it does not admit a PSIC.
\end{prop}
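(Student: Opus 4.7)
The plan is to establish the two directions separately, relying on properties of admissions by a committee that \citet{erdil2019efficiency} already verified above, namely acceptance, substitutability, and bridging.

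For the sufficiency direction (no PSIC implies constrained efficient), the argument is immediate: since each $\C^H_s$ is acceptant and satisfies bridging, \Cref{thm:bridge-efficient} applies directly and yields the conclusion with no further work.

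For the necessity direction (constrained efficient implies no PSIC), I would prove the contrapositive by showing that any PSIC of a stable matching $\mu$ can be used to construct a stable Pareto improvement $\nu$. Following the strategy of \Cref{lm:sufficiency2}, the first step is to select a PSIC $(i_0,\dots,i_{m-1})$ of minimal length, so that it contains no shortcut, and to define $\nu$ by reassigning each $i_\ell$ from $s_\ell=\mu(i_\ell)$ to $s_{\ell+1}$ while keeping every other assignment unchanged. Pareto dominance is then immediate from $s_{\ell+1}\succ_{i_\ell}s_\ell$ for every $\ell$, so the whole task reduces to checking stability of $\nu$.

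The crux is verifying stability at each school $s_\ell$ appearing in the cycle. Since $\C^H$ need not be PI, \Cref{lm:sufficiency2} cannot be invoked directly; instead one must exploit the committee structure. The natural idea is to take a referee assignment $\pi$ realizing $\mu(s_\ell)=C^{\pi}(\{i:s_\ell\succeq_i\mu(i)\})$, and to reassign the referee who picked the departing student $i_\ell$ so that under $\nu$ she picks the incoming student $i_{\ell-1}$. The PSIC condition guarantees $\mu(s_\ell)-i_\ell+i_{\ell-1}\in\C^H_{s_\ell}(\{i:s_\ell\succeq_i\mu(i)\}-i_\ell)$, which is exactly what is needed to produce such a $\pi$, and the shortcut-free choice of the cycle ensures that this modification is consistent with the substitutability of $\C^H$ and does not create a referee who would prefer a student in $\{i:s_\ell\succ_i\nu(i)\}$. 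The main obstacle I anticipate is precisely this last point: one must show that the modified referee ordering still selects $\nu(s_\ell)$ from $\nu(s_\ell)\cup X$ for every $X\subseteq\{i:s_\ell\succ_i\nu(i)\}$, which requires combining acceptance (so all $q_{s_\ell}$ seats stay filled) with a careful invocation of substitutability to rule out new blocking coalitions. Once stability of $\nu$ is established, $\nu$ Pareto dominates $\mu$, contradicting constrained efficiency of $\mu$, and the necessity direction follows.
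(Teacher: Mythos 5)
First, note that the paper does not actually prove this proposition: it is imported verbatim from \citet{erdil2019efficiency} and \citet{erdil2022corrigendum}, so there is no in-paper argument to compare yours against. Your treatment of the ``if'' direction is correct and is exactly what the surrounding text sets up: $\C^H$ is acceptant by construction and satisfies substitutability and bridging, so \Cref{thm:bridge-efficient} immediately yields that a stable matching with no PSIC is constrained efficient.

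The ``only if'' direction is where there is a genuine gap. Your plan is to take a shortest (hence shortcut-free) PSIC, apply it to obtain $\nu$, and certify stability of $\nu$ by reassigning, at each school, ``the referee who picked the departing student'' so that she picks the incoming one. This phrasing only makes sense when each school appears once in the cycle; the hard case---and the reason these cycles are only \emph{potentially} stable---is a school $s$ that hosts several students of the cycle, so that $\nu(s)$ differs from $\mu(s)$ by several simultaneous swaps while the PSIC condition only certifies the single-swap sets $\mu(s_{\ell+1})-i_{\ell+1}+i_\ell$. \Cref{ex:shortestPSIC} is precisely this situation, and \Cref{lm:sufficiency2} resolves it only by leaning on PI, LAD, maximality and the g-matroid structure, none of which are available for $\C^H$ (\Cref{ex:admission} shows $\C^H$ need not even be rationalizable or induce a g-matroid). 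Your sketch explicitly defers the required step as ``the main obstacle I anticipate'' rather than carrying it out, and it is not a routine obstacle: \Cref{ex:bridging-nosufficient} exhibits an acceptant, substitutable, bridging (restricted-committee) correspondence for which a constrained efficient stable matching \emph{does} admit a PSIC, so no argument built only from acceptance, substitutability, IRC and bridging can work. A correct proof must use the unrestricted committee structure quantitatively---which referees are reassigned when several seats at the same school turn over at once, and why the resulting $\pi$ still selects $\nu(s)$ from $\nu(s)\cup X$ for every $X\subseteq\{i\in I\mid s\succ_i\nu(i)\}$. As written, the necessity direction is a plan, not a proof.
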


However, a choice correspondence induced by admissions by a committee may not satisfy PI.
\begin{ex}\label{ex:admission}
Let $I=\{i_1,i_2,i_3,i_4\}$ and $q=2$.
Suppose that the set of referees is $H=\{h_1,h_2,h_3\}$, where each referee has a strict preference order given by
\begin{align}
{\succ_{h_1}}=(i_1\ i_2\ i_3\ i_4),\quad
{\succ_{h_2}}=(i_1\ i_3\ i_2\ i_4),\quad
{\succ_{h_3}}=(i_2\ i_4\ i_1\ i_3).
\end{align}
Then, $\C^H(\{i_1,i_2,i_3,i_4\})=\big\{\{i_1,i_2\},\{i_1,i_3\},\{i_2,i_4\}\big\}$ is not a g-matroid.
Moreover, $\C^H$ is not rationalizable because
$\C^H(\{i_1,i_2,i_3,i_4\})=\big\{\{i_1,i_2\},\{i_1,i_3\},\{i_2,i_4\}\big\}$ implies that $\{i_2,i_3\}$ is strictly worse than $\{i_2,i_4\}$, 
while $\C^H(\{i_2,i_3,i_4\})=\big\{\{i_2,i_3\},\{i_2,i_4\}\big\}$ implies that $\{i_2,i_3\}$ and $\{i_2,i_4\}$ are equally valuable.
\end{ex}

\subsection{Restricted Admissions by a committee}

In the admissions by a committee framework described in the previous subsection, all possible combinations of referee assignments are considered.
However, this generality may not align with certain practical scenarios where only specific subsets of referee assignments are relevant or permissible. To address this, we consider a restricted setting in which the set of possible functions $\pi\colon\{1,\dots,q\}\to H$ is limited to a subset of interest.

Formally, let $\Pi$ be a subset of $\Pi$, where $\Pi$ is the set of all functions from $\{1,\dots,q\}$ to $H$.
A choice correspondence induced by admissions by a committee in this restricted setting is defined as 
\begin{align}
\C^{H,\Pi'}(X)=\{C^{\pi}(X) \mid \pi\in\Pi'\}\quad(\forall X\in 2^I).
\end{align}
where $\Pi'$ is the set of all injective functions from $\{1,\dots,q\}$ to $H$.
This restriction allows us to model practical scenarios better while maintaining flexibility.

It is not difficult to see that a choice correspondence induced by restricted admissions by a committee is substitutable and acceptant.
Whether the bridging property is satisfied depends on the particular instance.
However, the following example illustrates that even when the bridging property holds, there exists a market in which a constrained efficient stable matching admits a PSIC.
\begin{ex}\label{ex:bridging-nosufficient}
Let $I=\{i_1,i_2,i_3,i_4,i_5\}$ and $S=\{s_1,s_2,s_3,s_4\}$.
Assume that the preferences of students are given by
\begin{gather}
{\succ_{i_1}}=(s_2\ s_1\ \vn\ s_3\ s_4),\quad
{\succ_{i_2}}=(s_3\ s_1\ \vn\ s_2\ s_4),\quad
{\succ_{i_3}}=(s_1\ s_2\ \vn\ s_3\ s_4),\\[1ex]
{\succ_{i_4}}=(s_1\ s_3\ \vn\ s_2\ s_4),\quad
{\succ_{i_5}}=(s_1\ s_4\ \vn\ s_2\ s_3).
\end{gather}
The choice correspondences for $s_2$, $s_3$, and $s_4$ are given as 
\begin{align}
\C_{s_2}(X)=\C_{s_3}(X)=\C_{s_4}(X)
=\argmax\{|Y|\mid Y\subseteq X,\ |Y|\le 1\}\quad(\forall X\in 2^I).
\end{align}
Note that these can be represented as admissions by a committee.
Suppose that the set of referees for $s_1$ is $H=\{h_1,h_2,h_3\}$, where each referee has a strict preference order given by
\begin{align}
{\succ_{h_1}}=(i_1\ i_2\ i_3\ i_5\ i_4),\quad
{\succ_{h_2}}=(i_1\ i_3\ i_2\ i_5\ i_4),\quad
{\succ_{h_3}}=(i_2\ i_4\ i_1\ i_5\ i_3).
\end{align}
Let $q=2$ and consider the restricted setting where the same referee is selected, i.e., $\Pi'=\{(h_1,h_1),(h_2,h_2),(h_3,h_3)\}$.
Then, the choice correspondence for $s_1$ is defined by $\C_{s_1}\equiv\C^{H,\Pi'}$. 
We can verify that $\C_{s_1}$ satisfies bridging by enumerating all the possible combinations of $X,Y,A,B\in 2^I$ and $i\in A\setminus B$ such that $Y \subseteq X$, $|Y| \geq q$, $A \in \C_{s_1}(X)$, $B \in \C_{s_1}(Y)$, and $(Y \cap A) \subseteq B$ (see \Cref{tab:exist-j}).

In this market, the matching $\mu=\{(i_1,s_1),(i_2,s_1),(i_3,s_2),(i_4,s_3),(i_5,s_4)\}$ is a constrained efficient stable matching.
For this matching $\mu$, there is a unique PSIC $(i_1,i_3,i_2,i_4)$ (see \Cref{fig:efficientPSIC}).

\begin{figure}[thbp]
\centering
\begin{tikzpicture}[xscale=2,yscale=1,font=\scriptsize,line width=3pt]
\draw[opacity=0] (-.5,0.5) grid[step=0.5] (2,5.5);
\node[dot=2mm,label=left:$(\textcolor{myblue}{s_2}\ \textcolor{myred}{s_1}\ \vn\ s_3\ s_4)\quad i_1$] (i1) at (0,5) {};
\node[dot=2mm,label=left:$(\textcolor{myblue}{s_3}\ \textcolor{myred}{s_1}\ \vn\ s_2\ s_4)\quad i_2$] (i2) at (0,4) {};
\node[dot=2mm,label=left:$(\textcolor{myblue}{s_1}\ \textcolor{myred}{s_2}\ \vn\ s_3\ s_4)\quad i_3$] (i3) at (0,3) {};
\node[dot=2mm,label=left:$(\textcolor{myblue}{s_1}\ \textcolor{myred}{s_3}\ \vn\ s_2\ s_4)\quad i_4$] (i4) at (0,2) {};
\node[dot=2mm,label=left:$(\textcolor{myblue}{s_1}\ \textcolor{myred}{s_4}\ \vn\ s_3\ s_4)\quad i_5$] (i5) at (0,1) {};
\node[dot=2mm,label=right:{$s_1$}] (s1) at (1,4.5) {};
\node[dot=2mm,label=right:{$s_2$}] (s2) at (1,3) {};
\node[dot=2mm,label=right:{$s_3$}] (s3) at (1,2) {};
\node[dot=2mm,label=right:{$s_4$}] (s4) at (1,1) {};
\node[text width=2cm] at (1.8,4.5) {\textcolor{myred}{$\{i_1,i_2\}$}\\[1mm]$\{i_1,i_3\}$\\[1mm]$\{i_2,i_4\}$};

\foreach \x/\y in {i1/s2,i2/s3,i3/s1,i4/s1,i5/s1} {
    \draw[bgblue,dotted] (\x) -- (\y);
}
\foreach \x/\y in {i1/s1,i2/s1,i3/s2,i4/s3,i5/s4} {
    \draw[myred] (\x) -- (\y);
}

\draw[green,line width=1.2pt,rounded corners=5pt] (i3.center)--($(s1)$)--(i2.center)--(s3.center)--(i4.center)--($(s1)+(2pt,1pt)$)--(i1.center)--(s2.center)--cycle;
\draw[mygreen,opacity=.5,line width=.8pt,rounded corners=5pt] (i3.center)--($(s1)$)--(i2.center)--(s3.center)--(i4.center)--($(s1)+(2pt,1pt)$)--(i1.center)--(s2.center)--cycle;
\draw[green!20!black,line width=0.2pt,rounded corners=5pt] (i3.center)--($(s1)$)--(i2.center)--(s3.center)--(i4.center)--($(s1)+(2pt,1pt)$)--(i1.center)--(s2.center)--cycle;

\end{tikzpicture}
\caption{Stable matching $\mu$ (red) and the unique PSIC (green)}\label{fig:efficientPSIC}

\end{figure}
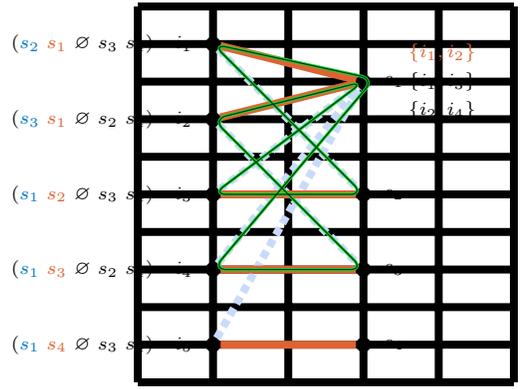
\end{ex}

\begin{table}[p]
\centering
\scalebox{.55}{
\begin{tabular}{cccccc}
\toprule
$X$ & $Y$ & $A$ & $B$ & $i$ & $j$\\\midrule
$\{i_1,i_2,i_3,i_4\}$ & $\{i_1,i_2,i_3\}$ & $\{i_2,i_4\}$ & $\{i_1,i_2\}$ & $i_4$ & $i_1$\\
$\{i_1,i_2,i_3,i_4\}$ & $\{i_1,i_2,i_4\}$ & $\{i_1,i_3\}$ & $\{i_1,i_2\}$ & $i_3$ & $i_2$\\
$\{i_1,i_2,i_3,i_4\}$ & $\{i_1,i_3,i_4\}$ & $\{i_2,i_4\}$ & $\{i_1,i_4\}$ & $i_2$ & $i_1$\\
$\{i_1,i_2,i_3,i_4\}$ & $\{i_1,i_3,i_4\}$ & $\{i_1,i_2\}$ & $\{i_1,i_4\}$ & $i_2$ & $i_4$\\
$\{i_1,i_2,i_3,i_4\}$ & $\{i_1,i_3,i_4\}$ & $\{i_1,i_2\}$ & $\{i_1,i_3\}$ & $i_2$ & $i_3$\\
$\{i_1,i_2,i_3,i_4\}$ & $\{i_2,i_3,i_4\}$ & $\{i_1,i_3\}$ & $\{i_2,i_3\}$ & $i_1$ & $i_2$\\
$\{i_1,i_2,i_3,i_4\}$ & $\{i_2,i_3,i_4\}$ & $\{i_1,i_2\}$ & $\{i_2,i_4\}$ & $i_1$ & $i_4$\\
$\{i_1,i_2,i_3,i_4\}$ & $\{i_2,i_3,i_4\}$ & $\{i_1,i_2\}$ & $\{i_2,i_3\}$ & $i_1$ & $i_3$\\
$\{i_1,i_2,i_3,i_5\}$ & $\{i_1,i_2,i_5\}$ & $\{i_1,i_3\}$ & $\{i_1,i_2\}$ & $i_3$ & $i_2$\\
$\{i_1,i_2,i_3,i_5\}$ & $\{i_1,i_3,i_5\}$ & $\{i_1,i_2\}$ & $\{i_1,i_5\}$ & $i_2$ & $i_5$\\
$\{i_1,i_2,i_3,i_5\}$ & $\{i_1,i_3,i_5\}$ & $\{i_1,i_2\}$ & $\{i_1,i_3\}$ & $i_2$ & $i_3$\\
$\{i_1,i_2,i_3,i_5\}$ & $\{i_2,i_3,i_5\}$ & $\{i_1,i_3\}$ & $\{i_2,i_3\}$ & $i_1$ & $i_2$\\
$\{i_1,i_2,i_3,i_5\}$ & $\{i_2,i_3,i_5\}$ & $\{i_1,i_2\}$ & $\{i_2,i_5\}$ & $i_1$ & $i_5$\\
$\{i_1,i_2,i_3,i_5\}$ & $\{i_2,i_3,i_5\}$ & $\{i_1,i_2\}$ & $\{i_2,i_3\}$ & $i_1$ & $i_3$\\
$\{i_1,i_2,i_4,i_5\}$ & $\{i_1,i_2,i_5\}$ & $\{i_2,i_4\}$ & $\{i_1,i_2\}$ & $i_4$ & $i_1$\\
$\{i_1,i_2,i_4,i_5\}$ & $\{i_1,i_4,i_5\}$ & $\{i_2,i_4\}$ & $\{i_1,i_4\}$ & $i_2$ & $i_1$\\
$\{i_1,i_2,i_4,i_5\}$ & $\{i_1,i_4,i_5\}$ & $\{i_1,i_2\}$ & $\{i_1,i_4\}$ & $i_2$ & $i_4$\\
$\{i_1,i_2,i_4,i_5\}$ & $\{i_1,i_4,i_5\}$ & $\{i_1,i_2\}$ & $\{i_1,i_5\}$ & $i_2$ & $i_5$\\
$\{i_1,i_2,i_4,i_5\}$ & $\{i_2,i_4,i_5\}$ & $\{i_1,i_2\}$ & $\{i_2,i_4\}$ & $i_1$ & $i_4$\\
$\{i_1,i_2,i_4,i_5\}$ & $\{i_2,i_4,i_5\}$ & $\{i_1,i_2\}$ & $\{i_2,i_5\}$ & $i_1$ & $i_5$\\
$\{i_1,i_3,i_4,i_5\}$ & $\{i_1,i_3,i_5\}$ & $\{i_1,i_4\}$ & $\{i_1,i_5\}$ & $i_4$ & $i_5$\\
$\{i_1,i_3,i_4,i_5\}$ & $\{i_1,i_3,i_5\}$ & $\{i_1,i_4\}$ & $\{i_1,i_3\}$ & $i_4$ & $i_3$\\
$\{i_1,i_3,i_4,i_5\}$ & $\{i_1,i_4,i_5\}$ & $\{i_1,i_3\}$ & $\{i_1,i_4\}$ & $i_3$ & $i_4$\\
$\{i_1,i_3,i_4,i_5\}$ & $\{i_1,i_4,i_5\}$ & $\{i_1,i_3\}$ & $\{i_1,i_5\}$ & $i_3$ & $i_5$\\
$\{i_1,i_3,i_4,i_5\}$ & $\{i_3,i_4,i_5\}$ & $\{i_1,i_4\}$ & $\{i_4,i_5\}$ & $i_1$ & $i_5$\\
$\{i_1,i_3,i_4,i_5\}$ & $\{i_3,i_4,i_5\}$ & $\{i_1,i_3\}$ & $\{i_3,i_5\}$ & $i_1$ & $i_5$\\
$\{i_2,i_3,i_4,i_5\}$ & $\{i_2,i_3,i_5\}$ & $\{i_2,i_4\}$ & $\{i_2,i_5\}$ & $i_4$ & $i_5$\\
$\{i_2,i_3,i_4,i_5\}$ & $\{i_2,i_3,i_5\}$ & $\{i_2,i_4\}$ & $\{i_2,i_3\}$ & $i_4$ & $i_3$\\
$\{i_2,i_3,i_4,i_5\}$ & $\{i_2,i_4,i_5\}$ & $\{i_2,i_3\}$ & $\{i_2,i_4\}$ & $i_3$ & $i_4$\\
$\{i_2,i_3,i_4,i_5\}$ & $\{i_2,i_4,i_5\}$ & $\{i_2,i_3\}$ & $\{i_2,i_5\}$ & $i_3$ & $i_5$\\
$\{i_2,i_3,i_4,i_5\}$ & $\{i_3,i_4,i_5\}$ & $\{i_2,i_4\}$ & $\{i_4,i_5\}$ & $i_2$ & $i_5$\\
$\{i_2,i_3,i_4,i_5\}$ & $\{i_3,i_4,i_5\}$ & $\{i_2,i_3\}$ & $\{i_3,i_5\}$ & $i_2$ & $i_5$\\
$\{i_1,i_2,i_3,i_4,i_5\}$ & $\{i_1,i_2,i_3\}$ & $\{i_2,i_4\}$ & $\{i_1,i_2\}$ & $i_4$ & $i_1$\\
$\{i_1,i_2,i_3,i_4,i_5\}$ & $\{i_1,i_2,i_4\}$ & $\{i_1,i_3\}$ & $\{i_1,i_2\}$ & $i_3$ & $i_2$\\
$\{i_1,i_2,i_3,i_4,i_5\}$ & $\{i_1,i_2,i_5\}$ & $\{i_2,i_4\}$ & $\{i_1,i_2\}$ & $i_4$ & $i_1$\\
$\{i_1,i_2,i_3,i_4,i_5\}$ & $\{i_1,i_2,i_5\}$ & $\{i_1,i_3\}$ & $\{i_1,i_2\}$ & $i_3$ & $i_2$\\
$\{i_1,i_2,i_3,i_4,i_5\}$ & $\{i_1,i_3,i_4\}$ & $\{i_2,i_4\}$ & $\{i_1,i_4\}$ & $i_2$ & $i_1$\\
$\{i_1,i_2,i_3,i_4,i_5\}$ & $\{i_1,i_3,i_4\}$ & $\{i_1,i_2\}$ & $\{i_1,i_4\}$ & $i_2$ & $i_4$\\
$\{i_1,i_2,i_3,i_4,i_5\}$ & $\{i_1,i_3,i_4\}$ & $\{i_1,i_2\}$ & $\{i_1,i_3\}$ & $i_2$ & $i_3$\\
$\{i_1,i_2,i_3,i_4,i_5\}$ & $\{i_1,i_3,i_5\}$ & $\{i_2,i_4\}$ & $\{i_1,i_5\}$ & $i_2$ & $i_1$\\
$\{i_1,i_2,i_3,i_4,i_5\}$ & $\{i_1,i_3,i_5\}$ & $\{i_2,i_4\}$ & $\{i_1,i_5\}$ & $i_4$ & $i_1$\\
$\{i_1,i_2,i_3,i_4,i_5\}$ & $\{i_1,i_3,i_5\}$ & $\{i_2,i_4\}$ & $\{i_1,i_3\}$ & $i_2$ & $i_1$\\
$\{i_1,i_2,i_3,i_4,i_5\}$ & $\{i_1,i_3,i_5\}$ & $\{i_2,i_4\}$ & $\{i_1,i_3\}$ & $i_4$ & $i_1$\\
$\{i_1,i_2,i_3,i_4,i_5\}$ & $\{i_1,i_3,i_5\}$ & $\{i_1,i_2\}$ & $\{i_1,i_5\}$ & $i_2$ & $i_4$\\
$\{i_1,i_2,i_3,i_4,i_5\}$ & $\{i_1,i_3,i_5\}$ & $\{i_1,i_2\}$ & $\{i_1,i_3\}$ & $i_2$ & $i_3$\\
$\{i_1,i_2,i_3,i_4,i_5\}$ & $\{i_1,i_4,i_5\}$ & $\{i_2,i_4\}$ & $\{i_1,i_4\}$ & $i_2$ & $i_1$\\
$\{i_1,i_2,i_3,i_4,i_5\}$ & $\{i_1,i_4,i_5\}$ & $\{i_1,i_3\}$ & $\{i_1,i_4\}$ & $i_3$ & $i_2$\\
$\{i_1,i_2,i_3,i_4,i_5\}$ & $\{i_1,i_4,i_5\}$ & $\{i_1,i_3\}$ & $\{i_1,i_5\}$ & $i_3$ & $i_2$\\
$\{i_1,i_2,i_3,i_4,i_5\}$ & $\{i_1,i_4,i_5\}$ & $\{i_1,i_2\}$ & $\{i_1,i_4\}$ & $i_2$ & $i_3$\\
$\{i_1,i_2,i_3,i_4,i_5\}$ & $\{i_1,i_4,i_5\}$ & $\{i_1,i_2\}$ & $\{i_1,i_5\}$ & $i_2$ & $i_3$\\
$\{i_1,i_2,i_3,i_4,i_5\}$ & $\{i_2,i_3,i_4\}$ & $\{i_1,i_3\}$ & $\{i_2,i_3\}$ & $i_1$ & $i_2$\\
$\{i_1,i_2,i_3,i_4,i_5\}$ & $\{i_2,i_3,i_4\}$ & $\{i_1,i_2\}$ & $\{i_2,i_4\}$ & $i_1$ & $i_4$\\
$\{i_1,i_2,i_3,i_4,i_5\}$ & $\{i_2,i_3,i_4\}$ & $\{i_1,i_2\}$ & $\{i_2,i_3\}$ & $i_1$ & $i_3$\\
$\{i_1,i_2,i_3,i_4,i_5\}$ & $\{i_2,i_3,i_5\}$ & $\{i_2,i_4\}$ & $\{i_2,i_5\}$ & $i_4$ & $i_1$\\
$\{i_1,i_2,i_3,i_4,i_5\}$ & $\{i_2,i_3,i_5\}$ & $\{i_2,i_4\}$ & $\{i_2,i_3\}$ & $i_4$ & $i_1$\\
$\{i_1,i_2,i_3,i_4,i_5\}$ & $\{i_2,i_3,i_5\}$ & $\{i_1,i_3\}$ & $\{i_2,i_3\}$ & $i_1$ & $i_2$\\
$\{i_1,i_2,i_3,i_4,i_5\}$ & $\{i_2,i_3,i_5\}$ & $\{i_1,i_2\}$ & $\{i_2,i_5\}$ & $i_1$ & $i_4$\\
$\{i_1,i_2,i_3,i_4,i_5\}$ & $\{i_2,i_3,i_5\}$ & $\{i_1,i_2\}$ & $\{i_2,i_3\}$ & $i_1$ & $i_3$\\
$\{i_1,i_2,i_3,i_4,i_5\}$ & $\{i_2,i_4,i_5\}$ & $\{i_1,i_3\}$ & $\{i_2,i_4\}$ & $i_1$ & $i_2$\\
$\{i_1,i_2,i_3,i_4,i_5\}$ & $\{i_2,i_4,i_5\}$ & $\{i_1,i_3\}$ & $\{i_2,i_4\}$ & $i_3$ & $i_2$\\
$\{i_1,i_2,i_3,i_4,i_5\}$ & $\{i_2,i_4,i_5\}$ & $\{i_1,i_3\}$ & $\{i_2,i_5\}$ & $i_1$ & $i_2$\\
$\{i_1,i_2,i_3,i_4,i_5\}$ & $\{i_2,i_4,i_5\}$ & $\{i_1,i_3\}$ & $\{i_2,i_5\}$ & $i_3$ & $i_2$\\
$\{i_1,i_2,i_3,i_4,i_5\}$ & $\{i_2,i_4,i_5\}$ & $\{i_1,i_2\}$ & $\{i_2,i_4\}$ & $i_1$ & $i_3$\\
$\{i_1,i_2,i_3,i_4,i_5\}$ & $\{i_2,i_4,i_5\}$ & $\{i_1,i_2\}$ & $\{i_2,i_5\}$ & $i_1$ & $i_3$\\
$\{i_1,i_2,i_3,i_4,i_5\}$ & $\{i_3,i_4,i_5\}$ & $\{i_2,i_4\}$ & $\{i_4,i_5\}$ & $i_2$ & $i_1$\\
$\{i_1,i_2,i_3,i_4,i_5\}$ & $\{i_3,i_4,i_5\}$ & $\{i_1,i_3\}$ & $\{i_3,i_5\}$ & $i_1$ & $i_2$\\
$\{i_1,i_2,i_3,i_4,i_5\}$ & $\{i_3,i_4,i_5\}$ & $\{i_1,i_2\}$ & $\{i_4,i_5\}$ & $i_1$ & $i_4$\\
$\{i_1,i_2,i_3,i_4,i_5\}$ & $\{i_3,i_4,i_5\}$ & $\{i_1,i_2\}$ & $\{i_4,i_5\}$ & $i_2$ & $i_4$\\
$\{i_1,i_2,i_3,i_4,i_5\}$ & $\{i_3,i_4,i_5\}$ & $\{i_1,i_2\}$ & $\{i_3,i_5\}$ & $i_1$ & $i_3$\\
$\{i_1,i_2,i_3,i_4,i_5\}$ & $\{i_3,i_4,i_5\}$ & $\{i_1,i_2\}$ & $\{i_3,i_5\}$ & $i_2$ & $i_3$\\
$\{i_1,i_2,i_3,i_4,i_5\}$ & $\{i_1,i_2,i_3,i_5\}$ & $\{i_2,i_4\}$ & $\{i_1,i_2\}$ & $i_4$ & $i_1$\\
$\{i_1,i_2,i_3,i_4,i_5\}$ & $\{i_1,i_2,i_4,i_5\}$ & $\{i_1,i_3\}$ & $\{i_1,i_2\}$ & $i_3$ & $i_2$\\
$\{i_1,i_2,i_3,i_4,i_5\}$ & $\{i_1,i_3,i_4,i_5\}$ & $\{i_2,i_4\}$ & $\{i_1,i_4\}$ & $i_2$ & $i_1$\\
$\{i_1,i_2,i_3,i_4,i_5\}$ & $\{i_1,i_3,i_4,i_5\}$ & $\{i_1,i_2\}$ & $\{i_1,i_4\}$ & $i_2$ & $i_4$\\
$\{i_1,i_2,i_3,i_4,i_5\}$ & $\{i_1,i_3,i_4,i_5\}$ & $\{i_1,i_2\}$ & $\{i_1,i_3\}$ & $i_2$ & $i_3$\\
$\{i_1,i_2,i_3,i_4,i_5\}$ & $\{i_2,i_3,i_4,i_5\}$ & $\{i_1,i_3\}$ & $\{i_2,i_3\}$ & $i_1$ & $i_2$\\
$\{i_1,i_2,i_3,i_4,i_5\}$ & $\{i_2,i_3,i_4,i_5\}$ & $\{i_1,i_2\}$ & $\{i_2,i_4\}$ & $i_1$ & $i_4$\\
$\{i_1,i_2,i_3,i_4,i_5\}$ & $\{i_2,i_3,i_4,i_5\}$ & $\{i_1,i_2\}$ & $\{i_2,i_3\}$ & $i_1$ & $i_3$\\
\bottomrule
\end{tabular}}
\caption{Existence of $j\in (B\setminus A)\cup((X\setminus Y)\setminus A)$ for every possible combination of $(X,Y,A,B,i)$ with $|Y|>2$ and $X\supsetneq Y$}\label{tab:exist-j}
\end{table}

\end{document}